    \newtcbox{\feedback}{nobeforeafter,colframe=black,colback=white,boxrule=0.5pt,arc=2pt,
      boxsep=0pt,left=2pt,right=2pt,top=2pt,bottom=2pt,tcbox raise base}
     \newtheoremstyle{theorem} %
    {3pt}                    %
    {3pt}                    %
    {\slshape}                   %
    {}                           %
    {\bfseries}                   %
    {.}                          %
    {.5em}                       %
    {}  %
    \theoremstyle{theorem}
    \newtheorem{prop}{Proposition}[section]
    \newtheorem{lem}{Lemma}[section]
    \theoremstyle{definition}
    \newtheorem{rem}{Remark}
    \newtheorem{example}{Example}
    \crefname{prop}{Proposition}{Propositions}
\crefname{assumption}{assumption}{assumptions}
\crefname{asm}{assumption}{assumptions}
\crefname{prop}{proposition}{propositions}
\crefname{cor}{Corollary}{Corollaries}
\crefname{lem}{Lemma}{Lemmas}
\newcolumntype{L}[1]{>{\raggedright\let\newline\\\arraybackslash}m{#1}}
\newcolumntype{C}[1]{>{\centering\let\newline\\\arraybackslash\hspace{0pt}}m{#1}}
\newcolumntype{R}[1]{>{\raggedleft\let\newline\\\arraybackslash\hspace{0pt}}m{#1}}
\newlength\ubwidth
\newcommand\numberthis{\addtocounter{equation}{1}\tag{\theequation}}
\let\emptyset\varnothing
\newcommand{\indep}{\perp \!\!\! \perp}
	\newcommand{\Var}{\mathop{}\!\textnormal{Var}}
	\DeclareMathOperator*{\argmax}{arg\,max}
\DeclareMathOperator{\dist}{dist} %
\DeclareMathOperator{\Log}{Log} %
\title{Testing Monotonicity in a Finite Population} %
\author{Jiafeng Chen \and Jonathan Roth \and Jann Spiess}
\date{\today. (First public version: December 31, 2025.) Chen: Department of Economics, Stanford University, jiafeng@stanford.edu; Roth: Department of Economics, Brown University, jonathan\_roth@brown.edu; Spiess: Graduate School of Business, Stanford University, jspiess@stanford.edu. We thank Isaiah Andrews, Neil Christy, Dean Eckles, Guido Imbens, Toru Kitagawa, Amanda Kowalski, Lihua Lei, Matt Masten, Ashesh Rambachan, Jesse Shapiro, and Kaspar Wüthrich for insightful discussions. Roth gratefully acknowledges funding from the NIH under grant NIGMS 1R35GM155224 and from the Alfred P.\ Sloan Foundation.}
\begin{document}

\maketitle

\begin{abstract}
We consider the extent to which we can learn from a completely randomized experiment
whether all individuals have treatment effects that are weakly of the same sign, a condition we
call \emph{monotonicity}. From a classical sampling perspective, it is well-known that monotonicity is not falsifiable. We show that from the design-based perspective—in which the potential outcomes are fixed and only treatment assignment is stochastic—that the distribution of treatment effects in the
finite population (and hence whether monotonicity holds) is formally \emph{identified}. Nevertheless, we show that the scope for learning about violations of monotonicity is severely limited. Frequentist tests of monotonicity have generically poor power, and there exist (non-degenerate) Bayesian priors that never update about whether monotonicity holds. Estimators of the magnitude of the violation of monotonicity are likewise shown to have poor minimax mean-squared error.

\end{abstract}

\newpage 

\section{Introduction}

Let $D$ be a randomly assigned binary treatment and $Y$ a binary outcome. Researchers are
often interested in evaluating the \emph{monotonicity} assumption that everyone has a treatment effect weakly of the same sign.%
\footnote{Also known as monotone treatment response \citep{manski1997monotone}.}
For
example, if $D$ is a medical intervention, we may be interested in whether some patients benefit from the treatment while others are harmed, which would indicate that outcomes could potentially be improved by better targeting.
Likewise, if $D$ is an
instrumental variable and $Y$ a treatment with imperfect compliance (more frequently
denoted $Z$ and $D$, respectively), then the monotonicity assumption is needed for
two-stage least squares to have a local-average-treatment-effect interpretation
\citep{AngristImbens(94)}.

There are two distinct approaches to statistical uncertainty in causal inference settings
that lead to different formal definitions of monotonicity. To discuss these, let a \emph{type} denote one of four possible values for potential outcomes $(y(1), y(0))$. Borrowing
from the instrumental-variables literature, we use $at$ to denote an ``always-taker'' with
$y(0)=y(1)=1$, and likewise use $nt$ for never-takers ($y(0)=y(1)=0$), $d$ for defiers ($y(1)=0,y(0)=1$), and $c$ for compliers
($y(1)=1,y(0)=0$). 
The classical perspective views
the $n$ observed units as sampled from an infinite superpopulation. Monotonicity then
imposes that there are no individuals in the superpopulation with opposite-signed
treatment effects. Formally, we assume that the potential outcomes are sampled from a
distribution $g^*_p$, $(y_i(1),y_i(0)) \sim g^*_p$, where $p =
(p_{at},p_{nt},p_{d},p_{c})$ denotes the superpopulation shares of each ``type'', where
each type corresponds to one of four possible values for $(y(1), y(0))$). 
The classical definition of
monotonicity is then $\min(p_d,p_c) =0$.

By contrast, the \emph{design-based} perspective views the $n$ units and their potential
outcomes as {fixed} or conditioned upon, with statistical uncertainty arising only from
the stochastic assignment of treatment. The finite population is characterized by $\theta = (\theta_{at}, \theta_{nt}, \theta_{d}, \theta_{c})$, where $\theta_t$ is the
\emph{count} of how many of the $n$ units are of type $t$. The design-based monotonicity
assumption is then that $\min(\theta_d,\theta_c) = 0$.%
\footnote{This monotonicity assumption is distinct from the one-sided hypothesis that there are no defiers (or equivalently, no negative treatment effects), which has the (generally testable) implication that the average treatment effect is non-negative. \cite{caughey2023randomisation}  employ randomization inference to construct valid test of the null of no negative treatment effects in a design-based model that also captures non-binary outcomes.}
In sum, the classical
monotonicity assumption asks whether there are any units with opposite-signed treatment
effects in the superpopulation from which the sample is drawn, while the design-based
version asks whether there are any opposite-signed treatment effects \emph{among the $n$
units at hand}.

It is well-known that the observable data are always compatible with the null that monotonicity holds in the
superpopulation. Suppose that we have a completely randomized experiment in
which $n_T$ of the $n$ units are randomly assigned to treatment. The researcher then
observes $(Y_T,Y_U) = (\sum_i D_i Y_i, \sum_i (1-D_i) Y_i)$, summarizing how many units in
each treatment group have $Y_i=1$. From the superpopulation perspective,
we have that $(Y_T,Y_U) \sim g_p$ (where $g_p$ is induced by $g_p^*$ and complete random assignment of the treatment). However, it is straightforward to show that
$p$ is not point-identified \citep[see, e.g.,][]{heckman_making_1997, fan_sharp_2010}. In particular, for any $p_1$ that violates monotonicity,
there exists a $p_0$ satisfying monotonicity such that $g_{p_1} = g_{p_0}$. Hence, from
the superpopulation perspective, the monotonicity assumption is not falsifiable.%
\footnote{We note that monotonicity may be testable when combined with other assumptions. For example, there are tests of the joint assumptions of instrument monotonicity, exclusion, and independence \citep[e.g.][]{kitagawa_test_2015}.
In addition, monotonicity has testable implications for the marginal distributions of treated and control observations when the outcome has more than two levels (\citealp{angrist1995two}; see also Appendix 7.2 of \citealp{caughey2023randomisation} for an application in a design-based framework).
}

In contrast, we show that from the design-based perspective, the type counts $\theta$ are in fact identified according to the usual technical definition of identification. Let $f_\theta$ denote the distribution of the observable data $(Y_T,Y_U)$  over repeated assignments of treatment for the fixed finite population. We prove that the mapping $\theta \mapsto f_\theta$ is one-to-one, so $\theta$ is formally identified. At first blush, the typical definition of identification appears to suggest that the data \emph{are} informative about whether the design-based version of monotonicity holds. 

We argue, however, that the usual notion of identification is somewhat unnatural in the
design-based setting: identification asks what can be learned with repeated
realizations of the outcomes under different treatment assignments for the \emph{same}
units. In practice, however, we only observe one realized treatment assignment for each of
these units. It is therefore not clear whether identification in the usual sense implies
that we can meaningfully {learn} about whether monotonicity holds from the data actually observed. To evaluate the extent to which one can feasibly learn about
monotonicity given a single realization of the data in the design-based setting,
we ask two questions: First, to what extent can one construct frequentist tests of the
null of monotonicity? Second, to what extent do Bayesians update about the probability
that monotonicity holds?

From the frequentist perspective, we show that there exist tests of the design-based
monotonicity assumption with non-trivial power against some alternatives, but we
generally expect the power of these tests to be poor. In particular, we show that any test of monotonicity will have trivial or near-trivial power against some alternative: when $n$ is even, any level-$\alpha$ test has power no larger than $\alpha$ against some alternative; when $n$ is odd, worst-case power is bounded above by $\alpha (1 + O(2^{-n}))$. Moreover, we show
that any test that has power against some alternative $\theta_1$ will have poor power at
alternatives ``close'' to $\theta_1$. We formalize this by deriving upper bounds on weighted average power (WAP) of tests in a neighborhood of $\theta_1$, where the weights
are proportional to the frequencies expected under sampling types from a superpopulation (and
thus concentrate around $\theta_1$, in terms of relative frequencies, when $n$ is
reasonably large). We show that regardless of sample size, WAP of a size-$\alpha$
test
is no greater than $2.51 \alpha$; thus, for a 5\% test, WAP is never more than
12.6\%. The
bound can be even tighter given a particular $n$ and a lower bound on the number of
defiers/compliers. For example, if $n=100$ and compliers and defiers are each at least
5\% of the finite population, then WAP is never greater than 5.06\%. We also derive upper bounds on power for unbiased tests (i.e. those with power at least $\alpha$ for all alternatives). This upper bound converges to $\alpha$ as $n$ grows large, indicating that with large $n$ all unbiased tests have nearly trivial power. 

Taken together, our results show that frequentist tests provide little information about monotonicity.
In particular, these tests can only be useful if we are interested in testing one particular alternative and not
nearby ones. For example, with $n=30$ and $n_T=15$, the highest possible power of a 5\% test against any alternative is 31\%, which is achieved by a test targeting the alternative with 18 defiers and 12 compliers. However, this test never rejects when there are 17 defiers and 13 compliers. Our theoretical results show the low power of this test against neighboring alternatives is a generic feature of tests against monotonicity. We suspect that researchers are unlikely to be interested in such ultra-specific alternatives, and therefore these tests will
have relatively little use in practice. 

From the Bayesian perspective, we show that there do exist Bayesians who update about
the probability that the design-based monotonicity assumption holds, but there also exist
Bayesians who never update. The existence of a Bayesian who updates follows immediately
from the fact that $\theta$ is identified. Consider a Bayesian with a two-point prior on
$\theta_0$ satisfying monotonicity and $\theta_1$ violating monotonicity. Since
$f_{\theta_0} \neq f_{\theta_1}$, the Bayesian updates about the probability that the null
is true. However, we show that there exists a non-trivial prior $\pi$ which never
updates on the probability that the null hypothesis holds, i.e. $\pi(\theta \in \Theta_0
\mid (Y_U,Y_T) ) = \pi(\theta \in \Theta_0) \in (0,1)$ ($\pi$-a.s.), where $\Theta_0$ denotes
the set of parameter values satisfying monotonicity.%
\footnote{Specifically, we show such priors exist when $n$ is even. When $n$ is odd, we show that there exist priors that \emph{minimally} update, in the sense that the expected absolute difference between the prior and posterior probabilities for $\theta \in \Theta_0$ is $O(2^{-n})$; see \Cref{prop:bayesupdateodd}.}
Our results thus suggest that the data may be
informative to \emph{some} audience members, who are worried about particular violations
of monotonicity, but there will not be consensus: some audience member is always unmoved by the data on monotonicity. The existence of a non-updating Bayesian also implies that any classifier of whether $\theta \in \Theta_0$ does no better than random guessing at some parameter value. 

Although we primarily focus on learning \emph{whether} monotonicity holds, we show that it is likewise difficult to estimate the \emph{extent} to which it is violated---in turn implying it is difficult to estimate the full vector of types $\theta$. Specifically, we show that the minimax mean-squared error (MSE) for estimating the monotonicity violation $v = \frac{1}{n} \min(\theta_c, \theta_d)$ is close to that of the trivial guess $\hat{v} = \frac{1}{4}$. We also show that the minimax MSE for estimating the full vector $\hat\theta$ is close to that of guessing the midpoint of the sample Fr\'echet--Hoeffding bounds. Interestingly, we find that the maximum-likelihood estimator (MLE) converges in probability to one of the Fr\'echet--Hoeffding \emph{end}points and therefore has asymptotic worst-case MSE that is larger than minimax. Our results further imply that the MLE converges in probability to a limit in which one of the estimated type shares is zero, even when all four types are present in the population. Thus, despite the likelihood not being completely flat in $\theta$, the extent to which we can estimate $\theta$ beyond its marginals is severely limited.

We now briefly provide some intuition for why the finite-population type-counts $\theta$ are formally identified and yet there is still extremely limited scope for learning about violations of monotonicity from the data. Our results establish that when $n$ is large, the distribution of the observed data $(Y_T,Y_U)$ is close to a discrete normal distribution (in the sense of  vanishing Le Cam distance between the two statistical experiments), with only the correlation of the discrete normal distribution depending on the extent to which monotonicity is violated. Thus, while the distribution of the data depends on the full vector $\theta$, and hence it is identified, learning about monotonicity violations is asymptotically equivalent to learning about the correlation of a bivariate discrete normal distribution \emph{from a single draw}. Of course, this asymptotic argument only establishes that it is difficult to learn about monotonicity violations as $n \to \infty$. Many of our results also establish that there is limited scope for learning even with moderate $n$, which cannot be shown using a large-$n$ normal limit alone. Instead, we establish these results by carefully constructing pairs of priors over parameters $\theta$ in the null and alternative space that induce distributions over $Y$ that are similar to i.i.d.\ samples from observationally equivalent super-populations. This allows us to establish that the information about monotonicity in finite samples is limited and vanishes quickly.

\smallskip
\paragraph{\textbf{Related literature.}} Previous work has derived the likelihood of the
observed data under the design-based data-generating process we consider, in which there
is a completely randomized experiment with binary outcomes, and has shown that this can be
used for likelihood-based or Bayesian inference \citep{copas_randomization_1973,
ding_model-free_2019, christy_counting_2025}. \citet{copas_randomization_1973} and
\citet{christy_counting_2025} explicitly note that the likelihood can depend on the number
of defiers $\theta_d$. We add to this literature an explicit analysis of identification,
as well as results quantifying the extent to which an analyst can test for, update about, or estimate monotonicity violations. These results are also relevant for proposals and discussions on decision problems where utility depends on counterfactual outcomes \citep{ben2024policy, strzalecki2026hippocratic,koch2026axiomatic}.

More broadly, an extensive previous literature has considered the different implications
of sampling-based versus model-based approaches to uncertainty for \emph{inference}
\citep[e.g.][]{li_general_2017,abadie_sampling-based_2020}. This paper highlights that
these different approaches can also have different implications for \emph{identification}.
We argue, however, that the classical notion of identification may be misleading in the
design-based setting as a criterion for whether the data is informative about a parameter,
and instead propose to evaluate the informativeness of the data through the properties of
frequentist tests and Bayesian updating. Although we focus on testing monotonicity, these
observations may prove useful in other design-based settings as well. \citet{kline2025finite} also study a design-based setting (although they do not consider monotonicity testing) and likewise find the textbook definition of identification inadequate, although they opt for defining alternative notions of identification rather than quantifying the scope for frequentist testing or Bayesian updating.

\section{Setup}
\label{sec:setup}
Consider a finite population of $n$ individuals subjected to a completely randomized
experiment with $n_T$ treated units, where $0 < n_T < n$. That is:
\begin{enumerate}
    \item Each individual $i$ has
potential outcomes $(y_i(1), y_i(0)) \in \br{0,1}^2$.
\item We observe $(Y_i, D_i)_
{i=1}^n$, where $Y_i = y_i(D_i)$ and $D_i \in \br{0,1}$.
\item Fixing the finite
population, randomness solely arises from
the assignment of $(D_1,\ldots, D_n)$, where \begin{align*}
&P(D_1=d_1,\ldots, D_n=d_n, Y_1=y_1,\ldots, Y_n = y_n) \\&\quad= \binom{n}{n_T}^{-1} \one
\pr{\sum_
{i=1}^n d_i = n_T; y_i = y_i(d_i) \text{ for all $i$}}.
\end{align*}
\end{enumerate}
We refer to those with $y_i(1) > y_i(0)$ as compliers, $y_i(1) =  1 = y_i(0)$ as always-takers, $y_i(1) < y_i(0)$ as defiers, and $y_i(1) = 0 = y_i(0)$ as
 never-takers.
 We refer to these as the \emph{type} of a unit. We observe the number of treated units with $Y_i=1$ and the number of
untreated units with $Y_i=1$:%
\footnote{We note that if the researcher observes individual data, then any procedure $\delta$ that is \emph{anonymous}---i.e., that is invariant to permutations of the units, so that $\delta((Y_1,D_1),\ldots, (Y_n, D_n)) = \delta((Y_{\sigma(1)}, D_{\sigma(1)}),
\ldots, (Y_{\sigma(n)}, D_{\sigma(n)}))$ for all permutations $\sigma: [n] \to [n]$---is simply a function of the counts $(Y_T,Y_U)$. It thus suffices to restrict attention to $(Y_T,Y_U)$ for anonymous procedures.}
\[
Y := (Y_T, Y_U) = \pr{\sum_{i=1}^n D_i Y_i, \sum_{i=1}^n (1-D_i) Y_i}. 
\]
 
 The counts $Y$ depend on the potential outcomes only through
the
corresponding counts of types in the finite population, \[\theta \in \Theta :=   \br{(n_{at}, n_{nt}, n_d, n_c) \in (
\mathbb{N}\cup
\br{0})^4 : n_ {at} + n_{nt} + n_d + n_c
 = n}.
 \]
 In particular,
 \begin{align*}
 Y_T = (\text{\# treated always takers}) + (\text{\# treated compliers}) \\ 
 Y_U = (\text{\# untreated always takers}) + (\text{\# untreated defiers}).
 \end{align*}
 In a completely randomized experiment, given $\theta = (n_
 {at}, n_{nt}, n_d, n_c)$, the number of treated
 units of each type is distributed according to a multivariate hypergeometric
 distribution with parameters $\theta$ and $n_T$. Let $f_\theta(y_T, y_U)$ denote the induced
 probability mass function for $(Y_T, Y_U)$ under $\theta$. 

Sometimes, we think of the finite population as being drawn from an infinite
superpopulation. To that end, let $\mathcal P = \br{(p_{at}, p_{nt}, p_d, p_c) \in [0,1]^4
: p_{at} +  p_{nt} +  p_d +  p_c = 1}$ be the simplex. For an element $p \in \mathcal P$,
let the corresponding distribution of $(Y_T, Y_U)$ when unit types are drawn i.i.d. from
$\mathrm{Multinomial}(n, p)$ be \[
g_p(y_T, y_U) = \sum_{\theta = (n_{at}, n_{nt}, n_d, n_c) \in \Theta} f_\theta (y_T, y_U)
 \frac{n!}{n_{at}! n_{nt}! n_d ! n_c!} p_{at}^{n_{at}} p_{nt}^{n_{nt}} p_d^{n_d} p_c^{n_c}.
\]

Let $\Theta_0 = \br{(n_{at}, n_{nt}, n_d, n_c)  \in \Theta: \min(n_d, n_c) = 0}$ be the
set of type counts that satisfy monotonicity. Let
$\Theta_1 = \Theta \setminus \Theta_0$ be the complement. Similarly, let $\mathcal P_0 =
\br{p \in \mathcal P: \min(p_d, p_c) = 0}$ and $\mathcal P_1 = \mathcal P \setminus
\mathcal P_0$. 

\begin{rem}[Practical relevance of each null]
Whether we prefer to test the superpopulation null that $p \in \mathcal{P}_0$ or the
design-based null that $\theta_0 \in \Theta_0$ will depend on the application. If the $n$
units are patients in a drug trial drawn randomly from a much larger population of
patients with the same condition, then we are likely more interested in whether there are heterogeneous responses to the drug in the superpopulation of patients, and thus $p \in
\mathcal{P}_0$ is more relevant than $\theta \in \Theta_0$. On the other hand, if the $n$
units are the 50 states, it may be unnatural to imagine the states as sampled from an
infinite superpopulation, rendering conceptual issues for the null $p \in \mathcal{P}_0$.
By contrast, testing $\theta \in \Theta_0$ answers the natural question as to whether any
of the 50 states have opposite-signed treatment effects. See \citet{copas_randomization_1973},
\citet{reichardt_justifying_1999}, and \citet{rambachan_design-based_2025}, among others,
for additional discussion of the relevance of design-based vs.\ superpopulation-based
estimands in general.
For our setting specifically, \cite{gelman2025russian} argue that cases in which decision loss depends on counterfactual outcomes should be analyzed through a framework in which potential outcomes are stochastic.

\end{rem}

\section{Identification}

The typical textbook definition of identification states that a parameter is \mbox{(point-)}\allowbreak{}identified if two
distinct values of the parameter induce different distributions of the observed data.%
\footnote{For example, the Wikipedia page on \href{https://en.wikipedia.org/w/index.php?title=Identifiability&oldid=1292274414}{identifiability} states that a statistical model $P_\theta$ is ``identifiable if the mapping $\theta \mapsto P_\theta$ is one-to-one.'' \citet[][Definition 5.2]{lehmann_theory_1998} equivalently define $\theta$ to be unidentifiable if there exist $\theta_1 \neq \theta_2$ such that $P_{\theta_1} = P_{\theta_2}$.}
It
is well-known that, from the superpopulation perspective, the population shares $p$ are not
point-identified \citep{heckman_making_1997}. In particular, any type proportion that
violates monotonicity induces data that can be rationalized by some  type proportion
that obeys monotonicity. By contrast, as summarized in the following result, the type
counts $\theta$ are in fact identified, and thus monotonicity violations are likewise
distinguishable from $f_\theta$.

\begin{restatable}{prop}{propid}
\label{prop:id}
If $\theta_1 \neq \theta_0 \in \Theta$, then $f_{\theta_1} \neq f_{\theta_0}$, and hence the finite-population type counts $\theta$ are identified. On the
other hand, given any $p_1 \in \mathcal P_1$, there exists a $p_0 \in \mathcal P_0$ such
that $g_{p_1} = g_{p_0}$. 
\end{restatable}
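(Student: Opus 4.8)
The plan is to prove the two halves separately: identification of $\theta$ by a method-of-moments argument, and non-identification of $p$ by an explicit monotone reparametrization. Throughout, write $T_{at}, T_{nt}, T_c, T_d$ for the multivariate hypergeometric counts of treated units of each type, so that $Y_T = T_{at} + T_c$ and $Y_U = (n_{at} - T_{at}) + (n_d - T_d)$.

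For the first claim, I would first read off the first moments. Since $\E[T_t] = n_1 n_t/n$, we get $\E[Y_T] = \frac{n_1}{n}(n_{at}+n_c)$ and $\E[Y_U] = \frac{n-n_1}{n}(n_{at}+n_d)$; because $n$ and $n_1$ are known, these identify the sums $S_1 := n_{at}+n_c$ and $S_2 := n_{at}+n_d$. The key step is to extract one further piece of information from the cross-moment $\E[Y_T Y_U]$. Using the hypergeometric variance and covariance formulas $\Var[T_t] = \lambda n_t(n-n_t)$ and $\Cov[T_s,T_t] = -\lambda n_s n_t$ for $s \ne t$, with $\lambda := \frac{n_1(n-n_1)}{n^2(n-1)} > 0$, a direct computation reduces $\E[Y_T Y_U]$ to an affine function of the single quantity $n_{at} n_{nt} - n_c n_d$, with known coefficients depending only on $S_1, S_2, n, n_1$.

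The crux is then an algebraic simplification that uses the adding-up constraint $n_{at}+n_{nt}+n_c+n_d = n$: substituting $n_c = S_1 - n_{at}$, $n_d = S_2 - n_{at}$, and $n_{nt} = n - S_1 - S_2 + n_{at}$ collapses $n_{at} n_{nt} - n_c n_d$ to the linear expression $n\, n_{at} - S_1 S_2$. Hence $\E[Y_T Y_U]$, together with the already-identified $S_1, S_2$, pins down $n_{at}$, and then $n_c, n_d, n_{nt}$ follow. Since all three moments are functionals of $f_\theta$, the equality $f_{\theta_1} = f_{\theta_0}$ forces equality of $S_1, S_2$, and $n_{at}$, hence $\theta_1 = \theta_0$. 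I expect the only real work here to be verifying the cross-moment reduction; everything downstream is linear algebra. (A generating-function identity for the joint distribution gives an alternative route, but the moment argument is more self-contained.)

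For the second claim, the plan is to show that $g_p$ depends on $p$ only through $(\pi_1, \pi_0) := (p_{at}+p_c,\, p_{at}+p_d)$. Conditioning on the treated set, which is independent of the i.i.d.\ types, each treated unit independently has $Y_i = 1$ with probability $\pi_1$ and each untreated unit independently has $Y_i = 1$ with probability $\pi_0$; because the conditional law does not depend on which units are treated, $Y_T \sim \mathrm{Binomial}(n_1, \pi_1)$ and $Y_U \sim \mathrm{Binomial}(n-n_1, \pi_0)$ are independent, so $g_p$ is the product of these two binomials. It then suffices to match $(\pi_1, \pi_0)$ with a monotone $p_0$. Given $p_1 \in \mathcal P_1$, assume without loss of generality $p_c^1 \le p_d^1$ and set $p_0 = (p_{at}^1 + p_c^1,\ p_{nt}^1 + p_c^1,\ p_d^1 - p_c^1,\ 0)$; one checks this lies in $\mathcal P$, has complier share $0$ so that $p_0 \in \mathcal P_0$, and preserves both $\pi_1$ and $\pi_0$, giving $g_{p_0} = g_{p_1}$. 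The symmetric choice handles $p_d^1 < p_c^1$. The only subtlety for this half is justifying the product-binomial structure cleanly under complete rather than Bernoulli randomization.
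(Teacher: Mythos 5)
Your proposal is correct and follows essentially the same route as the paper: both halves rest on the same ingredients, namely recovering $n_{at}+n_c$ and $n_{at}+n_d$ from the first moments and $n_{at}$ from $\E[Y_T Y_U]$ (the paper counts ordered pairs directly where you invoke the hypergeometric covariance formulas, but the identity is the same), and for the second half the paper uses the identical product-of-binomials representation and the same shift of $\min(p_c,p_d)$ mass onto always-takers and never-takers. The ``subtlety'' you flag is handled in the paper by fixing the treated set to be the first $n_1$ indices and sampling types i.i.d., which is the same exchangeability argument as your conditioning step.
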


\begin{example}[Illustrative example] \label{ex:2 units}
Consider a population with 2 units, one of whom is assigned to treatment. If there is 1 always-taker and 1 never-taker ($\theta = (1,1,0,0)$), then $(Y_T,Y_U)$ is either equal to $(1,0)$ or $(0,1)$, depending on whether the always-taker is assigned to treatment or control. By contrast, if there is 1 defier and 1 complier ($\theta = (0,0,1,1)$), then $(Y_T,Y_U)$ is either equal to $(1,1)$ or $(0,0)$, depending on whether the complier is assigned to treatment or control. Thus, the distribution of the observed data differs between a population with 1 always-taker and 1 never-taker and a population with 1 complier and 1 defier, despite $E[(Y_T,Y_U)]$ being the same. By contrast, a superpopulation with half always-takers and half never-takers and a superpopulation with half compliers and half defiers would each generate the same observable data distribution for $(Y_T,Y_U)$, assigning equal probability to $(0,0), (0,1), (1,0), (1,1)$. 
\end{example}

As a criterion for whether the observed data is informative of the parameter, however, the
above definition of identification falls short from the design-based perspective. Two
values of $\theta$ are distinguishable in the sense that repeated draws of $(Y_T, Y_U)$
from $f_\theta$ would have different distributions under these two values. This, however,
corresponds to knowing the distribution of outcomes from re-assigning the \emph{same}
units to \emph{different} treatment assignments. However, in the finite population, we only
observe $(Y_T, Y_U)$ once, and thus cannot use information only learned from repeated
draws. For example, knowing $f_\theta$ implies that we know $\cov(Y_T,Y_U)$, yet this is
difficult to learn from observing a {single} realization of $f_\theta$.

In the following sections, we consider two perspectives of evaluating whether a single realization of $(Y_T,Y_U)$ is useful for learning about monotonicity violations. From the frequentist perspective, we consider whether one can construct tests that have non-trivial power against the null of monotonicity. From the Bayesian perspective, we consider the extent to which a Bayesian updates their prior that monotonicity holds after seeing the data. 

\begin{rem}
The discussion in previous papers sometimes suggests that $\theta$ is
not identified. For example, \citet{ding_model-free_2019} write that ``Without
monotonicity, the unknown parameters in the Science Table, $(N_{11}, N_{10}, N_{01},
N_{00})$ [$\theta$ in our notation], are no longer identifiable from the observed data.''
Likewise, \citet{rosenbaum_effects_2001} writes, ``The model of a nonnegative effect
cannot be verified or refuted by inspecting the responses of individuals, because $r_{Ti}$
and $r_{Ci}$ [$y_i(1)$ and $y_i(0)$ in our notation] are never jointly observed on the
same person.'' \Cref{prop:id} shows that according to the usual technical definition of
identification, $\theta$ is in fact identified. The underlying intuition---that violations of monotonicity are hard to detect---is consistent with our power and updating results in the following sections, however.
\end{rem}

\section{Frequentist testing}

We start by considering the possibility of frequentist testing against monotonicity. First,
we show that there exist frequentist tests with power against \emph{some} alternatives.

\begin{restatable}{prop}{prophaspower}
\label{prop:haspower}
\emph{(Frequentist tests exist)}
Suppose $n_U,n_T \geq 2$. Then there exist tests for monotonicity that control size and have power against some alternatives. That is, for any $\alpha \in (0,1)$ there exists a test $\delta: \supp(Y) \to [0,1]$ such that $\sup_{\theta_0 \in \Theta_0} E_{\theta_0}[\delta(Y)] \leq \alpha$ and $E_{\theta_1}[\delta(Y)] > \alpha$ for some $\theta_1 \in \Theta_1$.\footnote{The proof shows that there is a non-randomized test satisfying the conditions of the proposition for $\alpha$ sufficiently large.}      
\end{restatable}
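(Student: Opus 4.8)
The plan is to exhibit a single event $A \subseteq \supp(Y)$ and a single alternative $\theta_1 \in \Theta_1$ for which $P_{\theta_1}(A)$ strictly exceeds $\max_{\theta_0 \in \Theta_0} P_{\theta_0}(A)$, and then to convert this gap into a test at every level by a routine scaling argument. Writing $\alpha_0 := \max_{\theta_0 \in \Theta_0} P_{\theta_0}(A)$ (a maximum, since $\Theta_0$ is finite) and supposing $P_{\theta_1}(A) = 1 > \alpha_0$: for $\alpha \in [\alpha_0, 1)$ the non-randomized test $\delta = \mathbf{1}_A$ has size $\alpha_0 \le \alpha$ and power $1 > \alpha$; for $\alpha \in (0, \alpha_0)$ the test $\delta = (\alpha/\alpha_0)\mathbf{1}_A$ has size exactly $\alpha$ and power $\alpha/\alpha_0 > \alpha$. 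This yields a valid test at every level, and a non-randomized one once $\alpha \ge \alpha_0$, matching the footnote. (A support-separation approach cannot work here: for any achievable outcome $(y_T, y_U)$, the population of $y_T + y_U$ always-takers and $n - y_T - y_U$ never-takers is monotone and produces it with positive probability, so every outcome lies in the support of some $\theta_0$.)

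To construct $A$ and $\theta_1$, I would take $\theta_1 = (0, 0, n_d, n_c)$ consisting only of compliers and defiers. Since every unit is then a complier or a defier, $T_c + T_d = n_1$, and a short computation gives $Y_T = T_c$ and $Y_U = n_d - T_d = (n_d - n_1) + T_c$, so that $Y_U - Y_T \equiv n_d - n_1 =: \delta^*$ deterministically. Thus $f_{\theta_1}$ is supported entirely on the line $L = \{(y_T, y_U) : y_U - y_T = \delta^*\}$, and taking $A = L \cap \supp(Y)$ gives $P_{\theta_1}(A) = 1$.

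The heart of the argument is then to show $\alpha_0 < 1$, i.e., that no monotone population is supported entirely on $L$. Writing the statistic as $Y_U - Y_T = \sum_i [(1 - D_i) y_i(0) - D_i y_i(1)]$, I would show that under the randomization it is almost surely constant only when the finite population consists of a single type, in which case its value is $n_0 - n_1$ (all always-takers), $0$ (all never-takers), $-n_1$ (all compliers), or $n_0$ (all defiers); every other monotone population places positive mass off $L$. Granting this, it suffices to choose $n_d \in \{1, \dots, n-1\}$ with $n_d \notin \{n_0, n_1\}$, which is possible because $n \ge 4$ leaves $\{1, \dots, n-1\} \setminus \{n_0, n_1\}$ nonempty. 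Then $\delta^* = n_d - n_1$ avoids all four pure-type values: $\delta^* \ne 0$ as $n_d \ne n_1$, $\delta^* \ne n_0 - n_1$ as $n_d \ne n_0$, $\delta^* \ne -n_1$ as $n_d \ge 1$, and $\delta^* \ne n_0$ as $n_d \le n - 1$. Hence no monotone $\theta_0$ lives on $L$, giving $\alpha_0 < 1$ as required, with $\theta_1 \in \Theta_1$ since both $n_d, n_c \ge 1$.

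The main obstacle is the claim that $Y_U - Y_T$ is almost surely constant under a monotone population only for the pure types. I expect to prove the contrapositive by a swap argument: if a monotone $\theta_0$ contains units of two distinct types, I can exhibit two positive-probability assignments differing by a single reassignment that change $Y_U - Y_T$ (for instance, moving a treated always-taker and an untreated never-taker swaps the statistic by $2$, while moving a treated always-taker for an untreated defier swaps it by $1$, and analogous swaps cover the remaining pairs of types). Verifying that some such swap is always available, which requires handling the boundary configurations where one type count is forced, is where the hypotheses $n_0, n_1 \ge 2$ do their work and is the one step requiring genuine care; the deterministic structure of $\theta_1$ and the scaling step are routine.
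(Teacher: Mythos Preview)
Your approach is essentially the same as the paper's: both take $\theta_1=(0,0,n_d,n_c)$ consisting only of compliers and defiers, observe that the support of $Y$ under $\theta_1$ lies on a line of slope $1$ (your formulation via the statistic $Y_U-Y_T\equiv n_d-n_1$ is exactly this), and then show that no $\theta_0\in\Theta_0$ has support contained in that line by (i) checking the single-type populations directly and (ii) using a swap of two distinct-type units for the mixed case. The scaling to arbitrary $\alpha$ is identical.

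Two minor differences worth noting. First, the paper pins down $n_d=n_1-1$ (after taking $n_1\le n_0$ without loss of generality), whereas you leave $n_d$ free and then exclude $\{0,n,n_0,n_1\}$; your version is slightly more flexible and in fact only uses $n\ge4$, not $n_0,n_1\ge2$ separately. Second, the paper's swap argument is packaged a bit more cleanly: rather than enumerating type pairs, it computes the difference $(y_k(1)-y_j(1),\,y_j(0)-y_k(0))$, observes that staying on the line forces this to equal $(c,c)$ with $c\neq0$, and adds the two equations to get $(y_k(1)-y_k(0))-(y_j(1)-y_j(0))=2c$, which forces one unit to be a complier and the other a defier. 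This handles all pairs at once and shows the swap step needs only $n_0,n_1\ge1$, so your attribution of the $n_0,n_1\ge2$ hypothesis to the swap step is not quite where it enters.
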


\paragraph{\textbf{Intuition for tests.}} Observe that in \Cref{ex:2 units}, the support of $Y$ was $\{(0,1), (1,0)\}$ when there was 1 always-taker and 1 never-taker, but was $\{(0,0),(1,1)\}$ when there was 1 complier and 1 defier. This illustrates that the support of $Y$ may be different when monotonicity holds versus when it is violated. This general idea is the basis of the construction of tests for monotonicity in the proof to \Cref{prop:haspower}. We show that when $n \geq 4$, there always exists an alternative $\theta_1$ such that the support of $Y$ under $\theta_1$ does not contain the support of $Y$ under $\theta_0$ for any $\theta_0$ satisfying the null: i.e. $S^Y_{\theta_1} \not\supseteq S^Y_{\theta_0}$ for any $\theta_0 \in \Theta_0$, where $S^Y_\theta$ is the support of $Y$ under $\theta$.\footnote{Note that this is not the case in \Cref{ex:2 units}, because the support of $Y$ when there are two always-takers is $(1,1) \subseteq S^Y_{\theta_1} = \{(0,0),(1,1)\}$. There are thus no non-trivial tests of monotonicity with $n_T = n_U = 1$.} It follows that a test that rejects if and only if $Y \in S^Y_{\theta_1}$ has power of 1 to reject $\theta_1$, but only has size $\alpha_0 = \sup_{\theta_0 \in \Theta_0} E_{\theta_0}[ 1\{ Y \in S^Y_{\theta_1} \} ] < 1$. We can then construct a non-trivial randomized test with arbitrary size $\alpha \leq \alpha_0$ by rejecting with probability $\alpha/\alpha_0 > \alpha$ when $Y \in S^Y_{\theta_1}$.

Although non-trivial tests against monotonicity exist from the design-based perspective,
we expect their power to be poor in practice. Our next result shows that any test of
monotonicity has near-trivial power against some alternatives. Specifically, we show that if $n$ is even, there always exists an alternative for which power is no better than $\alpha$. For $n$ odd, we show there exists an alternative such that power is bounded above by $\alpha (1+ O(2^{-n}))$.\footnote{Surprisingly, there is a test for $n=13, n_T=6$ whose minimum power over $\Theta_1$ is ever-so-slightly larger than $\alpha$. For $\alpha=0.05$, such a test achieves power at least $0.05 + 2 \times 10^{-7}$ over all of $\Theta_1$.} This implies that there are no consistent tests of monotonicity (that is, tests for which power converges to 1 for all alternatives along a sequence of finite-populations with $n \to \infty$). 

\begin{restatable}{prop}{prophasnopower}
\emph{(Near-trivial power for some alternative)}
Suppose that $\delta: \supp(Y) \to [0,1]$ is a level-$\alpha$ test, i.e. $\sup_{\theta_0
\in \Theta_0} E_{\theta_0}[\delta(Y)] \leq \alpha$ for some $\alpha \in (0,1)$. If $n$ is
even, then $\delta$ has trivial power against some alternative: there exists $\theta_1 \in
\Theta_1$ such that $E_{\theta_1}[\delta(Y)] \leq \alpha$. If $n$ is odd, then there
exists $\theta_1 \in \Theta_1$ such that $E_{\theta_1}[\delta(Y)] \leq \alpha \left(1 + \tfrac{1}
{2^{n-1}-1} \right)$.
\end{restatable}

We can further show that the lack of power is \emph{generic} in the following sense: if we
construct a test to have power against some alternative $\theta_1$, then there are
alternatives
``near'' $\theta_1$ such that power is low. More precisely, around any $\theta_1$,
we can define a weighting function that mimics sampling from a population with type
frequencies $p_1 = \theta_1/n$. When $n$ is reasonably large, these weights are
``concentrated'' around $\theta_1$ in terms of type frequencies. Weighted average power (WAP) under this weighting
function turns out to never be larger than 2.51$\alpha$, uniformly over alternatives
$\theta_1$.
\begin{restatable}{prop}{propnowap}
\label{prop:nowap}
\emph{(Low WAP)}    Assume $n \geq 4$. Fix any $\theta_1 \in \Theta_1$ and let $v = \tfrac{1}{n} \min(\theta_{1,d},
\theta_{1,c}) \ge 1/n$ be
the
size of the monotonicity violation. Let $\vartheta \sim
\mathrm{Multinomial} (n, \theta_1/n)$. 
Consider the weight function $w_n(\cdot; \theta_1)$ over
$\Theta_1$ defined by the probability mass function of $\vartheta \mid (\vartheta \in
\Theta_1)$. Fix a test $\delta(\cdot)$ such that $\sup_{\theta_0 \in\Theta_0} E_{\theta_0}[\delta
(Y)] \le \alpha$. Then, the weighted average power around $\theta_1$ is bounded,
\[
    \mathrm{WAP}(\delta; \theta_1) := \sum_{\tilde{\theta}_1 \in \Theta_1} E_{\tilde\theta_1}[\delta
    (Y)] w_n
    (\tilde{\theta}_1;\theta_1) \le \alpha  \frac{1}{1-2(1-v)^n+(1-2v)^n} \le 2.51 \alpha.
    \numberthis
    \label{eq:wap}
\]
\end{restatable}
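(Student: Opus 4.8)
The plan is to convert the multinomial weighting into a statement about the superpopulation distribution $g_p$ and then invoke the non-identification half of \Cref{prop:id} to import size control essentially for free. The starting point is that, writing $P_p$ for the $\mathrm{Multinomial}(n,p)$ probability mass function on $\Theta$ with $p = \theta/n$, the definition of $g_p$ gives the identity
\[
\E_{g_p}[\delta(Y)] = \sum_{\vartheta \in \Theta} \E_\vartheta[\delta(Y)]\, P_p(\vartheta);
\]
that is, drawing $\vartheta \sim \mathrm{Multinomial}(n,p)$ and then $Y \mid \vartheta \sim f_\vartheta$ yields exactly $g_p$. Since the weight $w_n(\cdot;\theta)$ is just $P_p(\cdot)$ renormalized to $\Theta_1$, we have $\mathrm{WAP}(\delta;\theta) = \left(\sum_{\vartheta \in \Theta_1} \E_\vartheta[\delta(Y)]\, P_p(\vartheta)\right)/P_p(\Theta_1)$.

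First I would bound the numerator. Since $\theta \in \Theta_1$ forces $p \in \mathcal P_1$, \Cref{prop:id} supplies some $p_0 \in \mathcal P_0$ with $g_{p_0} = g_p$. As $p_0$ has either $p_{0,d} = 0$ or $p_{0,c} = 0$, a $\mathrm{Multinomial}(n,p_0)$ draw lies in $\Theta_0$ almost surely, so size control $\sup_{\vartheta \in \Theta_0}\E_\vartheta[\delta(Y)] \le \alpha$ yields
\[
\E_{g_p}[\delta(Y)] = \E_{g_{p_0}}[\delta(Y)] = \sum_{\vartheta \in \Theta_0} \E_\vartheta[\delta(Y)]\, P_{p_0}(\vartheta) \le \alpha.
\]
Splitting the first display over $\Theta_0$ and $\Theta_1$ and discarding the nonnegative $\Theta_0$ sum gives $\sum_{\vartheta \in \Theta_1}\E_\vartheta[\delta(Y)]\, P_p(\vartheta) \le \E_{g_p}[\delta(Y)] \le \alpha$, hence $\mathrm{WAP}(\delta;\theta) \le \alpha/P_p(\Theta_1)$.

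It then remains to lower-bound $P_p(\Theta_1)$ and to control the resulting constant. By inclusion--exclusion on $\{n_d = 0\}\cup\{n_c = 0\}$,
\[
P_p(\Theta_1) = 1 - (1-p_d)^n - (1-p_c)^n + (1-p_d-p_c)^n =: R(p_d,p_c).
\]
Writing $a \le b$ for the smaller and larger of $p_d,p_c$ (so $a = v$) and using that $R$ is symmetric, I would note $\partial_b R = n[(1-b)^{n-1} - (1-a-b)^{n-1}] \ge 0$ whenever $a+b \le 1$, so $R(v,b) \ge R(v,v) = 1 - 2(1-v)^n + (1-2v)^n$, which is the first inequality of \eqref{eq:wap}. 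For the universal constant, set $b(v,n) = 1 - 2(1-v)^n + (1-2v)^n$; since $\partial_v b = 2n[(1-v)^{n-1} - (1-2v)^{n-1}] \ge 0$ on $v \in [1/n, 1/2]$, $b$ is increasing in $v$ and minimized at $v = 1/n$, while $b(1/n,n)$ decreases in $n$ to $1 - 2e^{-1} + e^{-2} = (1-e^{-1})^2$. Thus $b(v,n) \ge (1-e^{-1})^2$ and $1/(1-e^{-1})^2 = 2.502\ldots \le 2.51$.

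The crux is the opening reduction: recognizing that the multinomial weights reassemble into $g_p$, and that \Cref{prop:id} lets one swap $p$ for a monotone $p_0$ under which the null size bound applies to the \emph{entire} support. Everything afterward is inclusion--exclusion plus one-variable monotonicity. The only delicate point is numerical: the limiting value $(1-e^{-1})^2 \approx 0.3996$ exceeds $1/2.51 \approx 0.3984$ by a slim margin, so the claim that $b(1/n,n)$ decreases to this limit (rather than undershooting it for some finite $n \ge 4$) should be verified directly from the monotonicity of the sequences $(1-1/n)^n$ and $(1-2/n)^n$.
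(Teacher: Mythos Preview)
Your argument is correct and follows essentially the same route as the paper: rewrite the weighted average as the $\Theta_1$ part of the mixture $\E_{g_p}[\delta(Y)]$, invoke the superpopulation non-identification in \Cref{prop:id} to replace $p$ by some $p_0\in\mathcal P_0$ whose multinomial draws lie entirely in $\Theta_0$, apply size control to get $\E_{g_p}[\delta(Y)]\le\alpha$, and then lower-bound $P_p(\Theta_1)$ via inclusion--exclusion and monotonicity in $(p_c,p_d)$. The only difference is cosmetic: the paper simply asserts the numerical inequality $1/[1-2(1-1/n)^n+(1-2/n)^n]\le 2.51$ for $n\ge 4$, whereas you supply the limiting argument $b(1/n,n)\downarrow(1-e^{-1})^2$ and correctly flag that the monotone convergence of $b(1/n,n)$ (not just of $(1-1/n)^n$ and $(1-2/n)^n$ separately) is the point needing explicit verification.
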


\Cref{prop:nowap} implies, for example, that a 5\% test never has weighted average power
larger than 12.6\%. The bound given in \Cref{prop:nowap} becomes even tighter if one
imposes a lower bound on the fraction of the finite population that are compliers/defiers.
For example, in a population of 100, if $\min(\theta_d, \theta_c) \ge 5$, then the upper
bound becomes $5.06\%$, which is virtually the same as size, uniformly over all such
alternatives.

Interestingly, while we typically expect statistical power to increase with $n$,
\Cref{prop:nowap} implies a \emph{tighter} upper bound on WAP the \emph{larger} is $n$
(holding fixed the share of compliers and defiers). Intuitively, having a larger finite
population is more similar to having an infinite superpopulation, in which case there is
no testable content of monotonicity.%
\footnote{
Similarly, the bound on WAP gets \emph{tighter} when the violation $v$ of the null hypothesis is \emph{larger}.}

\paragraph{\textbf{Proof sketch.}} Let $p_1 = \theta_1/n$ be the type shares under alternative $\theta_1$. Consider a Bayesian who believes the $n$ units to have been sampled from a superpopulation with proportions $p_1$, i.e. with prior over $\theta$ of $\tilde\pi_B \sim \mathrm{Multinomial}(n,p_1)$. The marginal distribution over $Y$ implied by this prior corresponds exactly to superpopulation sampling, $\tilde\pi_B(Y{=}y) = g_{p_1}(y)$. Now, consider a second Bayesian who believes the data to have been sampled from a superpopulation with shares $p_0$ satisfying monotonicity, i.e. with prior $\tilde\pi_A \sim \mathrm{Multinomial}(n,p_0)$. By analogous logic, the implied marginal distribution over $Y$ for this prior is $g_{p_0}(y)$. By \Cref{prop:id}, we can choose $p_0 \in \mathcal{P}_0$ such that $g_{p_0}(y) = g_{p_1}(y)$, and thus the two priors imply the same distribution over $Y$. It follows that the weighted average power under the two priors is the same,
\[E_{\theta \sim \tilde\pi_B}[ E_\theta[\delta(Y)] ] = E_{\theta \sim \tilde\pi_A}[ \underbrace{E_\theta[\delta(Y)] }_{\mathclap{\text{$\leq \alpha$ since $\theta \in \Theta_0$}}}] \leq \alpha, \]
where the inequality uses the fact that $\tilde\pi_A(\theta \in \Theta_0) = 1$ and $\delta$ controls size. Note, however, that under prior $\tilde\pi_B$, the design-based version of monotonicity holds with low probability: it is satisfied only if the sample of size $n$ from a superpopulation with both compliers and defiers happens to have zero compliers or zero defiers (or both), which occurs with probability \[(1-p_{1,d})^n + (1-p_{1,c})^n - (1-p_{1,d}-p_{1,c})^n \approx 0.\]  It follows that expected power under $\tilde\pi_B$ conditional on monotonicity being violated ($\theta \in \Theta_1$) is close to the unconditional expectation,
\[E_{\theta \sim \tilde\pi_B}[ E_\theta[\delta(Y)] \mid \theta \in \Theta_1 ] \approx E_{\theta \sim \tilde\pi_B}[ E_\theta[\delta(Y)] ] , \] and hence
$$ \mathrm{WAP}(\delta; \theta_1) = E_{\theta \sim \pi_B}[ E[\delta(Y)] \mid \theta \in \Theta_1 ] \approx  E_{\theta \sim \pi_B}[ E[\delta(Y)]  ]  \leq \alpha. $$
The proof formalizes this argument by providing an upper bound on the approximation error in the argument above, using the fact that $p_{1,d},p_{1,c} \geq 1/n$.

\paragraph{\textbf{Numerical illustration with $n=30$.}} To illustrate these results, \cref{fig:powerhist} computes the most powerful 5\%-level
test for every possible alternative $\theta \in \Theta_1$, obtained via linear
programming, for $n = 30$ and $n_T = 15$. Perhaps surprisingly, all 4495 alternatives are
testable, in the sense that for each alternative, there exists a test targeted to that
alternative with power more than 0.05. However, the power for these tests for the targeted
alternative tends to be modest: among these tests engineered to maximize power at a given
alternative, only 53 of 4495 alternatives have tests with power above 0.15, and all of
them have power below 0.31. These tests also tend to have very poor power for nearby
alternatives, as suggested by \Cref{prop:nowap}: Only 18 have weighted average power above the nominal threshold $0.05$, and the maximum WAP is a measly $0.0567$. As a specific example, the optimal test against 18 defiers and 12 compliers achieves the maximal power of $0.31$, but this test rejects with probability zero when there are 17 defiers and 13 compliers.

\begin{figure}[!ht]
    \includegraphics[width=\textwidth]{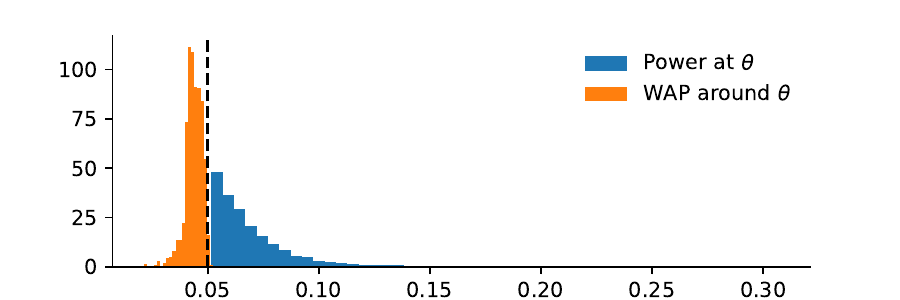}
    \caption{Power of the most powerful test for a given alternative $\theta \in
    \Theta_1$ for $n=30$, $n_T=15$}

    \begin{proof}[Notes]
        We construct the most powerful 5\%-test for a given alternative $\theta$, which we
        obtain via linear programming: $\max_{\delta: \supp(Y) \to [0,1]} \sum_y f_{\theta}
        (y) \delta (y)$ subjected
        to $\sum_y f_\vartheta(y) \delta(y) \le 0.05$ for all
        $\vartheta\in\Theta_0$. The
        blue patches show the distribution of power at $\theta$ for each test, across all
        4495 values in $\Theta_1$. For each test, we also compute its weighted average
        power \eqref{eq:wap} and show its distribution in orange. 
    \end{proof}
    \label{fig:powerhist}
\end{figure}

\paragraph{\textbf{Unbiased tests.}} We can obtain even sharper limits on power if we restrict attention to unbiased tests. Recall that a test is unbiased if its power against all alternatives is weakly greater than its size. \Cref{prop:unbiasedexists} below shows that unbiased tests for monotonicity do exist (at least when $n_T=n_U$). However, \Cref{prop:unbiased} implies that unbiased tests have asymptotically trivial power.

\begin{restatable}{prop}{propunbiasedexists} \emph{(Unbiased tests exist)}
\label{prop:unbiasedexists}
Suppose $n_T=n_U \geq 2$. Then there exists a non-trivial unbiased test of monotonicity: for any $\alpha \in (0,1)$, there exists $\delta: \supp(Y) \to [0,1]$ such that $\sup_{\theta_0 \in \Theta_0} E_{\theta_0}[ \delta(Y) ] \leq \alpha \leq \inf_{\theta_1 \in \Theta_1} E_{\theta_1}[ \delta(Y) ]$, with $E_{\theta_1}[ \delta(Y) ] > \alpha$ for at least one $\theta_1 \in \Theta_1$. 
\end{restatable}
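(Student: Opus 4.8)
The plan is to construct the unbiased test explicitly by perturbing the trivial test $\delta \equiv \alpha$. Writing $m = n_1 = n_0$, it suffices to exhibit a function $\eta : \supp(Y) \to \mathbb{R}$ with $E_{\theta_0}[\eta(Y)] \le 0$ for all $\theta_0 \in \Theta_0$, $E_{\theta_1}[\eta(Y)] \ge 0$ for all $\theta_1 \in \Theta_1$, and $E_{\theta_1^*}[\eta(Y)] > 0$ for at least one $\theta_1^* \in \Theta_1$; then $\delta = \alpha + \eta$ (with $\eta$ scaled small) is unbiased, non-trivial, and takes values in $[0,1]$ since $\alpha \in (0,1)$ leaves room. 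One natural seed is the covariance, which a short calculation shows satisfies $\mathrm{Cov}_\theta(Y_T,Y_U) \propto n_c n_d - n_{at}n_{nt}$ and is thus $\le 0$ on $\Theta_0$; but this functional is strictly negative at alternatives with many always-takers and never-takers, so it violates unbiasedness, and realizing the exact separating functional $n_c n_d$ as $E_\theta[\eta(Y)]$ is obstructed because $\supp(Y)$ has only $O(m^2)$ points while $|\Theta| = O(m^3)$. I would therefore build $\eta$ from the \emph{support geometry} instead of from moments.

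The construction rests on two observations. First, the corner outcomes $(m,0)$ and $(0,m)$ are null-only: a direct check of which type counts can realize them shows $(m,0) \in S^Y_{\theta}$ forces $n_d = 0$ and $(0,m) \in S^Y_{\theta}$ forces $n_c = 0$, so both have probability zero under every $\theta_1 \in \Theta_1$. Zeroing out rejection on these corners therefore costs no power on any alternative while freeing up size budget under nulls that can produce them. Second, I would spend that freed budget on a single outcome $y^*$ lying on the anti-diagonal $\{y_T + y_U = m\}$ that also lies in some alternative's support; concretely $y^* = (1, m-1)$ together with $\theta_1^* = (m-2,\, m,\, 1,\, 1) \in \Theta_1$ works for every $m \ge 2$. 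The proposed perturbation is $\eta = -\alpha\,\mathbf{1}\{Y \in \{(m,0),(0,m)\}\} + \beta\,\mathbf{1}\{Y = y^*\}$, so that $E_{\theta_1}[\eta] = \beta\, P_{\theta_1}(Y = y^*) \ge 0$ on all alternatives (strict at $\theta_1^*$), and $E_{\theta_0}[\eta] \le 0$ reduces to the budget inequality $\beta\, P_{\theta_0}(Y = y^*) \le \alpha\, P_{\theta_0}(Y \in \{(m,0),(0,m)\})$ for every $\theta_0 \in \Theta_0$.

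The crux is verifying this budget inequality, for which I would prove the combinatorial lemma that every null supporting $y^*$ also supports a corner. Here $n_1 = n_0 = m$ is essential: for a null with $n_d = 0$ one has $Y_T + Y_U = n_{at} + (\text{treated compliers})$ exactly, so if $n_{at} \ge m+1$ then $Y_T + Y_U \ge m+1$, while if $n_{nt} \ge m+1$ then $n_{at} + n_c \le m-1$ forces $Y_T + Y_U \le m-1$; the symmetric bounds hold for $n_c = 0$ nulls. Hence any null producing an anti-diagonal outcome $y^*$ must have $n_{at} \le m$ and $n_{nt} \le m$, and a second feasibility check shows precisely these ``non-majority'' nulls are the ones with $(m,0)$ (resp. $(0,m)$) in their support. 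Thus $P_{\theta_0}(Y = y^*) > 0 \Rightarrow P_{\theta_0}(Y \in \{(m,0),(0,m)\}) > 0$, so the ratio is finite on the finite set $\Theta_0$ and a small enough $\beta > 0$ satisfies every constraint.

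The main obstacle I anticipate is this lemma: one must pin down the exact conditions under which each corner and each candidate $y^*$ lies in $S^Y_\theta$, and confirm the threshold $Y_T + Y_U = m$ cleanly separates majority-always-taker from majority-never-taker nulls — this is exactly where the symmetry $n_1 = n_0$ is used and is the reason the result is stated only for balanced designs. The remaining steps (the null-only property of the corners, membership of $y^*$ in $S^Y_{\theta_1^*}$, and the scaling to keep $\delta \in [0,1]$) are routine once the anti-diagonal lemma is in hand.
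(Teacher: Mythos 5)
Your construction is essentially the paper's. The paper's test is $\delta = \delta_0 + \epsilon \delta_1$ with $\delta_0(Y) = \alpha \cdot 1\{ Y \notin \{(n_1,0),(0,n_0)\}\}$ and $\delta_1(Y) = 1\{Y = (n_1-1,1)\}$, which is exactly your $\alpha + \eta$ up to reflecting $y^*$ across the diagonal; and the two facts you isolate---the corners are null-only, and any null whose support contains the chosen anti-diagonal point also contains a corner---are precisely the two facts the paper proves. The one genuine difference is how the second fact is established: the paper uses a local swapping argument (from an allocation realizing $(n_1-1,1)$ under a monotone population, swap a treated $Y=0$ unit with a control $Y=1$ unit to reach $(n_1,0)$, or swap blocks of size $n_1-1$ to reach $(0,n_0)$), whereas you derive the invariant $Y_T+Y_U = n_{at} + \#\{\text{treated compliers}\}$ for $n_d=0$ nulls (and its mirror for $n_c=0$) and then exactly characterize which nulls support each corner, namely $n_{at}\le m$ and $n_{nt}\le m$. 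Both arguments are correct and use the balance $n_1=n_0$ in the same essential place; yours yields a slightly sharper statement (an exact characterization rather than a one-way implication) at the cost of an extra feasibility check. One small point in your favor: your witness alternative $(m-2,m,1,1)$ really does put $(1,m-1)$ in its support, whereas the paper's stated witness $(0,0,1,n-1)$ has support $\{(n_1,1),(n_1-1,0)\}$ and does not realize $(n_1-1,1)$; a correct witness there would be, e.g., $(0,1,1,n-2)$.
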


\begin{restatable}{prop}{propunbiased} \emph{(Unbiased tests have asymptotically trivial power)}
\label{prop:unbiased}
    Fix $\epsilon > 0$ and $n \geq 4$. Fix any  $\theta \in \Theta_{1}$ such that $v = \min
    (\theta_{d}, \theta_{c})/n \ge \epsilon$. Let $\delta$ be any unbiased level-$\alpha$ test, i.e. a test satisfying $\sup_{\theta_0 \in \Theta_0} E_{\theta_0}[\delta(Y)] \le \alpha \le \inf_{\theta_1 \in
        \Theta_1} E_{\theta_1}[\delta(Y)].$
    Then we have that
    \[
        E_{\theta}[\delta(Y)] \le \alpha (1 + \eta_n(\epsilon))
    \]
    for $\eta_n(\epsilon) = 3.125 n^{1.5} (1-\epsilon)^n$.
\end{restatable}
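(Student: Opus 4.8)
The plan is to derive this directly from the weighted-average-power bound of \Cref{prop:nowap} by using unbiasedness to convert that averaged bound into a pointwise bound at $\theta$. Write the WAP around $\theta$ by isolating the mass that the conditional multinomial weight $w_n(\cdot;\theta)$ places on the center point $\theta$ itself:
\[
\mathrm{WAP}(\delta;\theta) = w_n(\theta;\theta)\,\E_\theta[\delta(Y)] + \sum_{\vartheta \in \Theta_1,\ \vartheta \neq \theta} w_n(\vartheta;\theta)\,\E_\vartheta[\delta(Y)].
\]
Because $\delta$ is unbiased, $\E_\vartheta[\delta(Y)] \ge \alpha$ for every $\vartheta \in \Theta_1$, so the trailing sum is at least $\alpha\,(1 - w_n(\theta;\theta))$. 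Combining this lower bound with the upper bound $\mathrm{WAP}(\delta;\theta)\le \alpha/B$ from \Cref{prop:nowap}, where $B \equiv 1 - 2(1-v)^n + (1-2v)^n$, and solving for $\E_\theta[\delta(Y)]$ yields
\[
\E_\theta[\delta(Y)] \le \alpha\left(1 + \frac{1/B - 1}{w_n(\theta;\theta)}\right).
\]
It then remains to show that the excess term $\tfrac{1/B-1}{w_n(\theta;\theta)}$ is at most $\eta_n(\epsilon) = 3.125\,n^{1.5}(1-\epsilon)^n$.

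For the numerator, $1/B - 1 = (1-B)/B$ with $1 - B = 2(1-v)^n - (1-2v)^n$. Since $\theta\in\Theta_1$ forces $v = \min(\theta_c,\theta_d)/n \le 1/2$, we have $(1-2v)^n \ge 0$, so $1 - B \le 2(1-v)^n \le 2(1-\epsilon)^n$, which supplies the exponential decay; the denominator $B$ is bounded below away from zero (e.g.\ via the $1/B\le 2.51$ bound already established in \Cref{prop:nowap}). For the remaining factor, $w_n(\theta;\theta) = \P(\vartheta = \theta)/\P(\vartheta\in\Theta_1) \ge \P(\vartheta=\theta)$, where $\vartheta\sim\mathrm{Multinomial}(n,\theta/n)$. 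The key observation is that $\theta$ is exactly the mean of $\vartheta$, so $\P(\vartheta=\theta)$ is a central multinomial probability; a Stirling estimate of the multinomial mass function at its center gives $\P(\vartheta=\theta) \ge c\, n^{-3/2}$ for an explicit constant $c$. Here, with $k'\le 4$ nonempty categories the central mass decays like $n^{-(k'-1)/2}$, and the slowest (worst) case $k'=4$ produces the $n^{-3/2}$ rate and hence the $n^{1.5}$ factor. Multiplying the $O((1-\epsilon)^n)$ numerator by the $O(n^{1.5})$ reciprocal of this lower bound gives the claimed form.

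The main obstacle is the quantitative Stirling step: one needs an explicit lower bound on $\P(\vartheta = \theta)$ with a sharp enough constant, holding uniformly over all $\theta\in\Theta_1$ with $v\ge\epsilon$ and over the number of nonempty categories $k'$ (so that an AM--GM maximization of $\prod_t \theta_t$ subject to $\sum_t\theta_t = n$ is invoked to control the Stirling product). Tracking these constants together with the bounds on $1-B$ and $B$ is what pins down the explicit value $3.125$; the structural inequalities above are routine, and essentially all of the real work is in the careful bookkeeping of the Stirling and exponential constants.
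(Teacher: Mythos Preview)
Your approach is correct and in fact lands on exactly the same key inequality as the paper, $\E_\theta[\delta(Y)] \le \alpha\bigl(1 + \pi(\Theta_0;p_1)/\pi(\theta;p_1)\bigr)$ with $p_1=\theta/n$, but by a genuinely more direct route. The paper casts the maximum-power problem as a linear program and exhibits a dual feasible pair $(\lambda,\mu)$ built from the multinomial weights $\pi(\cdot;p_1)$ and $\pi(\cdot;p_0)$; you instead decompose the WAP, lower-bound the off-center terms by $\alpha$ via unbiasedness, and solve for the center term. Your argument is shorter and avoids the duality machinery entirely, while the paper's version makes transparent exactly how the unbiasedness constraints (the $\mu$ multipliers) and size constraints (the $\lambda$ multipliers) trade off.

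One bookkeeping point to tighten: as you have written it, you use the weakened bound $\mathrm{WAP}\le \alpha/B$ with $B = 1 - 2(1-v)^n + (1-2v)^n$ and separately bound $w_n(\theta;\theta)\ge \pi(\theta;p_1)$. Doing both loses a factor of $Z_1/B$ and forces you to invoke $1/B\le 2.51$, which yields a constant around $7.8$ rather than $3.125$. The fix is to keep the sharper intermediate form $\mathrm{WAP}\le \alpha/Z_1$ from the proof of \Cref{prop:nowap}, where $Z_1 = \P(\vartheta\in\Theta_1)$ is the exact normalizer. Then, since $w_n(\theta;\theta) = \pi(\theta;p_1)/Z_1$, the $Z_1$'s cancel and you get
\[
\frac{1/Z_1 - 1}{w_n(\theta;\theta)} = \frac{1-Z_1}{\pi(\theta;p_1)} = \frac{\pi(\Theta_0;p_1)}{\pi(\theta;p_1)},
\]
after which the paper's bounds $\pi(\Theta_0;p_1)\le 2(1-\epsilon)^n$ and $\pi(\theta;p_1)\ge 0.64\,n^{-1.5}$ give exactly $2/0.64 = 3.125$.
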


In particular, \Cref{prop:unbiased} implies that along any sequence of finite populations with $v_n = \min(\theta_{d,n}, \theta_{c,n})/n \ge \epsilon$, the power of any unbiased test is $(1+o(1)) \alpha$ as $n \rightarrow \infty$.

The proof of \Cref{prop:unbiased} casts the problem of maximizing $E_{\theta}[\delta(Y)]$
subject to unbiasedness and size control as a linear program. The dual of this
program---whereby any feasible value implies an upper bound for the power of
$\delta$---involves choosing certain weights over the null and the alternative. The weights
$w_n$ defined in \Cref{prop:nowap} turn out to enable a nontrivial upper bound of the primal value.

\section{Bayesian updating}

We next ask whether Bayesians update about whether $\theta\in\Theta_0$. The following result shows that \emph{some} Bayesians update, but there exist Bayesians who find the data totally uninformative about whether monotonicity holds, in the sense that their posterior on $\theta \in \Theta_0$ is equal to their prior almost surely.

\begin{restatable}{prop}{bayescombined}
\label{prop:bayescombined}
\emph{(Bayesian updating)}
\begin{enumerate}
    \item 
    Some Bayesians update: there exists a prior $\pi$ with $\pi
    (\theta \in \Theta_0) \in (0,1)$ such that $\pi(\theta \in \Theta_0 \mid Y) \neq \pi(\theta \in \Theta_0)$ with
    positive $\pi$-probability.

    \item
    There exist (nontrivial) Bayesian priors over $\theta$ that never update about the probability that monotonicity holds: if $n$ is even, then for any $c \in (0,1)$, there exists a prior distribution $\pi$ over $\theta$ such that $\pi(\theta \in \Theta_0 \mid Y) = \pi(\theta \in \Theta_0) =c$, $\pi$-almost surely.
\end{enumerate}
\end{restatable}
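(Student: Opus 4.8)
The plan is to handle the two parts separately, using \Cref{prop:id} for part (1) and the earlier result that every test has trivial power against some alternative for part (2).

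For part (1), I would take a two-point prior placing mass $\pi_0 \in (0,1)$ on some $\theta_0 \in \Theta_0$ and mass $1-\pi_0$ on some $\theta_1 \in \Theta_1$ (both nonempty since $1 < n_1 < n$), so that $\pi(\Theta_0) = \pi_0 \in (0,1)$. By \Cref{prop:id}, $f_{\theta_0} \neq f_{\theta_1}$, so there is a point $y^\ast$ with $f_{\theta_0}(y^\ast) \neq f_{\theta_1}(y^\ast)$; the two values cannot both vanish, so $y^\ast$ carries positive prior predictive probability. Bayes' rule gives $\pi(\Theta_0 \mid y^\ast) = \pi_0 f_{\theta_0}(y^\ast) / \left( \pi_0 f_{\theta_0}(y^\ast) + (1-\pi_0) f_{\theta_1}(y^\ast) \right)$, and a one-line computation shows this equals $\pi_0$ only if $f_{\theta_0}(y^\ast) = f_{\theta_1}(y^\ast)$. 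Hence the posterior differs from the prior at $y^\ast$, which occurs with positive probability.

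For part (2), I would first reduce the never-updating requirement to a statement about mixtures of the $f_\theta$. Writing a candidate prior as $\pi = c\,\pi_0 + (1-c)\,\pi_1$ with $\pi_0$ supported on $\Theta_0$ and $\pi_1$ on $\Theta_1$, Bayes' rule gives $\pi(\Theta_0 \mid y) = c\, m_0(y) / \left( c\, m_0(y) + (1-c)\, m_1(y) \right)$, where $m_0 = \sum_{\theta \in \Theta_0} \pi_0(\theta) f_\theta$ and $m_1 = \sum_{\theta \in \Theta_1} \pi_1(\theta) f_\theta$ are the two conditional prior predictives. Since each is a probability mass function, the posterior equals $c$ at every $y$ in the support if and only if $m_0 = m_1$. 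Thus it suffices to exhibit probability distributions $\pi_0$ on $\Theta_0$ and $\pi_1$ on $\Theta_1$ whose $f$-mixtures coincide---equivalently, to show that the convex hulls $\mathrm{conv}\{f_\theta : \theta \in \Theta_0\}$ and $\mathrm{conv}\{f_\theta : \theta \in \Theta_1\}$ intersect.

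Establishing this intersection is the step I expect to be the main obstacle; I would argue by contradiction using the trivial-power result rather than by explicit construction, since a naive attempt to match a single superpopulation pair $g_{p_0} = g_{p_1}$ fails a nonnegativity condition (a multinomial mixing distribution for $p_0 \in \mathcal P_0$ is concentrated on only the $n_c = 0$ or only the $n_d = 0$ part of $\Theta_0$ and cannot dominate the $\Theta_1$-spread of the alternative). Both hulls are compact polytopes with finite vertex sets, so if they were disjoint a strictly separating hyperplane would yield a function $a(\cdot)$ on $\supp(Y)$ with $\sup_{\theta \in \Theta_0} \E_\theta[a(Y)] < \inf_{\theta \in \Theta_1} \E_\theta[a(Y)]$. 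Applying an increasing affine map sending the range of $a$ into $[0,1]$, and if necessary mixing with the constant test $\tfrac{1}{2}$ to force the resulting size strictly inside $(0,1)$, produces a genuine level-$\alpha$ test whose power strictly exceeds its size at \emph{every} alternative. This contradicts the proposition that every size-$\alpha$ test has trivial power against some alternative, so the hulls must intersect. Given a common point I obtain $\pi_0$ on $\Theta_0$ and $\pi_1$ on $\Theta_1$ with $m_0 = m_1 =: m$; for any $c \in (0,1)$ the prior $\pi = c\,\pi_0 + (1-c)\,\pi_1$ satisfies $\pi(\Theta_0) = c$ (as $\Theta_0, \Theta_1$ are disjoint) and, by the reduction above, $\pi(\Theta_0 \mid y) = c$ for every $y$ with $m(y) > 0$, hence $\pi$-almost surely. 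This prior is nontrivial because it places positive mass on both $\Theta_0$ and $\Theta_1$, which completes part (2).
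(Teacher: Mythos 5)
Part 1 of your proposal is essentially identical to the paper's proof: a two-point prior on some $\theta_0\in\Theta_0$ and $\theta_1\in\Theta_1$, with Bayes' rule and $f_{\theta_0}\neq f_{\theta_1}$ (from \Cref{prop:id}) forcing the posterior to move. For part 2 you take a genuinely different route. Your reduction of never-updating to exhibiting $\pi_0$ on $\Theta_0$ and $\pi_1$ on $\Theta_1$ with identical prior predictives---equivalently, to showing $\mathrm{conv}\{f_\theta:\theta\in\Theta_0\}\cap\mathrm{conv}\{f_\theta:\theta\in\Theta_1\}\neq\emptyset$---is correct, and you obtain the intersection non-constructively: if the two polytopes were disjoint, strict separation of compact convex sets plus an affine rescaling (and a mixture with a constant test to force the size into $(0,1)$) would produce a level-$\alpha$ test whose power exceeds $\alpha$ at every alternative, contradicting the trivial-power proposition. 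This is sound, and it cleanly exposes a convex-duality equivalence between ``no test is uniformly powerful'' and ``some prior never updates.'' The paper instead derives both facts from one explicit construction (\Cref{lem: there exist equiv priors on the null and alt}): multinomial priors induced by $p=(0.5,0.5,0,0)$ and $p=(0,0,0.5,0.5)$, conditioned on parity-selected sub-supports whose $Y$-supports are shown to coincide via a lattice lemma. The explicit route yields concrete priors and is self-contained; your route is shorter and more conceptual, but note that in the paper's architecture the trivial-power proposition is itself proved from that same lemma, so your argument reverses the paper's logical order rather than giving an independent derivation of the underlying geometric fact. Your aside about why matching a single superpopulation pair $g_{p_0}=g_{p_1}$ fails (the induced mixing weights on $\Theta_0$ can go negative) is also correct and is precisely why the paper needs the conditioning trick.
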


We note that the conclusion in part 1 that \emph{some} Bayesian updates is rather weak. In fact, even in the superpopulation setting, there is \emph{some} Bayesian who updates: consider, for example, the Bayesian who believes that there are defiers if and only if the average treatment effect is larger than 0.5; this Bayesian updates about the validity of monotonicity based on the information in the data about the average treatment effect. However, the updating in the design-based setting is slightly less trivial: a Bayesian in fact updates about the relative probability of two type vectors $\theta_0, \theta_1$ that imply the same marginal distributions of $y(0)$ and $y(1)$ in the finite population, whereas this is not true in the superpopulation setting. 

Part 2 of \Cref{prop:bayescombined} implies that some Bayesians do not find the data at all informative about whether monotonicity holds. Hence, the data will not be persuasive to every audience member. This also suggests that Bayesian inference in this setting, as considered in \citet{ding_model-free_2019} and \citet{christy_counting_2025}, for example, will necessarily be sensitive to the choice of prior: for the class of priors in the second part of \Cref{prop:bayescombined}, posterior statements about monotonicity simply match the priors. 

The existence of a prior that does not update also implies that any binary classifier of whether $\theta \in \Theta_0$ does no better than random guessing for some parameter value $\theta$. Specifically, given $a \in \{0,1\}$, let $\ell(a,\theta) = 1\{a \neq 1\{\theta \in \Theta_0 \} \}$ be an indicator for whether $a$ mis-classifies whether $\theta \in \Theta_0$ (i.e. zero-one classification loss). Let $c: \mathcal{Y} \to [0,1]$ be a possibly-randomized classifier that sets $a=1$ with probability $c(Y)$, and let $R(c,\theta) = E_\theta[ c(Y)\ell(1,\theta) + (1-c(Y) \ell(0,\theta)]$ be its average classification error at parameter $\theta$ (risk under 0--1 loss).

\begin{restatable}{cor}{trivialclassification}
 \label{cor:trivialclassification}
Suppose $n$ is even. Then $\inf_{c(\cdot)} \sup_{\theta} R(c,\theta) = 0.5$.         
\end{restatable}

\noindent Note that a trivial classifier that guesses randomly ($c(y) = 0.5,$ $\forall y$) achieves a 0.5 mis-classification rate. \Cref{cor:trivialclassification} implies that \emph{every} classifier does no better than this at some parameter value $\theta$ (when $n$ is even). This follows from Part 2 of \Cref{prop:bayescombined}: there exists a prior that never updates about whether $\theta \in \Theta_0$, and hence Bayes risk under this prior must be trivial. \citet{christy_counting_2025} consider the maximum-likelihood estimator $\hat\theta^{\text{MLE}}$ and say that it ``provides evidence in favor of monotonicity'' if $\min(\hat\theta^{\text{MLE}}_d, \hat\theta^{\text{MLE}}_c) = 0$. \Cref{cor:trivialclassification} implies that this classification of whether the data support monotonicity does no better than random guessing at some parameter value.%
\footnote{In fact, in some cases, this classification rule does worse than random guessing. Consider \Cref{ex:2 units} above with $n=2$. Note that with 1 complier and 1 defier, the support of $Y$ is $\{(0,0),(1,1)\}$, with each support point occurring with probability $1/2$. However, $Y=(1,1)$ with probability 1 if there are 2 always-takers, and $Y=(0,0)$ with probability 1 if there are two never-takers. Thus, the MLE never corresponds to having 1 complier and 1 defier, so worst-case misclassification error is 1. Moreover, we show in \Cref{prop:mleconvergencesimpler} below that the MLE converges in probability to a limit where at least one type has zero share, even if all types of have positive share in the population.}
Moreover, we cannot ``fix'' this undesirable property by choosing a different classification rule.

Part~2 of \Cref{prop:bayescombined} and \Cref{cor:trivialclassification} focused on the
case where $n$ is even. In the appendix, we show that when $n$ is odd, there exist
non-degenerate priors that \emph{minimally} update in the sense that the expected change
between the prior and posterior is small: \Cref{prop:bayesupdateodd} shows that there
exists a prior such that \[ E_{Y \sim \pi_Y}[ | \pi(\theta \in \Theta_0 \mid Y) -
\pi(\theta \in \Theta_0) |] = O(2^{-n}), \] where $\pi_Y$ is the distribution of $Y$
induced by prior $\pi$. This implies \Cref{cor:classification-odd}, which states that when
$n$ is odd, a classifier of whether $\theta \in \Theta_0$ has worst-case
mis-classification error at least $0.5 - O(2^{-n})$.

\section{Extension: Estimating the extent to which monotonicity is violated}

We have shown that despite $\theta$ being formally identified, there is limited ability in practice to test whether monotonicity is violated, i.e. to test whether $v = \frac{1}{n} \min(\theta_d,\theta_c) > 0$. We now show that it is also difficult to learn about the \emph{extent} of the monotonicity violation---i.e. it is difficult to estimate $v$ (not just determine whether it is zero), and likewise, it is difficult to estimate the full type vector $\theta$. Our first result provides lower bounds on the achievable mean-squared-error (MSE) for estimating the monotonicity violation $v$ and the full vector $\theta$.

\begin{restatable}{prop}{estimation}
\label{prop:estimation}
\emph{(Estimation)}
\begin{enumerate}
    \item \emph{(Limited learning about the monotonicity violation)}
        Let $v = \tfrac{1}{n} \min(\theta_d,\theta_c)$ be the size of the monotonicity violation, and $\hat{v}$ an estimator of $v$. Then
        \begin{align*}
            \max_{\theta \in \Theta} E_\theta [ (\hat{v} - v )^2] \geq \frac{1}{16}  \left(1 - \frac{1}{\sqrt{n}}\right)^2.
        \end{align*}

    \item \emph{(Limited learning about full type proportions)}
        Let $\hat{\theta}$ be an estimator of $\theta$.
        Then
        \begin{align*}
            \max_{\theta \in \Theta} E_\theta\bigg[ \big\| \tfrac{1}{n} (\hat{\theta} - \theta) \big\|^2 \bigg] \geq  \frac{1}{4} + \frac{1}{4 n}.
        \end{align*}
    \end{enumerate}
\end{restatable}

The first part of the proposition shows that we cannot estimate the monotonicity violation $v$ much better than by the trivial guess of $\hat{v} = \tfrac{1}{4}$, which achieves a worst-case risk of $\tfrac{1}{16}$. Indeed, the proof of this result establishes a slightly tighter bound that shows, for example, that the minimax risk for estimating $v$ is at least 74\% that of the trivial estimator for $n \geq 30$ and at least 85\% for $n \geq 100$. The second part of the proposition shows that the minimax MSE for estimating the type proportions $\frac{\theta}{n}$ is always at least $1/4$. We note that the mid-point of the (sample) Fr\'echet--Hoeffding bounds has worst-case MSE converging to $1/4$ as $n \to \infty$, and thus is asymptotically minimax; this suggests that in reasonably large samples we cannot do much better than guessing the Fr\'echet--Hoeffding midpoint.

Interestingly, while the estimator that takes the midpoint of the Fr\'echet--Hoeffding bounds is asymptotically minimax, the maximum likelihood estimator (MLE), as suggested by \citet{christy_counting_2025}, can actually be shown to converge in probability to one of the Fr\'echet--Hoeffding endpoints. This implies that the large-$n$ worst-case MSE of the MLE is worse than the minimax bound given above. Specifically, let
\[
\hat{\theta}^{\text{MLE}} \in \arg\max_{\theta \in \Theta} f_\theta(Y_T, Y_U)
\]
be a maximum-likelihood estimator (the way ties are broken is immaterial for our result). The following result shows that the MLE estimate of the fraction of types converges to a specific endpoint of the Fr\'echet--Hoeffding bounds.

\begin{restatable}[Probability limit of MLE]{prop}{mleconvergencesimpler}
\label{prop:mleconvergencesimpler}
Consider a sequence of type counts $\theta_n$ with $\tfrac{1}{n} \theta_n \rightarrow p
\in [0,1]^4$ along the sequence, and let $q_U = p_{at} + p_{d}$, $q_T = p_{at} + p_c$.
Assume that $n_T/n \in [\lambda, 1-\lambda]$ for some fixed $\lambda \in (0, 1/2)$. 
Suppose without loss of generality that $q_T \geq q_U$ (so the treatment
effect is positive), and define the limiting Fr\'echet--Hoeffding bounds by $$p^L := (q_U, 1{-}q_T, 0, q_T {-} q_U),  \quad p^U := (q_U{-}v_{max}, 1{-}q_T{-}v_{max}, v_{max}, q_T {-} q_U {+} v_{max}),$$ for $v_{max} = \min(q_U,1{-}q_T)$. If $0 {<} q_U,q_T {<} 1$, $q_U, q_T \neq \tfrac{1}{2}$, $q_U \neq q_T$, and $q_U {+} q_T \neq 1$, then
\[
\frac{1}{n}\hat{\theta}^{\mathrm{MLE}}
\pto  p^* = p^*(q) =
\begin{cases}
p^L,
& \text{if } q_T, q_U > \tfrac{1}{2} \text{ or } q_T, q_U < \tfrac{1}{2}, \\[6pt]
p^U,
& \text{if } q_U < \tfrac{1}{2} < q_T.
\end{cases}
\]
Under either limit, $\min(p^*_{at},p^*_{nt},p^*_d,p^*_c) = 0$.
\end{restatable}

\Cref{prop:mleconvergencesimpler} establishes that as $n \to \infty$, the MLE estimator converges in probability to one of the Fr\'echet--Hoeffding bounds. Which of the bounds it converges to depends only on $q_T,q_U$, the asymptotic shares of individuals with $Y(1)=1$ and $Y(0)=1$, respectively. If these shares are on opposite sides of $\frac{1}{2}$, then the MLE converges to the Fr\'echet--Hoeffding upper bound (under which either always-takers or never-takers have zero share) while if they are on the same side, it converges to the Fr\'echet--Hoeffding lower bound (under which either compliers or defiers have zero share).%
\footnote{
    This pattern is in line with the empirical from Figure B.1 in \citet{christy_counting_2025}, which computes and visualizes the maximum-likelihood estimator for $n \in \{50,200\}$ with $n_T = n_U = \tfrac{n}{2}$. The conditions on $q$ ensure we are not on the transition boundary where the MLE may not have a unique limit.}
Thus, whether or not the MLE suggests that there is a violation of monotonicity asymptotically depends only on the \emph{marginal} distributions of the potential outcomes, as formalized in the following corollary.

\begin{restatable}{cor}{mleviolation}
\label{cor:mleviolation}
\emph{(MLE for monotonicity violation)}
Denote by $\hat{v}^{\text{MLE}} = \tfrac{1}{n}\min(\hat{\theta}^{\text{MLE}}_d, \hat{\theta}^{\text{MLE}}_c)$ the estimate of the monotonicity violation implied by $\hat{\theta}^{\text{MLE}}$.
Then, under the conditions of \Cref{prop:mleconvergencesimpler},
\begin{align*}
    \hat{v}^{\text{MLE}} \xrightarrow{p} v^*(q_T,q_U)
    =
    \begin{cases}
        0, & \text{if } q_T, q_U > \tfrac{1}{2} \text{ or } q_T, q_U < \tfrac{1}{2}, \\
        v_{max} > 0, & \text{if } q_U < \tfrac{1}{2} < q_T.
    \end{cases}
\end{align*}

\end{restatable}

For example, 
if the average treatment effect of $D$ on $Y$ is larger than $\tfrac{1}{2}$, then $q_U < \tfrac{1}{2} < q_T$, and the maximum-likelihood estimator assigns positive shares to both compliers and defiers in the limit.
This implies, counterintuitively, that the MLE will asymptotically suggest that a treatment negatively impacts some people whenever its average effect is sufficiently positive.
Moreover, \Cref{prop:mleconvergencesimpler} implies that the MLE asymptotically assigns zero share to one of the types even if all four types are present in the population. With large $n$, the presence or absence of a particular type in the MLE is thus purely a function of the marginals and does not imply that any type is actually present or absent.

The above results also imply that the worst-case risk of the estimators $\hat{\theta}^{\text{MLE}}$ of type counts and $\hat{v}^{\text{MLE}}$ of the monotonicity violations are asymptotically worse than the minimax bounds.
Specifically, the worst-case MSEs have limits
\begin{align*}
    \liminf_{n\to\infty}\max_{\theta \in \Theta} E_\theta\left[(\hat{v}^{\text{MLE}} -
    v)^2\right] &\ge \tfrac{1}{4},
    &
    \liminf_{n\to\infty}\max_{\theta \in \Theta} E_\theta\left[\big\|\tfrac{1}{n}(\hat{\theta}^{\text{MLE}} -
    \theta)\big\|^2\right] &\ge 1
\end{align*}
that are at least four times those of
always guessing $\hat{v} = \tfrac{1}{4}$ for $v$ ($\tfrac{1}{4}$ versus $\tfrac{1}{16}$)
and of
guessing the midpoint of the (sample) Fr\'echet--Hoeffding bounds for $\theta$ ($1$ vs $\tfrac{1}{4}$).

The result in \Cref{prop:mleconvergencesimpler} was anticipated by \citet{copas_randomization_1973}, who observed that in large samples, the likelihood is approximately Gaussian with a mean vector that depends on $\theta$ only through the sums $\theta_{at} + \theta_d$ and $\theta_{at} +\theta_c$. Treating the normal approximation as exact, \citet{copas_randomization_1973} then argued that the MLE would land at one of the Fr\'echet--Hoeffding endpoints. \Cref{prop:mleconvergencesimpler} turns this heuristic into an exact statement, establishing convergence in probability and identifying which endpoint is selected. Specifically, the proof formalizes the extent to which the probability mass function (PMF) $f_\theta(y_T, y_U)$ for the vector $(Y_T, Y_U)$ can be approximated by the PMF of the discrete normal approximation
\begin{align}
\label{eqn:Normalapprox}
    \colvec{2}{Y_T}{Y_U}
    &\stackrel{\approx}{\sim}
    \mathcal{DN}
    \left(
        \mu_\theta,
        \Sigma_\theta
    \right),
\end{align}
where 
\begin{align*}
    \mu_\theta &=
    \tfrac{1}{n}
    \begin{psmallmatrix}
            n_T (\theta_{at} + \theta_c) \\
            n_U (\theta_{at} + \theta_d)
    \end{psmallmatrix},
    &
    \Sigma_\theta &=
    \tfrac{n_T n_U}{n^2 (n-1)}
    \begin{psmallmatrix}
            (\theta_{at} + \theta_c)(n - \theta_{at} - \theta_c) & (\theta_{at} + \theta_c)(\theta_{at} + \theta_d) - n \theta_{at} \\
            (\theta_{at} + \theta_c)(\theta_{at} + \theta_d) - n \theta_{at} & (\theta_{at} + \theta_d)(n - \theta_{at} - \theta_d)
    \end{psmallmatrix}
\end{align*}
are the mean and variance of $\begin{psmallmatrix} Y_T \\ Y_U \end{psmallmatrix}$ under randomization, and $\mathcal{DN}
    (\mu, \Sigma)$ is the distribution on $\mathbb{Z}^2$ whose PMF at $x \in \mathbb{Z}^2$ is proportional to the PDF of $\Norm(\mu, \Sigma)$ at $x$. 

This normal approximation clarifies the dependence of the likelihood on the full vector
$\theta$ of types. The mean and variance of each of the $Y_T,Y_U$ only depend on the
\emph{marginal} counts $\theta_{at} + \theta_c$, $\theta_{at} + \theta_d$. Only the
covariance provides separate information about the individual type shares. To the extent
that the normal approximation is accurate, learning about the individual type shares is
equivalent to the problem of learning the covariance of a bivariate (discretized) normal
distribution \emph{from a single draw}.

In order to derive the asymptotic distribution of the maximum-likelihood estimator, it is
not sufficient to argue that the approximation in \eqref{eqn:Normalapprox} merely holds in
a convergence-in-distribution sense.
Indeed, our proof involves showing that the approximation holds \emph{uniformly} over an
appropriate set of type counts $\theta$. In doing so, we show that our experiment based on
total randomization and the discretized normal experiment based on 
\eqref{eqn:Normalapprox} are asymptotically equivalent in a limit-of-experiment sense,
which we make precise in \Cref{prop:lecam} in the supplemental appendix.

\section{Conclusion}

We study what a completely randomized experiment reveals about finite-population monotonicity. We show that from the design-based perspective, the type counts $\theta$ in the finite population are in fact identified. However, the extent to which we can feasibly learn about violations of monotonicity is severely limited: frequentist tests generically have poor power, and some Bayesians never update about whether the null is true. Likewise, estimates of the magnitude of the violation of monotonicity or the full type vector are generically poor. Thus, formal identification translates to little, if any, practical learning about monotonicity. These results highlight that conclusions about identification may differ depending on whether one adopts a sampling-based versus design-based perspective, and that studying the properties of frequentist tests and Bayesian updating may provide a more realistic assessment of the extent to which learning is possible in design-based settings. An interesting avenue for future research is to explore whether similar issues arise in other design-based causal-inference problems. 

\newpage

\bibliographystyle{aer}
\bibliography{Bibliography}

\newpage

\appendix
\renewcommand{\sectionname}{}
\section{Appendix (proofs)}

\propid*
\begin{proof}
For the first claim, it suffices to construct a mapping from $f_\theta$ to $\theta$. 
Fix $q = n_T/n$ as the treatment probability. 
    Observe that, by linearity of expectation, \[
        E_{\theta}[(Y_T, Y_U)'] = (n_{at} q + n_c q, (1-q) n_{at} + (1-q) n_d)'. 
    \]
    Second, label underlying units $i=1,\ldots,n$ and collect always takers in $N_{at}
    \subset [n]$, compliers in $N_c \subset [n]$, and defiers in $N_d \subset [n]$. Let
    $N_T \subset [n]$ collect the (random) set of treated individuals. 
    Observe that \begin{align*}
    &E_{\theta}[Y_TY_U] = \sum_{i \neq j, i\in N_{at} \cup N_c, j \in N_{at} \cup N_d} P
    (\text{$i \in N_T$, $j \not \in N_T$}) \tag{Linearity of expectations} \\
    &= |\{(i,j) \,:\, i \neq j, i\in N_{at} \cup N_c, j \in N_{at} \cup N_d \} | \cdot P
    (\text{$1 \in N_T$, $2 \not \in N_T$}) \tag{Symmetry}\\
    &= \bk{(n_{at} + n_c) (n_{at} + n_d) - n_{at}} P
    (\text{$1 \in N_T$, $2 \not \in N_T$}).
    \end{align*}
    Thus,
    for $r = P
    (\text{$1 \in N_T$, $2 \not \in N_T$}) = \tfrac{n_T n_U}{n (n-1)}$,
    \[
        (n_{at} + n_c)(n_{at} + n_d) - n_{at} = \frac{E_{\theta}[Y_T Y_U]}{r} 
    \]
    is a functional of $f_\theta$. Letting $(\mu_T,\mu_U, \mu_{TU}) := (E_\theta[ Y_T] , E_\theta[Y_U], E_\theta[Y_T Y_U])$, we see that 
\begin{equation*}
    \frac{\mu_T}{q} = n_{at} + n_c, \hspace{1cm} \frac{\mu_U}{1-q} = n_{at} + n_d, \hspace{1cm} \frac{\mu_{TU}}{r} = \frac{\mu_T}{q}\,\frac{\mu_U}{1-q} - n_{at}.
\end{equation*}
It follows that
\begin{equation*}
    n_{at} = \frac{\mu_T}{q}\,\frac{\mu_U}{1-q} - \frac{\mu_{TU}}{r},  \hspace{1cm } n_c = \frac{\mu_T}{q} - n_{at}, \hspace{1cm} n_d = \frac{\mu_U}{1-q} - n_{at}.
\end{equation*}

\noindent Since $(\mu_T,\mu_U, \mu_{TU})$ is a function of $f_\theta$, we have therefore shown that there is an inverse mapping from $f_\theta$ to $\theta$, and hence the mapping $\theta \mapsto f_{\theta}$ must be one-to-one, as needed. 

For the second claim, observe that the superpopulation data-generating process can be
    represented as: 
    \begin{enumerate}
        \item Fix index $i = 1,\ldots, n$. The first $n_T$ units are treated.
        \item For each unit $i$, sample her type from $\mathrm{Multinomial}(1, p).$
        \item The counts $Y$ are then functions of type counts in the first $n_T$ units
        and the next $n-n_T$ units. 
    \end{enumerate}
    We can check that under this process, $\theta \sim \mathrm{Multinomial}(n, p)$ and
    $Y \mid \theta \sim f_\theta$. But under this process, we have that $Y_T \sim \Bin
    (n_T, p_ {at} + p_c)$ and $Y_U \sim \Bin(n-n_T, p_{at}
    +
    p_d)$, and $Y_T \indep Y_U$. We thus see that the unconditional distribution of $Y$ depends on $p$ only through the sums $p_{at} + p_c$ and $p_{at} + p_d$. (\citet[][p. 472-3]{copas_randomization_1973} likewise shows that the likelihood depends only on these sums.) For any given $p \in \mathcal P_1$, observe that $p'$ defined by \[
        p'_{at} = p_{at} + \min(p_d, p_c) \quad p_{nt}' = p_{nt} + \min(p_d, p_c) \quad p_d' =
        p_d - \min(p_d, p_c) \quad p_{c}' = p_c - \min(p_d, p_c)
    \]
    generates observationally equivalent $(Y_T, Y_U)$. However, $\min(p'_d, p'_c) = 0$ and
    thus $p' \in \mathcal P_0$. 
\end{proof}

\prophaspower*

\begin{proof}
Suppose, without loss of generality, that $n_T \leq n_U$ (if not, we can adopt the same argument reversing the roles of $n_d$ and $n_c$). Let $\theta_1$ be the type with $n_d = n_T - 1$, $n_c = n-n_d $, and $n_{at}=n_{nt} = 0$. Note that $n_c = n-n_T +1 \geq 2$. Observe that the support of $Y$ under $\theta_1$ is $$S_{\theta_1}^Y = \{ (m,n_d-(n_T-m)) \,:\, m=1,..., \min(n_T, n_c) \},$$ with $(m,n_d-(n_T-m))$ corresponding to the realization of $Y$ when $m$ compliers are assigned to treatment. Observe that the points in $S_{\theta_1}^Y$ lie on an upward sloping line with slope of 1. That is, for any $y= (y_1,y_0)$ and $y' =(y_1',y_0')$ both in $S^Y_{\theta_1}$, we have that $y-y' = c \cdot (1,1)$. 

Now, for any $\theta_0 \in \Theta_0$, let $S^Y_{\theta_0}$ denote the support of $Y$ under $\theta_0$. We claim that $S^Y_{\theta_0} \not\subseteq S^Y_{\theta_1}$ for all
$\theta_0 \in \Theta_0$. 

\begin{enumerate}
    \item [Case 1] Suppose that $\theta_0$ corresponds to there only being
one type in the population. Then $S_{\theta_0}^Y$ is a singleton set with element either $
(0,0)$, $(n_T,0)$, $(0,n_U)$ $(n_T,n_U)$ depending on whether the lone type is $nt$, $c$, $d$ or $at$. It is
clear that $(0,0) \not\in S_{\theta_1}^Y$ and $(0,n_U) \not\in S_{\theta_1}^Y$ since the first coordinate of points in $S_{\theta_1}^Y$ is strictly positive by construction. Next, note that the only possible element of
$S_{\theta_1}^Y$ with first element equal to $n_T$ is $(n_T,n_d)$, corresponding to the
case where $m = n_T = \min(n_T,n_c)$. However, by construction we have that $0 < n_d =
n_T - 1 < n_U$, and thus $(n_T,n_d) \neq (n_T,0)$ and $(n_T,n_d) \neq (n_T,n_U)$.

 \item [Case 2] Suppose $\theta_0 \in \Theta_0$ that has positive numbers of at least two
 types.
Suppose towards contradiction that $S^Y_{\theta_0} \subseteq S^Y_{\theta_1}$. Consider any support point $(y_1,y_0)$ in $S^Y_{\theta_0}$. Note that $(y_1,y_0) = (\sum_i D_i y_i(1), \sum_i (1-D_i) y_i(0))$ for some choice of $D_i$ and a population such that $\{(y_i(1),y_i(0)) \}_{i=1}^n$ has the types with frequencies given by $\theta_0$. Since there are at least two types under $\theta_0$, there are distinct indices $j,k$ such that $D_j =1, D_k =0$ and individuals $j$ and $k$ are of different types, i.e. $(y_j(0),y_j(1)) \neq (y_k(0),y_k(1))$. Now consider the treatment assignment $\tilde{D}$ that swaps the treatment assignments of units $j$ and $k$ and leaves the other assignments unchanged, i.e. $\tilde{D}_j =0, \tilde{D}_k = 1$, and $\tilde{D}_i = D_i$ for $i \not\in \{j,k\}$. The realized outcome under treatment assignment $\tilde{D}$ is $(y_1',y_0') = (\sum_i \tilde{D}_i y_i(1), \sum_i (1-\tilde{D}_i) y_i(0))$, and thus $(y_1',y_0') \in S^{Y}_{\theta_0} \subseteq S^{Y}_{\theta_1}$. However, we have that
$$(y_1',y_0') - (y_1,y_0) = (y_k(1) - y_j(1), y_j(0) - y_k(0) ) .$$
\noindent Note that since $(y_j(0),y_j(1)) \neq (y_k(0),y_k(1))$, it follows that $(y_1',y_0') - (y_1,y_0) \neq 0$. Then since $(y_1,y_0),(y_1',y_0') \in S^Y_{\theta_1}$, it must be the case that $(y_1',y_0') - (y_1,y_0) = c \cdot (1,1)$ for an integer $c \neq 0$. It follows that 
\begin{align*}
 & y_k(1) - y_j(1) = c,
 &
 & y_j(0) - y_k(0) = c.
\end{align*}
\noindent Adding the two equations, we obtain that 
$$(y_k(1) - y_k(0)) - (y_j(1)-y_j(0)) = 2c \neq 0 .$$
\noindent Since the treatment effects $(y_k(1) - y_k(0))$ and $(y_j(1)-y_j(0))$ are each in $\{-1,0,1\}$, their difference can equal an even non-zero integer only if one is equal to 1 and the other is equal to $-1$. However, this contradicts $\theta_0$ satisfying monotonicity. It follows that $S^Y_{\theta_0} \not\subseteq S^Y_{\theta_1}$, as we wished to show. 

\end{enumerate}

Now, let $\delta^*(Y) = 1\{ Y \in S^Y_{\theta_1} \}$. Since $S^Y_
{\theta_0} \not\subseteq S^Y_{\theta_1}$ for all $\theta_0 \in \Theta_0$, it follows that
$E_{\theta_0}[\delta^*(Y)] < 1$ for all $\theta_0 \in \Theta_0$. Since $\Theta_0$ is
finite, we thus obtain that $\sup_{\theta_0 \in \Theta_0} E_{\theta_0}[\delta^*(Y)] < 1$.
However, by construction $E_{\theta_1}[\delta^*(Y)] = 1$. Hence, $\delta^*(Y)$ controls
size at level $\alpha^* = \sup_{\theta_0 \in \Theta_0} E_{\theta_0}[\delta^*(Y)] < 1$ and
has power of 1 against the alternative $\theta_1$. If $\alpha^* \leq \alpha$, then the proof
is complete by setting $\delta(Y) = \delta^*(Y)$ (and thus we have a non-randomized
test). If $\alpha^* > \alpha$, set $\delta(Y) = \tfrac{\alpha}{\alpha^*} \delta^*
(Y)$. Then by construction $\sup_{\theta_0 \in \Theta_0} E_{\theta_0}[\delta
(Y)] = \alpha$ and $E_{\theta_1}[\delta(Y)] = \tfrac{\alpha}{\alpha^*} > \alpha$. 
\end{proof}

\begin{lem} \label{lem: there exist equiv priors on the null and alt}
Suppose $n$ is even. Then there exist priors $\pi_A$ and $\pi_B$ on $\theta$
such
that $\pi_A(\theta \in \Theta_0) = 1$, $\pi_B(\theta \in \Theta_0) = 0$, and $\pi_A(Y{=}y)
= \pi_B(Y{=}y)$ for all $y \in \{0,\ldots,n_T\} \times \{0,\ldots, n_U\}$.     
\end{lem}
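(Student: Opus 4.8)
The plan is to recast the lemma as a statement about two polytopes of feasible data distributions and to show they intersect. For $j\in\{0,1\}$ let $M_j=\mathrm{conv}\{f_\theta:\theta\in\Theta_j\}$ be the set of marginal laws of $Y$ induced by priors supported on $\Theta_j$; since $\Theta$ is finite these are compact convex polytopes in the simplex over the grid $\{0,\dots,n_1\}\times\{0,\dots,n_0\}$. A prior $\pi_A$ with $\pi_A(\Theta_0)=1$ has marginal $\sum_\theta \pi_A(\theta)f_\theta\in M_0$, and conversely every point of $M_0$ is the marginal of such a prior; the analogous statement holds for $M_1$. Hence the lemma is equivalent to the assertion $M_0\cap M_1\neq\emptyset$: a common point yields the two marginal-matching priors $\pi_A,\pi_B$, which are automatically supported on $\Theta_0$ and $\Theta_1$ respectively.

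First I would suppose, for contradiction, that $M_0\cap M_1=\emptyset$. As $M_0$ and $M_1$ are disjoint nonempty compact convex sets in a finite-dimensional space, the separating hyperplane theorem yields a function $\phi:\supp(Y)\to\mathbb R$ and a scalar $\gamma$ with $E_\mu[\phi]\le\gamma$ for all $\mu\in M_0$ and $E_\mu[\phi]>\gamma$ for all $\mu\in M_1$. Because the extreme points of $M_j$ are the $f_\theta$ with $\theta\in\Theta_j$, this is exactly $\max_{\theta\in\Theta_0}E_\theta[\phi]<\min_{\theta\in\Theta_1}E_\theta[\phi]$. Next I would turn $\phi$ into a genuine test: an increasing affine image lands its values in $[0,1]$ and preserves the strict inequality, and then shrinking toward $1/2$ via $\delta=\tfrac{\epsilon}{2}+(1-\epsilon)\phi$ for small $\epsilon>0$ keeps the gap while forcing $E_\theta[\delta]\in(0,1)$ for every $\theta$. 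Setting $\alpha=\max_{\theta\in\Theta_0}E_\theta[\delta]\in(0,1)$, the test $\delta$ has size $\alpha$ yet $E_{\theta_1}[\delta]\ge\min_{\theta\in\Theta_1}E_\theta[\delta]>\alpha$ for every $\theta_1\in\Theta_1$. This contradicts the earlier proposition that any level-$\alpha$ test must satisfy $E_{\theta_1}[\delta]\le\alpha$ for some alternative $\theta_1\in\Theta_1$. Therefore $M_0\cap M_1\neq\emptyset$, and taking $\pi_A,\pi_B$ to represent a common point finishes the proof.

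The step requiring care, which I would treat as the main obstacle, is the passage from the abstract separating functional to a level-$\alpha$ test to which the trivial-power result applies: one must check that the normalization can be arranged so that the resulting size $\alpha$ lies strictly in $(0,1)$ (otherwise the cited proposition, which assumes $\alpha\in(0,1)$, does not apply) while keeping the strict separation intact, which the shrink-toward-$1/2$ device accomplishes. I would flag two further points. The argument does not visibly use the balance assumption $n_1=n_0$, since the trivial-power result holds for general designs; balance is presumably exploited either to produce an \emph{explicit} (rather than merely existential) pair $\pi_A,\pi_B$---for instance via the treatment/control relabeling symmetry, under which $f_{\sigma\theta}(y_T,y_U)=f_\theta(y_U,y_T)$ for the map $\sigma$ swapping compliers and defiers---or in the downstream construction of the non-updating prior. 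If a self-contained proof avoiding the trivial-power proposition is preferred, the same reduction shows it suffices to prove directly that no $\phi$ strictly separates $\{E_\theta[\phi]\}_{\theta\in\Theta_0}$ from $\{E_\theta[\phi]\}_{\theta\in\Theta_1}$, attackable by a limiting argument driving alternatives with a single complier and single defier to the monotone boundary; the $n=2$ case, where $\pi_A=\tfrac12\delta_{(n,0,0,0)}+\tfrac12\delta_{(0,n,0,0)}$ and $\pi_B=\delta_{(0,0,1,1)}$ both induce $Y$ uniform on $\{(0,0),(1,1)\}$, is a useful sanity check.
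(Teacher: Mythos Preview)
Your reduction to $M_0\cap M_1\neq\emptyset$ via the separating-hyperplane theorem is clean and the duality you identify is exactly right: the lemma and the trivial-power proposition are two faces of the same statement. The problem is that in this paper the implication runs the other way. The proof of the proposition that every level-$\alpha$ test has power at most $\alpha$ against some alternative \emph{uses} this lemma as its sole input: it takes the priors $\pi_A,\pi_B$ furnished by the lemma, observes that $E_{\pi_B}[\delta]=E_{\pi_A}[\delta]\le\alpha$, and concludes that some $\theta_1\in\supp(\pi_B)\subset\Theta_1$ has $E_{\theta_1}[\delta]\le\alpha$. Your argument therefore closes a circle rather than proving anything. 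Since the two statements are equivalent (as your own reduction makes clear), one of them must be established directly, and in the paper that is the lemma.

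The paper's direct proof is constructive and is where the balance assumption $n_1=n_0$ is actually used. One starts from the superpopulation priors $\tilde\pi_A=\mathrm{Multinomial}(n,(1/2,1/2,0,0))$ and $\tilde\pi_B=\mathrm{Multinomial}(n,(0,0,1/2,1/2))$, which have the same marginal for $Y$ since the superpopulation likelihood depends only on $p_{at}+p_c$ and $p_{at}+p_d$; one then conditions each on a carefully chosen sublattice of its support ($\tilde\Theta_0\subset\Theta_0$, $\tilde\Theta_1\subset\Theta_1$) so that (i) the $Y$-supports $S^Y_\theta$ are pairwise disjoint within each sublattice, and (ii) the two unions $S^Y_{\tilde\Theta_0}$ and $S^Y_{\tilde\Theta_1}$ coincide as subsets of the grid. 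Property (i) makes conditioning on $\theta\in\tilde\Theta_j$ the same as conditioning on $Y\in S^Y_{\tilde\Theta_j}$, and property (ii) then forces the conditioned marginals to agree. The geometric content is that the $S^Y_{\theta_0(n_{at})}$ are line segments of slope $-1$ and the $S^Y_{\theta_1(n_c)}$ are line segments of slope $+1$, and when $n_1=n_0$ one can pick parities so that the two families tile the same checkerboard. Your $n=2$ sanity check is the smallest instance of this construction. Your fallback sketch (``drive a single complier and defier to the monotone boundary'') does not obviously work in the finite population, since there is no continuous limit: the nearest alternative has $n_c=n_d=1$, and identification guarantees $f_{\theta_1}\neq f_{\theta_0}$ for every $\theta_0\in\Theta_0$, so closeness would have to be argued, not asserted.
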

\begin{proof}
Let $\tilde{\pi}_p$ denote the multinomial prior on $(n_{at},n_{nt},n_{d},n_{c})$ with
probabilities $p = (p_{at},p_{nt},p_{d},p_{c})$. That is, \[
(n_{at},n_{nt},n_{d},n_{c}) \sim \tilde\pi_p \sim \mathrm{Multinomial}(n, p).
\] 
Define $\tilde{\pi}_A = \tilde{\pi}_{(0.5,0.5,0,0)}$ and $\tilde{\pi}_B =
 \tilde{\pi}_{(0,0,0.5,0.5)}$. The prior $\tilde{\pi}_A$ arises from assuming that the
 units in the finite population were sampled i.i.d. from a superpopulation in which half
 of individuals are always-takers and the other half are never-takers. The prior
 $\tilde{\pi}_B$ arises analogously if the superpopulation is half compliers and half
 defiers. As argued in the proof to \Cref{prop:id}, the marginal distribution of $Y$ after (i) sampling $\theta \sim \mathrm{Multinomial}(1, p)$ and (ii) generating $Y \sim f_{\theta}$ depends on $p$ only through $p_{at}+p_c$ and $p_{at} + p_d$. It follows that the two priors imply the same unconditional distribution for $Y$, $\tilde\pi_A(Y{=}y) = \tilde\pi_B(Y{=}y)$ for all $y$.

Now, consider type counts $\theta_0(n_{at}) = (n_{at}, n-n_{at}, 0,0)$, which have positive numbers of only always-takers and never-takers. Observe that the support
of $Y$ under $\theta_0(n_{at})$ is
\begin{equation}
  S^Y_{\theta_0(n_{at})} = \{ (m,n_{at}-m) \,:\, m = \max(0, n_{at}-n_U) ,\ldots,\min(n_T,n_{at}) \}, \label{eqn:s0}  
\end{equation} where $(m,n_{at}-m)$ is the realization of $Y$ if
$m$ of the always-takers are selected for treatment. Hence the support points for $Y$ under
$\theta_0$ lie on a line with slope of $-1$ and intercept $n_{at}$. Letting $S^Y_{\theta_0}$
denote the support of $Y$ under $\theta_0$, we see that the $S^Y_{\theta_0}$ are disjoint
for all $\theta_0$ to which $\tilde{\pi}_A$ assigns positive support: \[
S^Y_{\theta_0(n_{at})} \cap S^Y_{\theta_0(n_{at}')} = \emptyset \text{ for $n_{at} \neq
n_{at}'$}
.\] Hence, for any
set
$\tilde{\Theta}_0$ of the form $\{ \theta_0(n_{at,1}), \ldots, \theta_0(n_{at,K}) \}$, we have that $$\tilde{\pi}_{A}(Y=y \mid \theta \in
\tilde{\Theta}_0) = \tilde{\pi}_A( Y=y \mid Y \in S_{\tilde{\Theta}_0}),$$ for
$S_{\tilde{\Theta}_0} = \bigcup_{\theta_0 \in \tilde\Theta_0} S^{Y}_{\theta_0}$.

Similarly, let $\theta_1(n_c) = (0,0,n-n_{c},n_{c})$, which has only compliers and defiers. The support points of $Y$ under
$\theta_1$ take the form 
\begin{equation}
    S^Y_{\theta_1(n_{c})} = \{(m, m + n_U - n_c) \,:\,  m=\max(0, n_c - n_U),\ldots,\min
(n_T,n_c) \} , \label{eqn:s1} 
\end{equation}
where $(m,n_U-(n_c-m)) = (m,m + n_U - n_c)$ corresponds to the realized outcomes when $m$ compliers are selected for treatment. These points lie
on an upward sloping line with slope 1 and intercept $n_U - n_c$. 
Letting $S^{Y}_{\theta_1}$ denote
the support of $Y$ under $\theta_1$, we again see that the $S^{Y}_{\theta_1}$ are
disjoint. Hence, for any set $\tilde{\Theta}_1$ of the form $\{ \theta_1(n_{c,1}), \ldots, \theta_1(n_{c,K}) \}$, we have that $$\tilde{\pi}_{B}(Y=y \mid
\theta \in \tilde{\Theta}_1) = \tilde{\pi}_B( Y=y \mid Y \in
S_{\tilde{\Theta}_1}),$$ for $S_{\tilde{\Theta}_1} = \bigcup_{\theta_1 \in \tilde\Theta_1}
S^{Y}_{\theta_1}$.

Now, let  
\begin{equation}
\tilde\Theta_0 \coloneqq \bigcup_{\substack{0\le n_{at}\le n\\ n_{at}\equiv n_T+1\, (\mathrm{mod}\ 2)}} \{\theta_0(n_{at})\} \subset \Theta_0    \label{eqn:defTildeTheta0}
\end{equation}
and
\begin{equation}
 \tilde\Theta_1 \coloneqq \bigcup_{\substack{1\le n_c\le n-1\\ n_c\ \mathrm{odd}}} \{\theta_1(n_c)\} \subset \Theta_1. \label{eqn:defTildeTheta1}   
\end{equation}
\Cref{lem:equivsupports} shows formally that \[
S_{\tilde{\Theta}_1} = S_{\tilde{\Theta}_0}. 
\]
\noindent An intuitive illustration of the proof is shown in \Cref{fig:supports-same} for the setting where $n_T=n_U=2$, in which case $S_{\tilde{\Theta}_1}$ corresponds to the union of the points on the two upward-sloping lines in orange, and $S_{\tilde{\Theta}_0}$ corresponds the unions of the points on the two downward-sloping lines in blue.

\begin{figure}[!ht]
    \centering
    \begin{tikzpicture}
    \foreach \x in {1,2,3} {
        \foreach \y in {1,2,3} {
            \fill (\x,\y) circle (2pt);
        }
    }

    \draw[->] (0,0) -- (4,0) node[anchor=north] {$Y_U$};
    \draw[->] (0,0) -- (0,4) node[anchor=east] {$Y_T$};

    \foreach \x in {0,1,2} {
        \node[anchor=north] at (\x+1,-0.1) {\x};
    }
    \foreach \y in {0,1,2} {
        \node[anchor=east] at (-0.1,\y+1) {\y};
    }

    \draw[blue, thick] (0.8,2.2) -- (2.2,0.8);
    \node[blue, anchor=north west] at (2.2,0.8) {$S^{Y}_{\theta_0(1)}$};

    \draw[blue, thick] (1.8,3.2) -- (3.2,1.8);

    \node[blue, anchor=north west] at (3.2,1.8) {$S^{Y}_{\theta_0(3)}$};

    \draw[orange, thick] (0.8,1.8) -- (2.2,3.2);

    \node[orange, anchor=south east] at (2.2,3.2) {$S^Y_{\theta_1(1)}$};

    \draw[orange, thick] (1.8,0.8) -- (3.2,2.2);

    \node[orange, anchor=south west] at (3.3,2.2) {$S^Y_{\theta_1(3)}$};
\end{tikzpicture}

    \caption{Illustration of the sets $S_{\tilde{\Theta}_0}$ and $S_{\tilde{\Theta}_1}$ with $n_T = n_U = 2$}
    \label{fig:supports-same}
\end{figure}
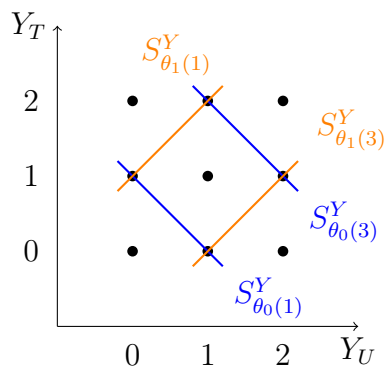

Define $\pi_A(\theta) := \tilde{\pi}_A(\theta \mid \theta \in \tilde{\Theta}_0)$ and
$\pi_B(\theta) := \tilde{\pi}_B(\theta \mid \theta \in \tilde{\Theta}_1)$. It is immediate
that $\pi_A(\theta \in \Theta_0) = 1$ and $\pi_B(\theta \in \Theta_1) = 1$ by
construction. Further, we have shown that \[
\pi_A(Y{=}y) = \tilde \pi_A(Y=y \mid Y \in S_{\tilde{\Theta}_0}) = \tilde \pi_B(Y=y \mid Y
\in S_{\tilde{\Theta}_1}) = \pi_B(Y{=}y) .
\qedhere
\]
\end{proof}

\begin{lem}
\label{lem:equivsupports}
Suppose $n$ is even. Let $\tilde{\Theta}_0$, $\tilde{\Theta}_1$ be as defined in \eqref{eqn:defTildeTheta0}-\eqref{eqn:defTildeTheta1}. Define
\[
C\coloneqq\bigl\{(a,b)\in\mathbb{Z}^2:\ 0\le a\le n_T,\ 0\le b\le n_U,\ a+b\equiv n_T+1\pmod 2\bigr\}
\]
to be the set of points in $\{0,\ldots,n_T\} \times \{0,\ldots,n_U\}$ whose sum has the opposite parity of $n_T$. Then
\[
S_{\tilde{\Theta}_0} = C =S_{\tilde{\Theta}_1}.
\]
\end{lem}

\begin{proof}
First, observe that since $n = n_U + n_T$ is even, we have that $n_U\equiv n_T\pmod 2$. 

\textbf{Step 1: $S_{\tilde{\Theta}_0} \subseteq C$.}
If $(a,b) \in S_{\tilde{\Theta}_0}$ then $(a,b)\in S^Y_{\theta_0(n_{at})}$ for some $n_{at}$ such that $n_{at}\equiv n_T+1\pmod 2$. It is immediate from the expression for $S^Y_{\theta_0(n_{at})}$ in \eqref{eqn:s0} that $a+b=n_{at} \equiv n_T+1 \pmod 2$. Moreover, by definition, $S^Y_{\theta_0(n_{at})}$ is the support of $Y$ under $\theta_0(n_{at})$ and thus $0 \leq a \leq n_T$ and $0 \leq b \leq n_U$. Thus, $(a,b)\in C$.

\medskip
\textbf{Step 2: $S_{\tilde{\Theta}_1} \subseteq C$.}
If $(a,b) \in S_{\tilde{\Theta}_1}$, then $(a,b)\in S^Y_{\theta_1(n_c)}$ for some odd $n_c$ such that $1 \leq n_c \leq n - 1$. It is immediate from the expression for $S^Y_{\theta_1(n_c)}$ in \eqref{eqn:s1} that $b-a=n_U-n_c$ and hence
\[
a+b\equiv b-a\equiv n_U-n_c \pmod 2,
\]
where the first $\equiv$ uses the fact that $2a \equiv 0 \pmod2$. Using the facts that $n_U\equiv n_T\pmod 2$ and $n_c$ is odd, we see that $n_U-n_c\equiv n_U-1\equiv n_T+1\pmod 2$. Further, by construction $S^Y_{\theta_1(n_c)}$ is the support of $Y$ under $\theta_1(n_c)$ and thus $0 \leq a \leq n_T$ and $0 \leq b \leq n_U$. Thus, $(a,b)\in C$.

\medskip
\textbf{Step 3: $C\subseteq S_{\tilde{\Theta}_0}$.}
Take $(a,b)\in C$ and set $n_{at}\coloneqq a+b$ and $m\coloneqq a$.
Then $n_{at}\equiv n_T+1\pmod 2$, and $0\le n_{at}\le n_T+n_U=n$.
We claim $(a,b)\in S^Y_{\theta_0(n_{at})}$.
Indeed, $b\le n_U$ implies $n_{at}-n_U=a+b-n_U\le a$, so $m=a\ge \max(0,n_{at}-n_U)$.
Also $a\le n_T$ and $a\le a+b = n_{at}$, so $m=a\le \min(n_T,n_{at})$.
Thus $(a,b)=(m,n_{at}-m)\in S^Y_{\theta_0(n_{at})}\subseteq S_{\tilde{\Theta}_0}$.

\medskip
\textbf{Step 4: $C\subseteq S_{\tilde{\Theta}_1}$.}
Take $(a,b)\in C$ and set
\[
 m\coloneqq a,\qquad n_c\coloneqq a+n_U-b.
\]
First, $0\le n_c\le n$ because $0\le a\le n_T$ and $0\le b\le n_U$ and $n_U + n_T =n$.
Second, $n_c$ is odd:
\[
 n_c =  a+n_U-b = (a-b)+n_U\equiv (a+b)+n_U\pmod 2,
\]
where the last $\equiv$ uses the fact that $2b \equiv 0 \pmod 2$. Further, since $(a,b)\in C$, $a+b\equiv n_T+1\pmod 2$, so
\[
 n_c\equiv (n_T+1)+n_U \equiv n+1 \equiv 1\pmod 2,
\]
where we use the fact that $n$ is even. Finally, we check that $m=a$ lies in the admissible range for $S^Y_{\theta_1(n_c)}$.
Because $n_c-n_U=a-b$, we have $\max(0,n_c-n_U)=\max(0,a-b)\le a=m$.
Also $m=a\le n_T$ and $m=a\le n_c$ (since $n_c=a+n_U-b\ge a$).
Therefore $(a,b)=(m, m+n_U-n_c)\in S^Y_{\theta_1(n_c)}\subseteq S_{\tilde{\Theta}_1}$.

Combining Steps 1--4 proves $S_{\tilde{\Theta}_0}=C=S_{\tilde{\Theta}_1}$.
\end{proof}

\prophasnopower*

\begin{proof}
Fix some $n$ that is even.
By \Cref{lem: there exist equiv priors on the null and alt}, there exist priors $\pi_A$ and $\pi_B$ such that $\pi_A(\theta \in \Theta_0) = 1$, $\pi_B(\theta \in \Theta_1) = 1$, and $\pi_A(Y =y) = \pi_B(Y =y)$ for all $y$. Observe that
$$E_{\theta \sim \pi_A}[ \delta(Y) ] = E_{\theta \sim \pi_A}[ \underbrace{E_\theta[ \delta(Y) ]}_{\text{$\leq \alpha$ since $\theta \in \Theta_0$}} ] \leq \alpha,$$
\noindent where the first equality uses the law of iterated expectations, and the second the assumption that $E_\theta[\delta(Y)] \leq \alpha$ for all $\theta \in \Theta_0$. However, since $\pi_A(Y =y) = \pi_B(Y =y)$ for all $y$, it follows that 
$$E_{\theta \sim \pi_B}[\delta(Y)] = E_{\theta \sim \pi_A}[\delta(Y)] \leq \alpha$$
and hence
$$E_{\theta \sim \pi_B}[\delta(Y)] = E_{\theta \sim \pi_B}[ E_\theta[\delta(Y)] ] \leq \alpha.$$
\noindent It follows that there exists some $\theta_1 \in \supp(\pi_B) \subseteq \Theta_1$ such that $E_{\theta}[\delta(Y)] \leq \alpha$, which gives the first desired result. 

For the second result, fix some odd $n$ and test $\delta$ such that $\sup_{\theta \in \Theta_0} E_{\theta}[\delta(Y)] \leq \alpha$. As in the proof to \Cref{lem: there exist equiv priors on the null and alt}, let $\tilde\pi_A$ and $\tilde\pi_B$ be the multinomial priors over $\theta$ with parameters $p=(0.5,0.5,0,0)$ and $p=(0,0,0.5,0.5)$, respectively. As argued in the proof to \Cref{lem: there exist equiv priors on the null and alt}, $\tilde{\pi}_A(Y{=}y) = \tilde{\pi}_B(Y{=}y)$ for all $y$. This implies that
\[E_{\theta \sim \tilde{\pi}_B}[ E_{\theta}[\delta(Y)] ] = E_{\theta \sim \tilde{\pi}_A}[ E_{\theta}[\delta(Y)] ] \leq \alpha, \numberthis \label{eqn:equalpowerunderpitildes} \]
where the inequality uses the fact that, by construction, $\tilde{\pi}_A(\theta \in \Theta_0) = 1$ and that $\delta$ controls size. Next, observe that $\tilde\pi_B(\theta \in \Theta_0) = \tilde\pi_B( \theta \in \{(0,0,n,0),(0,0,0,n)\} = 0.5^{n-1} =: \epsilon$. By iterated expectations, we then have
\[ E_{\theta \sim \tilde{\pi}_B}[ E_{\theta}[\delta(Y)] ] = \epsilon E_{\theta \sim \tilde{\pi}_B}[ E_{\theta}[\delta(Y)] \mid \theta \in \Theta_0 ] + (1-\epsilon) E_{\theta \sim \tilde{\pi}_B}[ E_{\theta}[\delta(Y)] \mid \theta \in \Theta_1 ]  \]
and hence 
\begin{align*}
   E_{\theta \sim \tilde{\pi}_B}[ E_{\theta}[\delta(Y)] \mid \theta \in \Theta_1 ] &= \frac{1}{1-\epsilon} E_{\theta \sim \tilde{\pi}_B}[ E_{\theta}[\delta(Y)] ] - \frac{\epsilon}{1-\epsilon} E_{\theta \sim \tilde{\pi}_B}[ E_{\theta}[\delta(Y)] \mid \theta \in \Theta_0 ] \\
   &\leq \frac{\alpha}{1-\epsilon} = \alpha \frac{1}{1 - 2^{-(n-1)}} = \alpha \left(1 + \frac{1}{2^{n-1}-1} \right)
\end{align*}
where the inequality uses \eqref{eqn:equalpowerunderpitildes} together with the fact that $\delta(Y) \geq 0$. The result then follows from the fact that $\inf_{\theta \in \Theta_1} E_{\theta}[\delta(Y)] \leq E_{\theta \sim \tilde{\pi}_B}[ E_{\theta}[\delta(Y)] \mid \theta \in \Theta_1 ]$.
 \end{proof}

\propnowap*
\begin{proof}
    Let $\pi(t; \theta_1)$ be the PMF for $\mathrm{Multinomial}(n, \theta_1/n)$. Then \[
        w_n(t;\theta_1) = \frac{\pi(t;\theta_1)}{\sum_{t\in\Theta_1} \pi(t;\theta_1)} \one
        (t\in\Theta_1).
    \]
    Let $q=\theta_1/n$ and observe that $g_{q}(y) = \sum_{t \in \Theta_0 \cup \Theta_1} f_t(y) \pi(t;\theta_1)$. By
    \Cref{prop:id}, there exists some $p\in \mathcal P_0$ such that $g_{q}(y) = g_p
    (y)$. We thus have the following expansion of WAP, \begin{align*}
    \mathrm{WAP}(\delta;\theta_1) &= \sum_{t \in \Theta_0 \cup \Theta_1} \sum_y \delta(y) f_t(y)\frac{\pi(t;\theta_1)}{\sum_{t\in\Theta_1} \pi(t;\theta_1)} \one
        (t\in\Theta_1) \\
        &= \underbrace{\pr{\sum_{t\in\Theta_1} \pi(t;\theta_1)}^{-1}}_{=: Z_1^{-1}}
        \sum_ {y} \delta (y) \bk{\sum_ {t \in
        \Theta_0
        \cup \Theta_1} f_t(y) \pi(t; \theta_1) - \sum_{t\in \Theta_0} f_t(y)\pi(t;\theta_1)}
        \\
        &= \frac{1}{Z_1} \bk{E_{g_{q}}[\delta(Y)] - \sum_{t\in\Theta_0} E_t[\delta(Y)] \pi
        (t;
        \theta_1)} \\
        &\le \frac{1}{Z_1} E_{g_p}[\delta(Y)] \le \frac{\alpha}{Z_1}. \tag{$\delta$ controls size}
    \end{align*}
    Note that \[
        Z_1 = 1-P_{V \sim \mathrm{Multinomial}(n, q)} (\min(V_d, V_c) = 0) = 1 -
        \bk{
            (1-q_d)^n + (1-q_c)^n - (1-q_d - q_c)^n
        }.
    \]
    It is easy to check that $Z_1$ is increasing in $q_c$ and $q_d$, thus it is lower
    bounded: \[
        Z_1 \ge 1-2(1-v)^n + (1-2v)^n,
    \]
    where recall $v = \tfrac{1}{n} \min(\theta_{1,d},\theta_{1,c}) = \min(q_d,q_c)$. This proves the first inequality. The second inequality follows from examining that
    $v \ge 1/n$ and that \[
    \frac{1}{1-2(1-1/n)^n + (1-2/n)^n} \le 2.51 \text{ for all $n > 3$. } \qedhere
    \]
\end{proof}

\propunbiasedexists*
\begin{proof}
Let $\delta_0(Y) = \alpha \cdot 1\{ (Y_T,Y_U) \not\in \{ (n_T,0), (0,n_U) \} \} $ be the test that rejects with probability $0$ if $Y$ is either $(n_T,0)$ or $(0,n_U)$, and rejects with probability $\alpha$ otherwise. Let $S^Y_\theta$ be the support of $Y$ under parameter $\theta$. Note that by construction, for any $\theta_0 = (n_{at},n_{nt}, n_d, n_c) \in \Theta_0$, we have that 
\begin{align*}
& E_{\theta_0 }[\delta_0(Y)]  < \alpha \text{ if } S^Y_{\theta_0} \cap  \{ (n_T,0), (0,n_U) \} \neq \emptyset, \\
& E_{\theta_0 }[\delta_0(Y)]  =  \alpha \text{ if } S^Y_{\theta_0} \cap  \{ (n_T,0), (0,n_U) \}  = \emptyset .
\end{align*}

Now, we claim that if $(n_T -1 , 1) \in S_{\theta_0}^Y$ for $\theta_0 \in \Theta_0$, then
\[
S^Y_{\theta_0} \cap  \{ (n_T,0), (0,n_U) \}  \neq \emptyset .
\]
To see this, suppose first that $\theta_0$ has $n_d = 0$. Then $y_i(1) \geq y_i(0)$ for all $i= 1,\ldots,n$. If there exists a treatment allocation $D$ such that $Y(D) = (n_T-1,1)$, then there exists one $i$ for whom $(Y_i,D_i) = (1,0)$. It follows that $y_i(0) = 1$ and hence $y_i(1)=1$. Likewise, there must be one $j$ for whom $(Y_j,D_j) = (0,1)$, which implies that $Y_j(1) = 0$ and hence $Y_j(0) = 0$. Letting $\tilde{D}$ be the treatment allocation that swaps the assignments of $i$ and $j$ and otherwise preserves the allocation of $D$, we see that
\[
Y(\tilde{D}) = Y(D) + (1,-1) = (n_T,0).
\]
We have thus shown that $(n_T,0) \in S^Y_{\theta_0}$.

Similarly, suppose that $\theta_0$ has $n_c = 0$. Then $y_i(1) \leq y_i(0)$ for $i=1,\ldots,n$. If there exists a treatment allocation $D$ such that $Y(D) = (n_T-1,1)$, then there exists a set $A \subset [n]$ of size $n_T-1$ such that all $i \in A$ have $(Y_i,D_i) = (1,1)$, which implies that $y_i(1) = 1$ and hence $y_i(0) = 1$. Likewise, there exists a set $B \subset [n]$ of size $n_U - 1 = n_T - 1$ such that for all $j \in B$, $(Y_j,D_j) = (0,0)$, which implies that $Y_j(0) = 0$ and hence $Y_j(1) = 0$. Letting $\tilde{D}$ be the treatment allocation that swaps the treatment assignments of units in $A$ and $B$, we see that
\[
Y(\tilde{D}) = Y(D) + (-(n_T-1),n_T-1) = (0,n_T) = (0,n_U),
\]
and hence $(0,n_U) \in S^Y_{\theta_0}$. This completes the proof that
\[
S^Y_{\theta_0} \cap  \{ (n_T,0), (0,n_U) \}  \neq \emptyset .
\]

Now, let $\delta_1(Y) = 1\{ (Y_T,Y_U) = (n_T-1,1)\}$ be the test that rejects if $Y$ is $(n_T-1,1)$. The argument above implies that for all $\theta_0 \in \Theta_0$,
\begin{align*}
& E_{\theta_0 }[\delta_0(Y)]  < \alpha \text{ if } E_{\theta_0}[\delta_1(Y)] > 0, \\
& E_{\theta_0 }[\delta_0(Y)] \leq  \alpha \text{ if } E_{\theta_0}[\delta_1(Y)] = 0 .
\end{align*}
Since $\Theta_0$ is finite, it follows that there exists $\epsilon >0$ such that, for
\[
\delta(Y) = \delta_0(Y) + \epsilon \delta_1(Y),
\]
we have $E_{\theta_0}[ \delta(Y) ] \leq \alpha$ for all $\theta_0 \in \Theta_0$. 

Next, we claim that $E_{\theta_1}[\delta(Y)] \geq \alpha$ for all $\theta_1 \in \Theta_1$. Since
\[
E_{\theta_1}[\delta(Y)] \geq E_{\theta_1}[ \delta_0(Y) ] 
= \alpha \, P_{\theta_1}\!\left( Y \not\in \{(n_T,0), (0,n_U) \} \right),
\]
it suffices to show that, for all $\theta_1 \in \Theta_1$,
\[
S_{\theta_1}^Y \cap \{(n_T,0), (0,n_U) \} = \emptyset,
\]
in which case $E_{\theta_1}[\delta_0(Y)] = \alpha$. To show this, we prove the contrapositive: if $S_{\theta}^Y$ contains $(n_T,0)$ or $(0,n_U)$, then $\theta \in \Theta_0$. Indeed, if there is a treatment allocation with $(Y_T,Y_U) = (n_T,0)$, then all $n_T$ treated units must be always-takers or compliers, and all $n_U$ control units must be never-takers or compliers, and thus there can be no defiers. Analogously, if there is a treatment allocation with $(Y_T,Y_U) = (0,n_U)$, then all treated units must be never-takers or defiers, and all control units must be always-takers or defiers, and thus there are no compliers.

Finally, we show that there exists $\theta_1$ such that $E_{\theta_1}[\delta(Y)] > \alpha$. Since we showed above that $E_{\theta_1}[\delta_0(Y)] = \alpha$ for all $\theta_1 \in \Theta_1$, it suffices to show that there exists some $\theta_1 \in \Theta_1$ such that $E_{\theta_1}[\delta_1(Y)] > 0$, or equivalently $P_{\theta_1}( Y = (n_T-1,1) ) > 0$. However, if $\theta_1$ corresponds to $(n_{at},n_{nt},n_d,n_c) = (0,0,2,n-2)$, then $Y = (n_T-1,1)$ obtains when one defier is assigned to treatment and the other to control.
\end{proof}

\propunbiased*
\begin{proof}
    The maximum power at $\theta$ for $\delta$ is defined by the following linear program:
    \begin{align*}
    E_\theta[\delta(Y)] \le \max_{\delta} \sum_{y} f_\theta(y) \delta(y) \text{ subject to } 
    &\sum_y f_t(y) \delta(y) \le \alpha \, \forall t \in \Theta_0, \\
&\sum_y f_t(y) \delta(y) \ge \alpha \, \forall t \in \Theta_1, \\
&\delta(y) \in [0,1].
    \end{align*}
    The dual program is \[
        \min_{\lambda(t)\ge 0, \mu(t)\ge 0} \alpha\bk{\sum_{t\in\Theta_0} \lambda(t) - \sum_
        {t\in\Theta_1} \mu(t)} + \sum_y \bk{
            f_{\theta}(y) - \sum_{t \in \Theta_0} \lambda(t) f_t(y) + \sum_{t\in \Theta_1}
            \mu(t) f_t(y)
        }_+.
    \]
    Thus, any dual feasible $\lambda, \mu$ implies an upper bound for $E_\theta[\delta(Y)].$
    
    Let $p_1 = \theta /n \in \mathcal P_1$ and let $\pi(t; p_1)$ be the PMF for $
    \mathrm{Multinomial}(n,p_1)$. Note that $\pi(\theta; p_1) > 0$ and define $c =
    \frac{1}{\pi(\theta; p_1)}$. Define, over all of $\Theta_0 \cup \Theta_1$, \[
        \mu(t) = \begin{cases}
            c \pi(t;p_1), &t \neq \theta \\
0 & t = \theta           
        \end{cases}.
    \]
    Then \[
       \sum_{t\in \Theta_1} \mu(t) f_t(y) + f_\theta(y) \le \sum_{t\in
       \Theta_1\cup\Theta_0} \mu(t) f_t(y) + f_\theta(y) = c g_{p_1}(y) = c g_
        {p_0}(y)
    \]
    for some $p_0 \in \mathcal P_0$ by \Cref{prop:id}. Define \[
        \lambda(t) = c\pi(t; p_0) \implies \sum_{t \in \Theta_0} \lambda(t) f_t(y) = c g_
        {p_0}(y).
    \]
    With this choice of $\lambda, \mu$, \[
        \bk{
            f_{\theta}(y) - \sum_{t \in \Theta_0} \lambda(t) f_t(y) + \sum_{t\in \Theta_1}
            \mu(t) f_t(y)
        }_+ = 0.
    \]

    Now, \[
        \sum_{t \in \Theta_0} \lambda(t) - \sum_{t\in \Theta_1} \mu(t) = c - \bk{c - 1 - c
        \sum_{t\in \Theta_0} \pi(t; p_1) } = 1 + \frac{\pi(\Theta_0; p_1)}{\pi(\theta; p_1)}.
    \]
    Thus \[
        E_\theta[\delta(Y)]  \le \alpha \pr{1 + \frac{\pi(\Theta_0; p_1)}{\pi(\theta; p_1)}}.
    \]
    By the argument in the proof of \Cref{prop:nowap}, \[
        \pi(\Theta_0; p_1) \le 2(1-v)^n - (1-2v)^n \le 2(1-\epsilon)^n.
    \]
    Meanwhile, \begin{align*}
    \pi(\theta; p_1) = \frac{n!}{\theta_{at}!\theta_{nt}!\theta_d!\theta_{c}!} p_{1,at}^
    {\theta_{at}} p_{1,nt}^{\theta_{nt}} p_{1,d}^{\theta_{d}} p_{1,c}^{\theta_{c}}
    \end{align*}
    Let $ m \in \br{2,3,4}$ be the number of entries in $\theta$ that are positive.
    Stirling's formula implies the bounds \citep{robbins1955remark}: \[
      \sqrt{2\pi} k^{k+1/2} e^{-k} \le k! \le e k^{k+1/2} e^{-k}
    \]
    for all integer $k \ge 1$. Plug in these bounds with $p_1 = \theta/n$ to obtain \[
          \pi(\theta; p_1) \ge \frac{\sqrt{2\pi}}{e^m} n^{-(m-1)/2} \prod_{j: \theta_j >
          0} p_{1j}^{-1/2} \ge m^{m/2} e^{-m} \sqrt{2\pi} n^{-(m-1)/2} \ge 0.64 n^{-1.5}
    \]
where the second inequality uses the AM-GM inequality\[
    \prod_{j: \theta_j >
          0} p_{1j} \le \pr{\frac{1}{m} \sum_j p_j}^m = m^{-m}.
\]
Therefore, \[
    E_\theta [\delta(Y)] \le \alpha(1+3.125 n^{1.5}(1-\epsilon)^n) \to \alpha
\]
as $n \to \infty$.
\end{proof}

\bayescombined*

\begin{proof}
    Let $\theta_0 \in \Theta_0$ and $\theta_1 \in \Theta_1$ be two parameter values. By
    \Cref{prop:id}, $f_{\theta_0} \neq f_{\theta_1}$. Let $\pi = 0.5 \delta_{\theta_0} +
    0.5 \delta_{\theta_1}$ be such that $\pi(\Theta_0) = 0.5 \in (0,1)$. The posterior
    probability is such that \[
        \pi(\Theta_0 \mid Y=y) = \pi(\theta_0 \mid Y=y) = 1-\pi(\theta_1 \mid Y=y) = \frac{f_{\theta_0}(y)}{f_
        {\theta_1}(y) + f_{\theta_0}(y)}.
    \]
    Since $f_{\theta_0} \neq f_{\theta_1}$ for some $y$ in the support of one of $f_
    {\theta_0}(\cdot)$ and $f_{\theta_1}(\cdot)$, $\pi(\theta_0 \mid Y=y) \neq 0.5 = \pi
    (\theta_0)$ with positive probability. This concludes the proof of the first part.
    
    For the second part, by \Cref{lem: there exist equiv priors on the null and alt}, there exists priors $\pi_A$ and $\pi_B$ such that $\pi_A(\theta \in \Theta_0) = 1$, $\pi_B(\theta \in \Theta_1)$, and $\pi_A(Y =y) = \pi_B(Y =y)$ for all $y$. Consider the prior $\pi_C = c \pi_A + (1-c) \pi_B$ which mixes with probabilities $c$ and $1-c$ between $\pi_A$ and $\pi_C$. By Bayes' rule, we have that 
$$\pi_C(\theta \in \Theta_0 \mid Y) = \dfrac{\pi_C(Y \mid \theta \in \Theta_0) \cdot  \pi_C(\theta \in \Theta_0)}{\pi_C(Y)} . $$
\noindent However, since $\pi_A(\theta \in \Theta_0) = 1$ and $\pi_B(\theta \in \Theta_0) = 0$, we have that $\pi_C(Y \mid \theta \in \Theta_0) = \pi_A(Y)$. Further $\pi_C(Y) = c \pi_A(Y) + (1-c) \pi_B(Y) = \pi_A(Y)$, where the second equality uses the fact that $\pi_A(Y) = \pi_B(Y)$. It follows that $\pi_C(Y \mid \theta \in \Theta_0) = \pi_A(Y) = \pi_C(Y)$. The previous display thus reduces to $\pi_C(\theta \in \Theta_0 \mid Y) = \pi_C(\theta \in \Theta_0) =c$, which gives the second result.

\end{proof}

\trivialclassification*
\begin{proof}
Observe that for any prior $\pi$ over $\theta$,
\begin{align*}
 \inf_{c(\cdot)} \sup_{\theta} R(c,\theta)
 &\geq \inf_{c(\cdot)} E_{\theta \sim \pi}[ R(c,\theta)] .
\end{align*}
For a (possibly randomized) classifier $c(\cdot)$, it follows that:
\begin{align*}
E_{\theta \sim \pi}[R(c,\theta)] & = E_{\theta \sim \pi}[ E_{Y \sim \pi_{Y \mid \theta}}[c(Y)\ell(1,\theta) + (1-c(Y) \ell(0,\theta)] ] \\
&= E_{Y \sim \pi_Y}[ E_{\theta \sim \pi_{\theta \mid Y}}[c(Y)\ell(1,\theta) + (1-c(Y) \ell(0,\theta)] ] \\
&= E_{Y \sim \pi_Y}\Big[
(1-\pi(\theta \in \Theta_0\mid Y))\,c(Y) + \pi(\theta \in \Theta_0\mid Y)\,(1-c(Y))
\Big],
\end{align*}
where for clarity in the previous display, we use e.g. $\pi_Y$ to denote the distribution of $Y$ induced by the prior $\pi$, and we make explicit the measures over which random variables are taken. 

By \Cref{prop:bayescombined} part 2, there exists a prior $\pi$ such that
$\pi(\theta \in \Theta_0 \mid Y) = \pi(\theta \in \Theta_0) = 0.5$ ($\pi$-a.s.), in which case we have
\begin{align*}
E_{\theta\sim\pi}[R(c,\theta)]
&= E_{Y \sim \pi_Y}\Big[ \tfrac12 c(Y) + \tfrac12 (1-c(Y))\Big]
= \tfrac12 .
\end{align*}
Hence for every classifier $c(\cdot)$, its Bayes risk under $\pi$ equals $1/2$, so in particular
\[
\inf_{c(\cdot)} E_{\theta\sim\pi}[R(c,\theta)] = \tfrac12,
\]
and therefore
\[
\inf_{c(\cdot)} \sup_{\theta} R(c,\theta) \ge \tfrac12.
\]
On the other hand, the trivial randomized classifier, $c(y)= 1/2$ for all $y$, has
\[
R(\theta,c)=\tfrac12 \quad \text{for all }\theta,
\]
and thus
\[
\inf_{c(\cdot)} \sup_{\theta} R(c,\theta) \le \tfrac12.
\]
Combining the two inequalities yields $\inf_{c(\cdot)} \sup_{\theta} R(c,\theta) = 1/2$.
\end{proof}

\begin{restatable}{prop}{bayesupdateodd}
\label{prop:bayesupdateodd}
For any $c \in (0,1)$, and all odd $n$ such that $0.5^{n-1} \leq c$, there exists a prior $\pi$ such that $\pi(\theta \in
\Theta_0) = c$ and
\[ E_{Y \sim \pi_Y}\left[ \left| \pi(\theta \in \Theta_0 \mid Y) - \pi(\theta \in \Theta_0)   \right| \right] = 2^{-(n-2)} (1-c) , \]
where $\pi_Y$ is the push-forward of $Y$ under $\theta \sim \pi$.
\end{restatable}

\begin{proof}
As in the proof to \Cref{lem: there exist equiv priors on the null and alt}, let
$\tilde\pi_A$ and $\tilde\pi_B$ be the multinomial priors over $\theta$ with parameters
$p=(0.5,0.5,0,0)$ and $p=(0,0,0.5,0.5)$, respectively. As argued in the proof to \Cref
{lem: there exist equiv priors on the null and alt}, $\tilde{\pi}_A(Y{=}y) = \tilde{\pi}_B
(Y{=}y)$ for all $y$. Observe that $\tilde{\pi}_A(\theta \in \Theta_0) = 1$ and hence
$\tilde\pi_A( \theta \in \Theta_0 \mid Y) = 1$. Next, observe that $\tilde\pi_B(\theta \in
\Theta_0) = \tilde\pi_B( \theta \in \{(0,0,n,0),(0,0,0,n)\} = 0.5^{n-1} =: \epsilon$. By assumption, $\epsilon \leq c$. Let $C := \{(n_T,0), (0,n_U)\}$
denote the corners of the support of $Y$, and observe that under $\tilde{\pi}_B$, $Y \in
C$ if and only if $\theta \in \Theta_0$: that is, $\tilde\pi_B( Y \in C \mid \theta \in
\Theta_0) = 1$, $\tilde\pi_B(Y \in C \mid \theta \in \Theta_1) = 0$.  It follows that
$\tilde\pi_B(Y \in C) = \epsilon$, and by Bayes' rule, $\tilde\pi_B(\theta \in \Theta_0
\mid Y) = 1\{ Y \in C \}$.

Now, let $\pi(\theta) = \omega \tilde\pi_A(\theta) + (1-\omega) \tilde\pi_B(\theta)$ be the mixture of $\tilde\pi_A$ and $\tilde\pi_B$, with the mixture weight $\omega := \tfrac{c-\epsilon}{1-\epsilon}$ chosen so that $\pi(\theta \in \Theta_0) = c$. Observe that
\begin{align*}
   \pi(\theta \in \Theta_0 \mid Y) &= \frac{ \pi(\theta \in \Theta_0, Y) }{ \pi(Y) } \\ 
   &= \frac{ \omega \tilde\pi_A(\theta \in \Theta_0 \mid Y) \tilde\pi_A(Y) + (1-\omega) \tilde\pi_B(\theta \in \Theta_0 \mid Y) \tilde\pi_B(Y)  }{ \pi(Y) }  \\
   &= \omega \tilde \pi_A(\theta \in \Theta_0 \mid Y) + (1-\omega) \tilde\pi_B(\theta \in \Theta_0 \mid Y) 
\end{align*}
where the first line uses the definition of conditional probability, the second line uses the definition of $\pi$, and the third line uses the fact that $\tilde\pi_A(Y) = \tilde\pi_B(Y) = \pi(Y)$. It follows from the previous display and our earlier derivations of the posterior probabilities that 
\[ \pi(\theta \in \Theta_0 \mid Y) = \omega + (1-\omega) 1\{Y \in C\}. \]
We thus see that
\begin{align*}
 E_{Y \sim \pi_Y}\left[ \left|  \pi(\theta \in \Theta_0 \mid Y) - \pi(\theta \in \Theta_0)   \right| \right] &= \pi(Y \in C) (1-c) + (1- \pi(Y\in C)) (c- \omega ) \\
 &= \epsilon \left(1 - c\right) + (1- \epsilon) (c - \omega) = 2 \epsilon (1-c).
\end{align*}
\noindent The result then follows from the definition of $\epsilon = 2^{-(n-1)}$.
\end{proof}

\begin{restatable}{cor}{classificationodd}
\label{cor:classification-odd}
Suppose $n \geq 3$ is odd. Then $\inf_{c(\cdot)} \sup_{\theta} R(c,\theta) \geq 0.5 - 2^{-(n-1)}$.
\end{restatable}

\begin{proof}
As argued in the proof to \Cref{cor:trivialclassification}, for any prior $\pi$,  $\inf_{c(\cdot)} \sup_{\theta} R(c,\theta) \geq \inf_{c(\cdot)} E_{\theta \sim \pi}[ R(c,\theta)]$ and 
\[E_{\theta \sim \pi}[R(c,\theta)] = E_{Y \sim \pi_Y}\Big[
(1-\pi(\theta \in \Theta_0\mid Y))\,c(Y) + \pi(\theta \in \Theta_0\mid Y)\,(1-c(Y))\Big]. \numberthis \label{eqn:risk-expression-odd-n} \]
From \Cref{prop:bayesupdateodd}, there exists $\pi$ such that $\pi(\theta \in \Theta_0) = 0.5$ and $E_{Y \sim \pi_Y}[ | \pi(\theta \in \Theta_0 \mid Y) - \pi(\theta \in \Theta_0) | ] = 2^{-(n-1)}$. Adding and subtracting $E_{Y \sim \pi_Y}[(1-\pi(\Theta_0)) c(Y) + \pi(\Theta_0) (1-c(Y))] = 0.5 $ from \eqref{eqn:risk-expression-odd-n}, we see that
\begin{align*}
E_{\theta \sim \pi}[R(c,\theta)]  &= 0.5 + E_{Y \sim \pi_Y}\Big[
(\pi(\Theta_0\mid Y) - \pi(\Theta_0) ) (1-2 c(Y)) \Big] \\
& \geq 0.5 - E_{Y \sim \pi_Y}[ | \pi(\Theta_0\mid Y) - \pi(\Theta_0) | \cdot |1-2c(Y)| ] \\
& \geq 0.5 - E_{Y \sim \pi_Y}[ | \pi(\Theta_0\mid Y) - \pi(\Theta_0) | ] = 0.5 - 2^{-(n-1)}
\end{align*}
where the second line uses the triangle inequality, and the third uses the fact that $c(Y) \in [0,1]$. This completes the proof.
\end{proof}

\estimation*

\begin{proof}

    We prove both parts by constructing a prior $\pi$ such that the Bayes estimators for $\theta$ and $v$ under $\pi$ have average risk bounded from below as in the statement of the proposition.

    We first construct a distribution $\pi^*$ that implies a prior $\pi$ that mixes over $\tilde{\pi}_A$ and $\tilde{\pi}_B$ from \Cref{lem: there exist equiv priors on the null and alt}.
    As in the proof of \Cref{lem: there exist equiv priors on the null and alt}, let $\tilde{\pi}_A$ and $\tilde{\pi}_B$ be the multinomial priors over $\theta$ with parameters $p = (0.5,0.5,0,0)$ and $p = (0,0,0.5,0.5)$, respectively.
    Let $U \sim \text{Uniform}\{A,B\}$ select between the two priors uniformly at random, and write $\pi^*$ for the implied distribution over $(U,\theta,Y)$ induced by $\theta \mid U \sim \tilde{\pi}_U$ and $Y \mid \theta \sim f_\theta$.
    Then the marginal distribution $\pi$ over $\theta$ implied by $\pi^*$ is the equal mixture $\pi(\theta) = .5 \: \tilde{\pi}_A(\theta) + .5 \: \tilde{\pi}_B(\theta)$ of $\tilde{\pi}_A$ and $\tilde{\pi}_B$ as in the proof of \Cref{prop:bayesupdateodd} with mixing weight $\omega = .5$.

    We next consider Bayes estimators under $\pi$.
    The Bayes estimators for $\theta$ and $v$ that minimize
    \begin{align*} 
        &E_{\theta \sim \pi}[ E_\theta [\| \hat{\theta} - \theta \|^2]]
        &
        &\text{ and }
        &
        &E_{\theta \sim \pi}[ E_\theta [| \hat{v} - v |^2 ]]
    \end{align*}
    are given by
    \begin{align*}
        \hat{\theta}_\pi(Y) &= E_{(U,\theta,Y) \sim \pi^*}[\theta \mid Y]
        &
        &\text{ and }
        &
        \hat{v}_\pi(Y) &= E_{(U,\theta,Y) \sim \pi^*}[v \mid Y].
    \end{align*}
    Since the prior $\tilde{\pi}_A$ is supported on always-takers and never-takers only, and the prior $\tilde{\pi}_B$ is supported on compliers and defiers only, 
    \begin{align*}
        E_{(U,\theta,Y) \sim \pi^*}[\theta \mid Y, U]
        &=
        \begin{cases}
            (Y_T + Y_U, n - Y_T - Y_U, 0, 0)
            , & U = A, \\
            (0,0,n_T - Y_T + Y_U, Y_T + n_U - Y_U)
            , & U = B,
        \end{cases}
        \\
        E_{(U,\theta,Y) \sim \pi^*}[v \mid Y, U]
        &=
        \begin{cases}
           0
            , & U = A, \\
            \tfrac{1}{n} \min(Y_T + n_U - Y_U, n_T - Y_T + Y_U)
            , & U = B.
        \end{cases}
    \end{align*}
    In particular,
    observing the data $Y$ and
    knowing which distribution $U$ the data comes from allows us to perfectly recover $\theta$ and thus also $v$,
    so
    \begin{align}
    \label{eqn:conditional-target}
        E_{(U,\theta,Y) \sim \pi^*}[\theta \mid Y, U]
        &= \theta,
        &
        E_{(U,\theta,Y) \sim \pi^*}[v \mid Y, U]
        &= v
    \end{align}
    almost surely under $\pi^*$.
    At the same time,
    the distribution of $Y$ is the same under $\tilde{\pi}_A$ and $\tilde{\pi}_B$ by the proof of \Cref{lem: there exist equiv priors on the null and alt}, so observing $Y$ does not update about $U$,
    \begin{align*}
        P_{(U,\theta,Y) \sim \pi^*}(U = A \mid Y)
        = P_{(U,\theta,Y) \sim \pi^*}(U = A)
        = 0.5.
    \end{align*}
    As a consequence,
    \begin{equation}
    \label{eqn:conditional-estimator}
        \begin{aligned}
        \hat{\theta}_\pi(Y) &= E_{(U,\theta,Y) \sim \pi^*}[E_{(U,\theta,Y) \sim \pi^*}[\theta \mid Y, U] \mid Y]
        \\
        &= .5 \: E_{(U,\theta,Y) \sim \pi^*}[\theta \mid Y, U = A] + .5 \: E_{(U,\theta,Y) \sim \pi^*}[\theta \mid Y, U = B]
        \\
        &= .5 \cdot (Y_T + Y_U, n - Y_T - Y_U, n_T - Y_T + Y_U, Y_T + n_U - Y_U),
        \\
        \hat{v}_\pi(Y) &= E_{(U,\theta,Y) \sim \pi^*}[E_{(U,\theta,Y) \sim \pi^*}[v \mid Y, U] \mid Y]
        \\
        &= .5 \: E_{(U,\theta,Y) \sim \pi^*}[v \mid Y, U = A] + .5 \: E_{(U,\theta,Y) \sim \pi^*}[v \mid Y, U = B]
        \\
        &= \frac{1}{2 n} \min(Y_T + n_U - Y_U, n_T - Y_T + Y_U).
    \end{aligned}
    \end{equation}

    Having solved for the Bayes estimators, we can compute average risk.
    From \eqref{eqn:conditional-target} and \eqref{eqn:conditional-estimator}, we have that
    \begin{align*}
        \hat{\theta}_\pi(Y) - \theta
        &=  \frac{(-1)^{1\{U=B\}}}{2} (-(Y_T + Y_U), -(n - Y_T - Y_U), +(n_T - Y_T + Y_U), +(Y_T + n_U - Y_U)),
        \\
        \hat{v}_\pi(Y) - v
        &= \frac{(-1)^{1\{U=B\}}}{2n}  \min(Y_T + n_U - Y_U, n_T - Y_T + Y_U)
    \end{align*}
    and thus
    \begin{align*}
        \| \hat{\theta}_\pi(Y) - \theta \|^2
        &= \frac{1}{4} \left( (Y_T + Y_U)^2 + (n - Y_T - Y_U)^2 + (n_T - Y_T + Y_U)^2 + (Y_T + n_U - Y_U)^2 \right)
        \\
        | \hat{v}_\pi(Y) - v |^2
        &= \frac{1}{4 n^2}  \min(Y_T + n_U - Y_U, n_T - Y_T + Y_U)^2.
    \end{align*}
    By symmetry under $\pi$, it follows for the average risk under $\pi$ that
    \begin{align*}
        E_{\theta \sim \pi}[ E_\theta [\| \hat{\theta}_\pi - \theta \|^2]]
        &= E_{Y \sim \pi_Y}[(Y_T + Y_U)^2]
        = E[Z^2],
        \\
        E_{\theta \sim \pi} [E_\theta [| \hat{v}_\pi - v |^2]]
        &= \frac{1}{4 n^2} E_{Y \sim \pi_Y}[\min(Y_T {+} n_U {-} Y_U, n_T {-} Y_T {+} Y_U)^2]
        = \frac{1}{4 n^2} E[\min(Z, n {-} Z)^2]
    \end{align*}
    for $Z \sim \text{Binomial}(n,0.5)$.

    For $\theta$, we obtain the average risk
    \begin{align*}
        E_{\theta \sim \pi} \bigg[E_\theta \bigg[\bigg\| \tfrac{1}{n} (\hat{\theta}_\pi - \theta) \bigg\|^2\bigg]\bigg]
        = \frac{1}{n^2} E[Z^2]
        = \frac{1}{4} + \frac{1}{4 n},
    \end{align*}
    which yields the second part of the proposition (where we use that minimax risk is bounded from below by maximal Bayes risk).

    For $v$, using the fact that for $a,b \geq 0$, $\min(a,b) = 0.5 (|a+b| - |a-b|)$, we have that    
    \begin{align*}
        &E[\min(Z, n - Z)^2]
        =
        \frac{1}{4}
        E \left[\left(|Z + (n - Z)| - |Z - (n - Z)| \right)^2 \right]
        \\
        &= n^2 / 4  - n E [|Z - E[Z]|] + E[(Z - E[Z])^2]
        =
        n^2 / 4 + n / 4 - n E |Z - E[Z]|.
    \end{align*}
    As shown in \citet{berend2013sharp} (using a formula from De Moivre), for
    $V \sim \text{Binomial}(n,q)$,
    \begin{align*}
        E\big[\left|V-E[V]\right|\big]
        &=2(1-q)^{\,n-\lfloor nq\rfloor}\,q^{\lfloor nq\rfloor+1}\,(\lfloor nq\rfloor+1)
        \binom{n}{\lfloor nq\rfloor+1}
        \;\le\; \sqrt{nq(1-q)}.
    \end{align*}
    Hence, taking $q=0.5$,
    \begin{align*}
        E\big[\bigl|Z - E[Z]\bigr|\big]
        &=
        \frac{\lfloor n/2\rfloor+1}{2^{n}} \binom{n}{\lfloor n/2\rfloor+1}
        \le
        \frac{\sqrt{n}}{2}.
    \end{align*}
    For the estimation of $v$, we obtain the Bayes risk (and thus lower bound on minimax risk)
    \begin{align*}
        E_{\theta \sim \pi}\big[ E_\theta \big[ | \hat{v}_\pi - v |^2 \big]\big]
        &=
        \frac{1}{16}
        + \frac{1}{16 n}
        - \frac{\lfloor n/2\rfloor+1}{4 n 2^{n}} \binom{n}{\lfloor n/2\rfloor+1}
        \\
        &\geq
        \frac{1}{16} \left(1 - 2 n^{-1/2} + n^{-1}\right)
        = \frac{1}{16} \left(1 - \frac{1}{\sqrt{n}}\right)^2.
    \end{align*}
    The equality in the first line of the display implies that for $n\ge 30$, minimax MSE is at least $74\%$ that of the
    trivial estimator that always guesses $\tfrac{1}{4}$, and for $n\ge 100$, it is at least $85\%$.
\end{proof}

\mleconvergencesimpler*

\begin{proof}
    The conditions imply those of \Cref{prop:mlefullconvergence}, so
    \[
        \tfrac{1}{n}\hat{\theta}^{\text{MLE}} \xrightarrow{p} p^* = \argmax_{p' \in \mathcal{P}(q_T,q_U)}\, \max(p'_{at}, p'_{nt}, p'_d, p'_c),
    \]
    with $\min(p^*_{at}, p^*_{nt}, p^*_d, p^*_c) = 0$,
    and it remains to identify $p^*$.
    Writing $v = p'_d$, the elements of $\mathcal{P}(q_T,q_U)$ are
    \[
        p'(v) = (q_U - v,\; 1 - q_T - v,\; v,\; q_T - q_U + v),
        \qquad
        v \in [0, v_{max}],
    \]
    with endpoints $p'(0) = p^L$ and $p'(v_{max}) = p^U$.
    As the maximum of a strictly decreasing and a strictly increasing affine function of $v$, $\max(p'_{at}(v), p'_{nt}(v), p'_d(v), p'_c(v))$ is uniquely maximized over $[0,v_{max}]$ at an endpoint,
    with values $m^L = \max(q_U,\, 1 - q_T,\, q_T - q_U)$ for $v = 0$ (in which case $p^* = p^L$) or $m^U = \max(\abs{q_T + q_U - 1},\, \min(q_T,\, 1 - q_U))$ for $v = v_{max}$ (in which case $p^* = p^U$).

    It remains to compare $m^L$ and $m^U$.
    If $q_T, q_U < \tfrac{1}{2}$, then $q_T + q_U < 1$, so $m^U = \max(1 - q_T - q_U,\, q_T)$, where both arguments are smaller than $1 - q_T \leq m^L$ (by $q_U > 0$ and $q_T < \tfrac{1}{2}$, respectively); hence $m^U < m^L$ and $p^* = p^L$.
    If $q_T, q_U > \tfrac{1}{2}$, then $q_T + q_U > 1$, so $m^U = \max(q_T + q_U - 1,\, 1 - q_U)$, where both arguments are smaller than $q_U \leq m^L$ (by $q_T < 1$ and $q_U > \tfrac{1}{2}$, respectively); hence again $m^U < m^L$ and $p^* = p^L$.
    If instead $q_U < \tfrac{1}{2} < q_T$, then $\min(q_T,\, 1 - q_U) > \tfrac{1}{2}$, while all three arguments of $m^L$ are smaller: $q_U, 1 - q_T < \tfrac{1}{2}$, and $q_T - q_U < \min(q_T,\, 1 - q_U)$ by $q_U > 0$ and $q_T < 1$; hence $m^L < m^U$ and $p^* = p^U$.
\end{proof}

\mleviolation*

\begin{proof}
    By \Cref{prop:mleconvergencesimpler} and continuous mapping,
    \[
        \hat{v}^{\text{MLE}} = \min\bigl(\tfrac{1}{n}\hat{\theta}^{\text{MLE}}_d,\, \tfrac{1}{n}\hat{\theta}^{\text{MLE}}_c\bigr)
        \xrightarrow{p}
        \min(p^*_d, p^*_c).
    \]
    If $q_T$ and $q_U$ lie on the same side of $\tfrac{1}{2}$, then $p^* = p^L$, and $\min(p^L_d, p^L_c) = \min(0,\, q_T - q_U) = 0$. If instead $q_U < \tfrac{1}{2} < q_T$, then $p^* = p^U$, and
    \[
        \min(p^U_d, p^U_c) = \min(v_{max},\, q_T - q_U + v_{max}) = v_{max} > 0,
    \]
    where we use $q_T \geq q_U$, and positivity follows from $0 < q_U, q_T < 1$.
\end{proof}

\newpage

\section{Supplemental appendix (limit experiment)}

In this section, we consider a normal limit experiment for the completely randomized experiment from \Cref{sec:setup} and then leverage the resulting approximation to derive asymptotic properties of the maximum-likelihood estimator $\hat{\theta}$.

We first formalize an appropriate limit experiment. 
Fix $\eta > 0$ and restrict the parameter space to those type counts for which at least three of the four types each make up at least a fraction $\eta$ of the population,
\[
    \Theta_{3,n}(\eta) = \br{\theta \in \Theta : \theta_j \geq \eta n \text{ for at least three of the four types } j \in \br{at, nt, d, c}}.
\]
For $\theta \in \Theta_{3,n}(\eta)$, let $\varphi_\theta$ denote the density of the bivariate normal distribution $\mathcal{N}\pr{\mu_\theta,\, V_\theta}$, where $V_\theta := \tfrac{n-1}{n} \, \Sigma_\theta$ with $\mu_\theta$ and $\Sigma_\theta$ as in \eqref{eqn:Normalapprox}. Let
\[
    \gamma_\theta(y) = \frac{\varphi_\theta(y)}{\sum_{z \in \mathbb{Z}^2} \varphi_\theta(z)}, \qquad y \in \mathbb{Z}^2,
\]
denote the probability mass function of its discretization to the integer lattice. We compare the experiment that observes a single draw of $(Y_T, Y_U)$ from the exact randomization distribution to the experiment that observes a single draw from the discretized normal approximation, both defined on the sample space $\mathbb{Z}^2$ with common parameter space $\Theta_{3,n}(\eta)$:
\[
    \mathcal{E}_n = \br{f_\theta : \theta \in \Theta_{3,n}(\eta)},
    \qquad
    \mathcal{G}_n = \br{\gamma_\theta : \theta \in \Theta_{3,n}(\eta)}.
\]

Having defined an appropriate limit experiment, we can now state the sense in which it approximates the experiment from \Cref{sec:setup} in large samples:

\begin{prop}
\label{prop:lecam}
\emph{(Asymptotic Le Cam equivalence)}
Fix $\lambda \in (0, \tfrac{1}{2})$ and $\eta > 0$, and consider $n \rightarrow \infty$ with $\tfrac{n_T}{n} \in [\lambda, 1 - \lambda]$. Then
\(
    \Delta(\mathcal{E}_n, \mathcal{G}_n) \rightarrow 0.
\)
\end{prop}

Here, for two experiments $\mathcal{E} = \br{P_\theta : \theta \in \Theta'}$ and $\mathcal{F} = \br{Q_\theta : \theta \in \Theta'}$ with common parameter space $\Theta'$, we define the Le Cam deficiency and distance as, respectively,
\[
    \delta(\mathcal{E}, \mathcal{F}) = \inf_K \sup_{\theta \in \Theta'} \norm{K P_\theta - Q_\theta}_{\mathrm{TV}},
    \qquad
    \Delta(\mathcal{E}, \mathcal{F}) = \max\pr{\delta(\mathcal{E}, \mathcal{F}), \delta(\mathcal{F}, \mathcal{E})},
\]
where the infimum is taken over Markov kernels $K$ between the two sample spaces.

Based on this asymptotic equivalence, we derive the large-sample probability limit of the maximum-likelihood estimator:

\begin{prop}
\label{prop:mlefullconvergence}
\emph{(MLE convergence)}
Consider a sequence of type counts $\theta_n$ with $\tfrac{1}{n} \theta_n \rightarrow p \in [0,1]^4$ along the sequence. Assume that $n_T/n \in [\lambda, 1-\lambda]$ for some fixed $\lambda \in (0, 1/2)$. 
Let $q_T = p_{at} + p_c, q_U = p_{at} + p_d$ and assume that $q_T,q_U \in (0,1) \setminus \{\tfrac{1}{2}\}$ with $q_T \neq q_U$ and $q_T + q_U \neq 1$.
Then
\begin{align*}
    \tfrac{1}{n}\hat{\theta}^{\text{MLE}} &\xrightarrow{p} p^* = p^*(q_T,q_U),
    &
    \min(p^*_{at}, p^*_{nt}, p^*_d, p^*_c)
    &= 0
\end{align*}
as $n \rightarrow \infty$,
where $p^*(q_T,q_U)$ only depends on $p$ through $q_T,q_U$ and has exactly three non-zero elements (so that the proportion of one type is zero).
Furthermore,
\begin{align*}
p^*(q_T,q_U) = \argmax_{p' \in \mathcal{P}(q_T,q_U)}\, \max(p'_{at}, p'_{nt}, p'_d, p'_c)
\end{align*}
where
\(
\mathcal{P}(q_T,q_U) = \{p' \in [0,1]^4 : p'_{at} {+} p'_{nt} {+} p'_d {+} p'_c {=} 1,\; p'_{at} {+} p'_c {=} q_T,\; p'_{at} {+} p'_d {=} q_U\}
\)
is the set of type fractions consistent with the marginal fractions $q_T,q_U$. 
\end{prop}

The proofs of these results are based on a uniform central limit theorem for Bernoulli sums.
Below, we state the main local central limit theorem (\Cref{prop:lattice}) in \Cref{sec:localclt}, as well as multiple lemmas leveraged in its proof in the following subsections. We then put all these pieces together in \Cref{sec:lecamconv,sec:mleconv} to prove \Cref{prop:lecam,prop:mlefullconvergence}.

\subsection{Local CLT for Bernoulli sums}
\label{sec:localclt}

\begin{prop}[Uniform local estimate for Bernoulli direction sums]\label{prop:lattice}
Fix \(\lambda\in(0,1/2)\) and \(\eta>0\).  Let
\(B_{n1},\ldots,B_{nn}\) be independent Bernoulli random variables with common
success probability \(p_n\in[\lambda,1-\lambda]\), and let
\[v_{n1},\ldots,v_{nn}\in\mathcal V = \br{(1,1,1), (1,0,0), (1,0,1), (1,1,0)}\]
be deterministic. Define
\[
X_{ni}=B_{ni}v_{ni},\qquad
S_n=\sum_{i=1}^nX_{ni},\qquad
\Lambda_n(h)=\log E\exp\{h'S_n\},
\]
and
\[
\mu_n=E[S_n]=\nabla\Lambda_n(0),\qquad
\Gamma_n = \Var(S_n) =\nabla^2\Lambda_n(0).
\]
Assume that at least three of the four counts
\[
N_n(v)=\#\{i:v_{ni}=v\},\qquad v\in\mathcal V,
\]
are at least \(\eta n\).  Then there is a deterministic sequence
\(\varepsilon_n=\varepsilon_n(\lambda,\eta)\to0\) such that
\[
P(S_n=z)
\le
\frac{1+\varepsilon_n}{(2\pi)^{3/2}\sqrt{\det\Gamma_n}}
\qquad\text{for every }z\in\mathbb Z^3.
\]
Moreover, for every fixed \(\delta\in(0,1/6)\), after possibly replacing
\(\varepsilon_n\) by a sequence depending on \((\lambda,\eta,\delta)\),
\[
P(S_n=z)
=
\frac{1+R_{n,z}}{(2\pi)^{3/2}\sqrt{\det\Gamma_n}}
\exp\left\{-\frac12(z-\mu_n)'\Gamma_n^{-1}(z-\mu_n)\right\} \numberthis
\label{eq:localexpansion}
\]
uniformly over all \(z\) satisfying
\[
P(S_n=z) > 0 \text{ and } \|z-\mu_n\|\le n^{1/2+\delta},
\]
with \(\sup_z |R_{n,z}|\le\varepsilon_n\).
\end{prop}

\begin{proof}
For $h\in \mathbb R$, define
\[
K_n(h,u)
=
\exp\{-\Lambda_n(h)-iu'\nabla\Lambda_n(h)\}
E\exp\{(h+iu)'S_n\}.
\]
Also write
\[
\pi_{ni}(h):=\frac{p_ne^{h'v_{ni}}}{1-p_n+p_ne^{h'v_{ni}}},
\qquad
\Gamma_n(h):=\nabla^2\Lambda_n(h).
\]
Fourier inversion applied to the Fourier transform of the weighted mass function
\(s\mapsto e^{h's}P(S_n=s)\) gives
\begin{align*}
P(S_n=z)
&=
e^{-h'z}(2\pi)^{-3}
\int_{[-\pi,\pi]^3}
e^{-iu'z}E\exp\{(h+iu)'S_n\}\,du \numberthis \label{eq:weighted-inversion} \\
&=
e^{\Lambda_n(h)-h'z}(2\pi)^{-3}
\int_{[-\pi,\pi]^3}
e^{-iu'\{z-\nabla\Lambda_n(h)\}}K_n(h,u)\,du
\end{align*}
for every real $h$.

We first prove \eqref{eq:localexpansion}.  Fix a support point \(z\) satisfying
\(\|z-\mu_n\|\le n^{1/2+\delta}\).  By \Cref{lem:saddlepoint}, there exists some $C > 0$ and
\(h=h(z)\) such that
\(
\nabla\Lambda_n(h)=z, \text{ with  } \|h\|\le Cn^{-1/2+\delta}.
\)
Plugging in this choice of \(h\), we can simplify:
\[
P(S_n=z)
=
e^{\Lambda_n(h)-h'z}(2\pi)^{-3}
\int_{[-\pi,\pi]^3}K_n(h,u)\,du .
\]

Fixing this $h$, we next evaluate \(\Lambda_n(h)-h'z\).  By \Cref{lem:saddlepoint},
uniformly
over the present class of arrays and support points, we have a Taylor approximation
$\Lambda_n(h) - h'z \approx -\frac{1}{2}(z-\mu_n)'\Gamma_n^{-1} (z-\mu_n)$:
\(
\left|
\Lambda_n(h)-h'z
+\frac12(z-\mu_n)'\Gamma_n^{-1}(z-\mu_n)
\right|
\le Cn\|h\|^3.
\)
Since \(\|h\|\le Cn^{-1/2+\delta}, \delta < 1/6\), we thus have
\[
\exp\pr{\Lambda_n(h) - h'z} = (1+o(1))\exp\pr{-\frac{1}{2} (z-\mu_n)' \Gamma_n^{-1}
(z-\mu_n)}. \numberthis
\label{eq:expansion_Lambda}
\]

Next, we consider $\int K_n(h,u)\,du$.  \Cref{lem:K-small-frequency} gives the
approximation \[
	K_n(h,u) = \exp\pr{-\frac{1}{2} u'\Gamma_n(h) u + O(n\norm{u}^3)}, \numberthis
	\label{eq:K-approx}
\]
for small $u$ and $h$,
which motivates the following decomposition into $\norm{u} \le \rho_0$ and $\norm{u} \ge
\rho_0$:
\begin{align*}
\int_{[-\pi, \pi]^3} K_n(h,u)du
& = \mathrm{I}_n(h)+\mathrm{II}_n(h),\\
\mathrm{I}_n(h)
&:=\int_{\{\norm{u} > \rho_0\}\cap[-\pi,\pi]^3}  K_n(h,u)du,\\
\mathrm{II}_n(h)
&:=\int_{\{\norm{u} \le \rho_0\}\cap[-\pi,\pi]^3}  K_n(h,u)du.
\end{align*}

Let us first handle term $\mathrm{II}_n(h)$ using \eqref{eq:K-approx}. Set $R_n = n^{1/12}$.
By changing variables to
\(v := \Gamma_n(h)^{1/2} u\),
we have
\begin{align*}
\mathrm{II}_n(h)&=\frac{1}{\sqrt{\det\Gamma_n(h)}}\{\mathrm{III}_n(h)+\mathrm{IV}_n(h)\},\\
\mathrm{III}_n(h)&:=\int_{\{\|v\|\le R_n\}}K_n(h,\Gamma_n(h)^{-1/2}v)\,dv \\
\mathrm{IV}_n(h)&:=\int_{A_{n,h}\cap\{\|v\|>R_n\}}
K_n(h,\Gamma_n(h)^{-1/2}v)\,dv,
\end{align*}
for $A_{n,h}:=\{v:\|\Gamma_n(h)^{-1/2}v\|\le \rho_0\}$ and sufficiently large $n$,
where we assume without loss of generality that $\rho_0 \leq \pi$.
In terms of $u$, since $\Gamma_n
= \Theta(n)$, we are truncating on whether $\norm{u} = O(n^{-1/2} R_n) = O(n^{-5/12})$.

For \(\mathrm{III}_n(h)\), since \(\norm{u} = \|\Gamma_n(h)^{-1/2}v\|\le Cn^{-1/2}R_n\) on
\(\|v\|\le R_n\),
\Cref{lem:K-small-frequency} gives
\[
K_n(h,\Gamma_n(h)^{-1/2}v)
=e^{-\|v\|^2/2}\exp\{\widetilde \xi_n(h,v)\}, \quad \tilde \xi_n(h, v) = \xi_n(h, \Gamma_n(h)^
{-1/2} v)
\]
for sufficiently large $n$,
where
$
|\widetilde \xi_n(h,v)|\le Cn^{-1/2}R_n^3=Cn^{-1/4}.
$
Therefore
\[
\mathrm{III}_n(h)
= (1+O(n^{-1/4})) \int_{\|v\|\le R_n}e^{-\|v\|^2/2}\,dv
=(2\pi)^{3/2}+O(n^{-1/4})+O(e^{-cR_n^2}).
\]

For \(\mathrm{IV}_n(h)\),  \Cref{lem:K-small-frequency} gives \(
\log |K_n(h,u)| = \Re \Log K_n(h,u) \le -\frac{1}{2} u' \Gamma_n(h) u + Cn\norm{u}^3.
\)
For small enough choice of $\rho_0$ and large $n$, we have that $C n\norm{u}^3 \le \frac{1}{4} u'
\Gamma_n(h) u$, since the LHS is $O(n \norm{u}^3)$ and the RHS is $\Theta(n \norm{u}^2)$. Thus
for sufficiently small $\rho_0$ and large $n$, $|K_n(h,u)| \le e^{-\frac{1}{4}u' \Gamma_n(h) u}$. This gives, for large $n$,
\(
|\mathrm{IV}_n(h)|
\le
\int_{\|v\|>R_n}e^{-\frac{1}{4} \|v\|^2}\,dv
\le C e^{-c R_n^2}.
\)
We conclude that, upon verifying that $\det \Gamma_n(h) \approx \det \Gamma_n$ by \Cref{lem:saddlepoint}, \[
	\mathrm{II}_n(h) = \frac{(2\pi)^{3/2} (1+o(1))}{\sqrt{\det \Gamma_n(h)}} = \frac{(2\pi)^{3/2}
	(1+o(1))} {\sqrt{\det \Gamma_n}}.
\]

Turning to \(\mathrm{I}_n(h)\),
independence across $i$ allows
directly computing
\begin{align}
	|K_n(h, u)|^2
	&= \prod_{i=1}^n \abs{(1{-}\pi_{ni}(h)) {+} \pi_{ni}(h) e^{iu'v_{ni}}}^2
	\notag
	\\
	&=
	\prod_v\prod_{i: v_{ni}=v} \abs{1{-}2\pi_{ni}(h) (1{-}\pi_{ni}(h)) (1{-}\cos(u'v))}.
	\label{eq:K_modulus}
\end{align}
Since $\norm{h} \le r_0$ can be chosen to be sufficiently small,
we have $1/2 \ge 2\pi_
{ni} (h)
(1-\pi_{ni}(h)) \ge m > 0$ is bounded away from zero by some $ m = m_{r_0} > 0$. We claim
that at least one $v$ with at least $\eta n$ observations has $\cos(u'v) \le
1-\kappa_\rho$ for some constant $\kappa_\rho > 0$.
Upon verifying this claim in
\Cref{lem:frequent-separated}, we then have that
for this $v$, at least $\eta n$ terms in \eqref{eq:K_modulus} have \(
	[1-2\pi_{ni}(h) (1-\pi_{ni}(h)) (1-\cos(u'v))] < 1-m\kappa_\rho =: r_\rho^2 < 1.
\)
Consequently, \(
	|K_n(h, u)|^2 \le (r_\rho^2)^{\eta n} = e^{-2a_\rho n}
\)
for some $a_\rho > 0$. Finally, we can bound \(
	|\mathrm{I}_n(h)| \le (2\pi)^3 e^{-a_\rho n} = o\pr{
	\frac{1}{\sqrt{\det \Gamma_n}}}.
\)
Combining bounds for $\mathrm{I}_n(h), \mathrm{II}_n(h)$ shows that \[
	\int_{[-\pi, \pi]^3} K_n(h, u) \,du = (1+o(1)) (2\pi)^{3/2} \frac{1}{\sqrt{\det
	\Gamma_n}}.
\]
Further combining with \eqref{eq:expansion_Lambda} verifies the local expansion
\eqref{eq:localexpansion}.

\medskip
It remains to prove the global upper bound.  Taking \(h=0\) in
\eqref{eq:weighted-inversion},
\begin{align*}
P(S_n=z)
&=
(2\pi)^{-3}\int_{[-\pi,\pi]^3}
e^{-iu'(z-\mu_n)}K_n(0,u)\,du,
\\
\implies
\sup_zP(S_n=z)
&\le
(2\pi)^{-3}\int_{[-\pi,\pi]^3}|K_n(0,u)|\,du.
\end{align*}
The same split into \(\|u\|>\rho_0\) and \(\|u\|\le\rho_0\), now with \(h=0\),
gives
\[
	(2\pi)^{-3}\int_{[-\pi,\pi]^3}|K_n(0,u)|\,du
	\le
	\frac{1+o(1)}{(2\pi)^{3/2}\sqrt{\det\Gamma_n}}.
    \qedhere
\]
\end{proof}

\begin{lem}[Saddlepoint and uniform Taylor bounds]\label{lem:saddlepoint}
In the proof of \Cref{prop:lattice}, there is a constant $C$ depending only on $(\lambda,\eta)$ such that, if $\norm{z-\mu_n} \le n^{1/2+\delta}$, there exists
$h = h(z)$ with $\nabla \Lambda_n (h) = z$
and $\norm{h} \le C n^{-1/2 + \delta}$ where \[
	\abs[\bigg]{ \Lambda_n(h) - h'z + \frac{1}{2} (z-\mu_n)' \Gamma_n^{-1} (z-\mu_n)} \le
	C n
	\norm{h}^3.
\]
Moreover, \[
	\abs[\bigg]{\frac{\det \Gamma_n(h)}{\det \Gamma_n} - 1} \le C \norm{h}.
\]
\end{lem}

\begin{proof} It is easy to verify that (i) $\Gamma_n$ has smallest and largest
 eigenvalues within $[cn, Cn]$
 for some $c,C > 0$ that only depend on $(\lambda, \eta)$
 and (ii) the third derivatives of $\Lambda_n$ are uniformly
 bounded by $Cn$, where (i) uses the fact that at least three types of $v_{ni}$ occur at
 least $\eta n$ times, and the three directions form a basis of $\mathbb R^3$ and
 (ii) uses the fact that $\lambda \in (0,1/2)$ and $p_n \in [\lambda, 1-\lambda]$.

A second-order Taylor approximation of $\nabla \Lambda_n$
around $\nabla\Lambda_n
(0) = \mu_n$ shows
that \[
	\nabla \Lambda_n(h) = \mu_n + \Gamma_n h + R_n(h) \quad \norm{R_n(h)} \le Cn
	\norm{h}^2.
\]
Thus if $\nabla \Lambda_n(h) = z$ then $ \Gamma_n h + R_n(h) = z - \mu_n.$ The
 existence of $h$ then boils down to the existence of a fixed point of \( T
 (h) := \Gamma_n^{-1} (z - \mu_n - R_n(h)) \).
 Since $\norm{z- \mu_n} \le n^
 {1/2+\delta}$, we can verify that $\norm{T(h)} \le Cn^{-1/2 + \delta} + A^2 n^
 {-1+2\delta}$ for $\norm{h} \le A n^{-1/2+\delta}$.
 There exists some choices of $A$ so
 that we may apply Brouwer's fixed point theorem\footnote{That is, $T$ maps the ball
 $\br{\norm{h} \le A n^{-1/2+\delta}}$ into itself for sufficiently large $A$ and all
 sufficiently large $n$,
 and $T$ is continuous. } to show that $T (\cdot)$ has a fixed
 point with $\norm{h} \le C n^{-1/2 + \delta}$.

	Now, for this $h$, Taylor expansion of $\Lambda_n(t)$ around $t=0$ shows \[
		\Lambda_n(h) = \underbrace{\Lambda_n(0)}_{0} + (z - (z-\mu_n))'h + \frac12 h'
		\Gamma_n h +
		\xi_n (h) \quad |\xi_n
		(h) |  \le  Cn \norm{h}^3.
	\]
	Rearranging and using the fact that $\Gamma_n^{-1} (z-\mu_n) = h + \Gamma_n^{-1} R_n(h)$,
	we obtain \[
		\Lambda_n(h) - h'z = -\frac{1}{2} (z-\mu_n)'\Gamma_n^{-1} (z-\mu_n) +
		\underbrace{\xi_n (h) +
				\frac{1}{2} R_n(h)' \Gamma_n^{-1} R_n(h)}_{O(n\norm{h}^3)}.
	\]

	For the ``moreover'' part, note that \(
		\norm{\Gamma_n(h) - \Gamma_n}_{op} \le C n \norm{h}
	\).
	Thus, in view of (i), \[
		\frac{\det \Gamma_n(h)}{\det \Gamma_n} = \det(I + \Gamma_n^{-1}\pr{\Gamma_n(h) -
		\Gamma_n}) = 1 + \underbrace{O(\norm{\Gamma_n^{-1}\pr{\Gamma_n(h) -
				\Gamma_n}})}_{O(\norm{h})}.\qedhere
	\]
\end{proof}

\begin{lem}\label{lem:K-small-frequency}
In the proof of \Cref{prop:lattice}, there are constants \(\rho_0,C,b>0\), depending only on
\((\lambda,\eta)\), such that, uniformly over all arrays satisfying the
hypotheses, all \(\|h\|\le r_0\), and all \(\|u\|\le\rho_0\),
\[\Log K_n(h,u) = -\frac12u'\Gamma_n(h)u+\xi_n(h,u),
\qquad
|\xi_n(h,u)|\le Cn\|u\|^3.
\numberthis \label{eq:logKexpansion}
\]
\end{lem}

\begin{proof}
We can explicitly compute that
\[
K_n(h,u)
=
\prod_{i=1}^n
\left\{(1-\pi_{ni}(h))+\pi_{ni}(h)e^{iu'v_{ni}}\right\}
\exp\{-iu'\pi_{ni}(h)v_{ni}\}.
\]
	By Taylor approximation of $\Log((1-\pi) + \pi e^{i q})$ (where $\Log(\cdot)$ is the
	principal branch) around $q = 0$, we have that uniformly over $\pi \in [\lambda',
	1-\lambda']$
	\(
		\abs[\big]{
			\Log\br{(1-\pi) + \pi e^{iq}} - \pr{i\pi q - \frac{1}{2} \pi(1-\pi) q^2
		}}\le C |q|^3
	\)
	for all sufficiently small $|q|$.
	Applying this with $q = u'v_{ni}$ and summing over $i$, after some cancellations,
	yields the desired display
	\eqref{eq:logKexpansion}.
\end{proof}

\begin{lem}\label{lem:frequent-separated}
In the proof of \Cref{prop:lattice},  there exists \(\kappa_\rho>0\), depending only on \(\rho\), such
that the following holds.  Let \(F\subseteq\mathcal V\) contain at least three
directions.  Then, for every \(u\in[-\pi,\pi]^3\) with \(\|u\|\ge\rho\), there is
some \(v\in F\) such that \(\cos(u'v)\le 1-\kappa_\rho\).
\end{lem}

\begin{proof}
It is enough to prove that there is \(c_\rho>0\) such that, for some \(v\in F\),
\(
\dist(u'v,2\pi\mathbb Z)\ge c_\rho.
\)
Indeed, then \(1-\cos(u'v)\ge 1-\cos(c_\rho)=:\kappa_\rho>0\), after replacing
\(c_\rho\) by \(\min(c_\rho,\pi)\).

Suppose no such \(c_\rho\) exists.  Then there are sets \(F_m\subseteq\mathcal V\)
with \(|F_m|\ge3\) and vectors \(u_m\in[-\pi,\pi]^3\), \(\|u_m\|\ge\rho\), such
that
\[
\max_{v\in F_m}\dist(u_m'v,2\pi\mathbb Z)\to0. \numberthis \label{eq:limit}
\]
There are only finitely many subsets of \(\mathcal V\), so along a subsequence
we may take \(F_m=F_*\).  Choose three directions \(v_1,v_2,v_3\in F_*\), and
let \(V_*\) be the \(3\times3\) matrix with these directions as columns.  By
compactness of $[-\pi, \pi]^3$, along another subsequence \(u_m\to u_*\in[-\pi,\pi]^3\),
with
\(\|u_*\|\ge\rho\). \eqref{eq:limit} implies
\(
V_*'u_*\in 2\pi\mathbb Z^3.
\)
Observe that no matter how we choose three vectors from $\mathcal V$,
\((V_*')^{-1}\) has integer entries.  Hence
\(
u_*=(V_*')^{-1}V_*'u_*\in2\pi\mathbb Z^3.
\)
The only point of \(2\pi\mathbb Z^3\cap[-\pi,\pi]^3\) is \(0\), contradicting
\(\|u_*\|\ge\rho\).  This proves the claim.
\end{proof}

\subsection{Application to complete randomization}
\label{sec:application}

\begin{lem}[Complete-randomization likelihood]\label{lem:ftheta-ratio}
Fix \(\lambda\in(0,1/2)\), \(\eta>0\), and \(\delta\in(0,1/6)\), and suppose that
\(p_n:=n_T/n\in[\lambda,1-\lambda]\).  For type counts
\(\theta=(\theta_{at},\theta_{nt},\theta_d,\theta_c) \in \Theta\),
define
\[
s_T(\theta)=\frac{\theta_{at}+\theta_c}{n},
\qquad
s_U(\theta)=\frac{\theta_{at}+\theta_d}{n}.
\]
For \(y=(y_T,y_U)'\), define
\[
t_\theta(y)=
\begin{pmatrix}
y_T-n_Ts_T(\theta)\\
y_U-n_Us_U(\theta)
\end{pmatrix}
= y - \mu_\theta,
\]
\[
D_\theta
=s_T(\theta)(1-s_T(\theta))s_U(\theta)(1-s_U(\theta))
-\left(\frac{\theta_{at}}n-s_T(\theta)s_U(\theta)\right)^2,
\]
and recall \(V_\theta = \tfrac{n-1}{n}\Sigma_\theta\) from the definition of \(\varphi_\theta\), which can be written explicitly as
\[
V_\theta
=
\frac{n_Tn_U}{n}
\begin{pmatrix}
s_T(\theta)(1-s_T(\theta))
& -\left(\frac{\theta_{at}}n-s_T(\theta)s_U(\theta)\right)\\
-\left(\frac{\theta_{at}}n-s_T(\theta)s_U(\theta)\right)
& s_U(\theta)(1-s_U(\theta))
\end{pmatrix}.
\]
Then there is a deterministic sequence
\(\varepsilon_n=\varepsilon_n(\lambda,\eta,\delta)\to0\) such that, uniformly
over \(\theta\in\Theta_{3,n}(\eta)\) and integer pairs \(y\) satisfying
\[
f_\theta(y)>0,\qquad \|t_\theta(y)\|\le n^{1/2+\delta},
\]
we have
\[
f_\theta(y)
=
\frac{1+R_{n,\theta,y}}
{2\pi (n_Tn_U/n)\sqrt{D_\theta}}
\exp\left\{-\frac12t_\theta(y)'V_\theta^{-1}t_\theta(y)\right\},
\qquad
\sup_{\theta,y}|R_{n,\theta,y}|\le\varepsilon_n.
\]
Moreover, uniformly over all \(\theta\in\Theta_{3,n}(\eta)\) and all integer
pairs \(y\),
\[
f_\theta(y)
\le
\frac{1+\varepsilon_n}
{2\pi (n_Tn_U/n)\sqrt{D_\theta}}.
\]
\end{lem}

\begin{proof}
The setup embeds into \Cref{prop:lattice} by taking $X_{ni} = B_i (1,y_i(1),y_i(0))$ for independent Bernoulli($p_n$) variables $B_i$, and writing $N = \sum_i B_i$, $S_1 = \sum_i B_i y_i(1)$, and $S_0 = \sum_i B_i y_i(0)$.
Thus, take $Z = \sum_i X_{ni}$ and for
$
z=(n_T,\ y_T,\ ns_U(\theta)-y_U)',
$ we would like to compute
\begin{equation}\label{eq:ftheta-ratio}
f_\theta(y_T,y_U)
=
P(S_1=y_T,\ S_0=ns_U(\theta)-y_U\mid N=n_T)
=
\frac{P(Z=z)}{P(N=n_T)}.
\end{equation}

The numerator is now exactly in the setup of \Cref{prop:lattice}.
A direct computation gives
\(
(z-E Z)'\Gamma_\theta^{-1}(z-E Z)
=t_\theta(y)'V_\theta^{-1}t_\theta(y).
\)
Finally, Stirling's formula, uniformly for \(p_n\in[\lambda,1-\lambda]\), gives
\[
P(N=n_T)
=
\binom{n}{n_T}p_n^{n_T}(1-p_n)^{n_U}
=
\frac{1+o(1)}{\sqrt{2\pi np_n(1-p_n)}}.
\]
Dividing the local estimate for \(P(Z=z)\) by this last display in
\eqref{eq:ftheta-ratio} proves the asserted local expansion, because
\(\det V_\theta=(n_Tn_U/n)^2D_\theta\).  Dividing the global upper bound for
\(P(Z=z)\) by the same denominator proves the stated upper bound.
\end{proof}

\subsection{Uniform total variation approximation}
\label{sec:tv}

\begin{lem}
\label{lem:central-support}
Fix \(\lambda\in(0,1/2)\), \(\eta>0\), and \(\delta\in(0,1/2)\).  Assume
\(n_T/n\in[\lambda,1-\lambda]\).  For all sufficiently large \(n\), uniformly
over \(\theta\in\Theta_{3,n}(\eta)\), every integer pair \(y\) with
$
\|t_\theta(y)\|\le n^{1/2+\delta}
$
belongs to the support of \(f_\theta\).
\end{lem}

\begin{proof}
We show that every central lattice point can be realized by integer treated
counts in the four type classes.  Let
\[
v_{at}=(1,1,1)',\quad
v_{nt}=(1,0,0)',\quad
v_d=(1,0,1)',\quad
v_c=(1,1,0)'.
\]
For treated counts \(u_j\), the vector \((N,S_1,S_0)'\) equals
\(\sum_j u_jv_j\).  Fix three types whose counts are at least \(\eta n\); call
this set \(A\), and let \(\ell\) be the remaining type.  The vectors
\(\{v_j:j\in A\}\) form a unimodular basis of \(\mathbb Z^3\).

Choose an integer \(u_\ell\in[0,\theta_\ell]\) with
\(|u_\ell-p_n\theta_\ell|\le1\).  Given a central integer pair \(y\), set
\(
z=(n_T,\ y_T,\ ns_U(\theta)-y_U)'.
\)
Since the basis is unimodular, the linear system
\(
\sum_{j\in A}u_jv_j=z-u_\ell v_\ell
\)
has an integer solution.  Relative to the real solution
\((p_n\theta_j)_{j\in A}\), the right-hand side has changed by
\(
z-E Z-(u_\ell-p_n\theta_\ell)v_\ell,
\)
which has norm \(O(n^{1/2+\delta})\).  The inverse of the fixed unimodular basis
matrix has bounded entries, so
\[
u_j=p_n\theta_j+O(n^{1/2+\delta}),\qquad j\in A.
\]
For \(j\in A\), the margins from \(p_n\theta_j\) to the endpoints \(0\) and
\(\theta_j\) are at least \(\min(\lambda,1-\lambda)\eta n\).  Since
\(n^{1/2+\delta}=o(n)\), all these \(u_j\)'s lie in \([0,\theta_j]\) for large
\(n\).  Thus the treated counts are feasible, and \(f_\theta(y)>0\).
\end{proof}

\begin{lem}[Discrete Gaussian normalizer and tails]\label{lem:gaussian-normalizer}
Uniformly over \(\theta\in\Theta_{3,n}(\eta)\),
\(
\sum_{y\in\mathbb Z^2}\varphi_\theta(y)=1+o(1).
\)
Moreover, for every fixed \(\delta>0\),
\[
\sup_{\theta\in\Theta_{3,n}(\eta)}
\sum_{\substack{y\in\mathbb Z^2: \\ \|t_\theta(y)\|>n^{1/2+\delta}}}\gamma_\theta(y)\to0.
\]
\end{lem}

\begin{proof}
The eigenvalues of \(V_\theta\) are uniformly between constant multiples of
\(n\).  For the continuous Gaussian density \(\varphi_\theta\) with mean
\(\mu_\theta\) and covariance \(V_\theta\),
\(
\int_{\mathbb R^2}\varphi_\theta(x)\,dx=1.
\)
A standard bounded-variation Riemann-sum bound gives
\(
\left|
\sum_{y\in\mathbb Z^2}\varphi_\theta(y)
-\int_{\mathbb R^2}\varphi_\theta(x)\,dx
\right|
\le
C\int_{\mathbb R^2}\|\nabla\varphi_\theta(x)\|\,dx.
\)
The right-hand side is \(O(n^{-1/2})\) uniformly, because
\(
\int\|\nabla\varphi_\theta(x)\|\,dx
\le
\{\operatorname{tr}(V_\theta^{-1})\}^{1/2}
=O(n^{-1/2}).
\)
Hence \(\sum_{y\in\mathbb Z^2}\varphi_\theta(y)=1+o(1)\) uniformly.

For the tail bound, the same Riemann-sum comparison and the eigenvalue bounds
give
\(
\sum_{\|t_\theta(y)\|>n^{1/2+\delta}}\varphi_\theta(y)\to0
\)
uniformly over \(\theta\).  Dividing by
\(\sum_{y\in\mathbb Z^2}\varphi_\theta(y)=1+o(1)\) gives the same statement for
\(\gamma_\theta\).
\end{proof}

\begin{prop}[Uniform total variation bound]\label{prop:tv}
For every fixed \(\lambda\in(0,1/2)\) and \(\eta>0\), if
\(n_T/n\in[\lambda,1-\lambda]\), then
\(
\sup_{\theta\in\Theta_{3,n}(\eta)}
\|f_\theta-\gamma_\theta\|_{\mathrm{TV}}
\to0.
\)
\end{prop}

\begin{proof}
Fix any \(\delta\in(0,1/6)\), and let
\(
B_{n,\theta}
=
\{y\in\mathbb Z^2:\|t_\theta(y)\|\le n^{1/2+\delta}\}.
\)
By \Cref{lem:central-support}, every \(y\in B_{n,\theta}\) is in the support of
\(f_\theta\) for all large \(n\).  By \Cref{lem:ftheta-ratio},
\[
f_\theta(y)=(1+R_{n,\theta,y})\varphi_\theta(y),
\qquad
\sup_{\theta,y\in B_{n,\theta}}|R_{n,\theta,y}|\to0.
\]
Also \(\gamma_\theta(y)=\varphi_\theta(y)/\sum_{z\in\mathbb Z^2}\varphi_\theta(z)\),
and \(\sum_{z\in\mathbb Z^2}\varphi_\theta(z)=1+o(1)\) uniformly by
\Cref{lem:gaussian-normalizer}.  Therefore
\[
\sum_{y\in B_{n,\theta}}|f_\theta(y)-\gamma_\theta(y)|
\le
\sup_{y\in B_{n,\theta}}|R_{n,\theta,y}|
\sum_{y\in B_{n,\theta}}\varphi_\theta(y)
+
\left|1-\left(\sum_{z\in\mathbb Z^2}\varphi_\theta(z)\right)^{-1}\right|
\sum_{y\in B_{n,\theta}}\varphi_\theta(y)
=o(1)
\]
uniformly over \(\theta\).

It remains to control the tails.  For \(f_\theta\), the event
\(B_{n,\theta}^c\) is contained in the union of
\[
\left|Y_T-n_Ts_T(\theta)\right|>n^{1/2+\delta}/\sqrt2
\quad\text{and}\quad
\left|Y_U-n_Us_U(\theta)\right|>n^{1/2+\delta}/\sqrt2.
\]
Each term is a hypergeometric tail event, so Hoeffding's inequality for sampling
without replacement gives a bound of order \(e^{-c n^{2\delta}}\), uniformly in
\(\theta\).  Thus
\(
\sup_{\theta\in\Theta_{3,n}(\eta)}f_\theta(B_{n,\theta}^c)\to0.
\)
For \(\gamma_\theta\), the corresponding tail tends to zero by
\Cref{lem:gaussian-normalizer}.  Combining the central and tail bounds gives
the asserted total variation convergence.
\end{proof}

Note that the fixed \(\eta\) condition is essential.
The conclusion is uniform on \(\Theta_{3,n}(\eta)\) for fixed \(\eta>0\).  If
``at least three categories'' only means nonzero, with no fixed lower bound on
the third-largest cell count, the uniform claim generally fails: \(D_\theta\)
can approach zero, destroying the order-\(n\) nonsingularity and uniform
aperiodicity constants used above.

\subsection{Le Cam convergence}
\label{sec:lecamconv}

\begin{proof}[Proof of \Cref{prop:lecam}]
Since the experiments \(\mathcal E_n\) and \(\mathcal G_n\) have the same sample space, the identity Markov kernel
maps each observed lattice point to itself.  Therefore both deficiencies are
bounded by the same uniform total variation distance:
\[
\delta(\mathcal E_n,\mathcal G_n)
\le
\sup_{\theta\in\Theta_{3,n}(\eta)}
\|f_\theta-\gamma_\theta\|_{\mathrm{TV}},
\qquad
\delta(\mathcal G_n,\mathcal E_n)
\le
\sup_{\theta\in\Theta_{3,n}(\eta)}
\|f_\theta-\gamma_\theta\|_{\mathrm{TV}}.
\]
The right-hand side tends to zero by \Cref{prop:tv}.  Hence
\[
\Delta(\mathcal E_n,\mathcal G_n)
=
\max\pr{\delta(\mathcal E_n,\mathcal G_n),
\delta(\mathcal G_n,\mathcal E_n)}
\to0.
\qedhere
\]
\end{proof}

\subsection{MLE convergence}
\label{sec:mleconv}

We now use the local likelihood approximation above to prove \Cref{prop:mlefullconvergence}, the convergence of
the design-based MLE; throughout this subsection, write \(\hat\theta = \hat{\theta}^{\text{MLE}}\).
Let \(\lfloor r\rceil\) denote any nearest integer to \(r\).  Define
\[
x(\theta)=\frac{\theta_{at}}n,\qquad
D(s_T,s_U,x)
=s_T(1-s_T)s_U(1-s_U)-(x-s_Ts_U)^2.
\]
Also write
\[
L(s_T,s_U)=\max(0,s_T+s_U-1),
\qquad
U(s_T,s_U)=\min(s_T,s_U).
\]

Let the true finite populations satisfy
\[
\frac{\theta_n}{n}\to
p=(p_{at},p_{nt},p_d,p_c)\in[0,1]^4,
\qquad
\frac{n_T}{n}\in[\lambda,1-\lambda]
\]
for all large \(n\), where \(\lambda\in(0,1/2)\); the individual limiting shares may lie on the boundary, as the true \(\theta_n\) enters the argument below only through the convergence of its margins \(s_T(\theta_n)\) and \(s_U(\theta_n)\).  Define
$
q_T=p_{at}+p_c$ and $q_U=p_{at}+p_d,
$
and assume
\[
q_T,q_U\in(0,1)\setminus\{1/2\},\qquad
q_T\ne q_U,\qquad q_T+q_U\ne1. \tag{A}\label{ass:mle-A}
\]
Parametrize the line segment \(\mathcal{P}(q_T,q_U)\) from \Cref{prop:mlefullconvergence} by \(x=p'_{at}\):
$
p'(x)=
\left(x,\ 1-q_T-q_U+x,\ q_U-x,\ q_T-x\right),
$
with
$x \in [L^*, U^*]$
where
$
L^*=\max(0,q_T+q_U-1)$ and
$U^*=\min(q_T,q_U).$

\begin{proof}[Proof of \Cref{prop:mlefullconvergence}]
Without essential loss, assume that
$
D(q_T,q_U,L^*)<D(q_T,q_U,U^*)
$ and we will show $x(\hat\theta) \pto L^*$.
Under \eqref{ass:mle-A}, \(L^*<U^*\), \(D(q_T,q_U,x)>0\) on
\([L^*,U^*]\), and
$
D(q_T,q_U,L^*) \neq D(q_T,q_U,U^*)$, where the latter holds since \(x \mapsto D(q_T,q_U,x)\) is a downward parabola with vertex at \(x = q_Tq_U\), so that equal endpoint values would make \(q_Tq_U\) the midpoint of \([L^*,U^*]\) and force \(q_T = \tfrac{1}{2}\) or \(q_U = \tfrac{1}{2}\). Observe that with a direct computation the endpoint $
\br{L^*, U^*}$ with the smaller
value of $D(q_T, q_U, \cdot)$ is the unique maximizer of $x\mapsto \max_j p'_j(x)$ over
$x \in [L^*, U^*]$.
Since $L^*$ maximizes $\max_j p'_j(x)$ in this
case, showing $x(\hat\theta) \pto L^*$ proves the proposition.
The other case with $U^*$ is analogous.

Choose \(\zeta>0\) small enough that
$
L^*+\zeta<U^*-\zeta
$
and
\[
D(q_T,q_U,L^*+\zeta)
<
\inf_{x\in[L^*+2\zeta,U^*]}D(q_T,q_U,x). \numberthis \label{eq:bad-set-d-gap}
\]
This is possible because \(q_Tq_U\in(L^*,U^*)\), \(D\) is increasing to the left
of \(q_Tq_U\), and \(D(q_T,q_U,L^*)<D(q_T,q_U,U^*)\). For \(\alpha>0\), define the
following set
\[
\widetilde\Theta^{\mathrm{bad}}_n(\alpha,\zeta)
=
\left\{
\theta\in\Theta:
|s_T(\theta)-q_T|+|s_U(\theta)-q_U|\le\alpha,\
x(\theta)\in[L^*+2\zeta,\ U(s_T(\theta),s_U(\theta))]
\right\}.
\]
{Fix \(\alpha>0\) sufficiently small.  It is easy to check that there is
\(\eta>0\), independent of \(n\), such that
$
\widetilde\Theta^{\mathrm{bad}}_n(\alpha,\zeta)
\subseteq\Theta_{3,n}(\eta)
$
for all sufficiently large \(n\).

We claim that
\(
P\{\hat\theta\in\widetilde\Theta^{\mathrm{bad}}_n(\alpha,\zeta)\}\to0.
\)
Define the event
\[
\mathcal A_n
=
\left\{
\left|\frac{Y_T}{n_T}-s_T(\theta_n)\right|
+
\left|\frac{Y_U}{n_U}-s_U(\theta_n)\right|
\le n^{-1/4}
\right\}.
\]
Because \(n_T/n\in[\lambda,1-\lambda]\), Hoeffding's inequality for sampling
without replacement and a union bound give \(P(\mathcal A_n^c)=o(1)\).  Moreover,
uniformly on \(\mathcal A_n\),
\[
\left|\frac{Y_T}{n_T}-q_T\right|
+
\left|\frac{Y_U}{n_U}-q_U\right|
\le
n^{-1/4}
+|s_T(\theta_n)-q_T|
+|s_U(\theta_n)-q_U|
=o(1).
\]

For the observed \(Y=(Y_T,Y_U)\), define a competitor
\(\theta^L=\theta^L(Y)\) by
\[
\theta^L_{at}=\left\lfloor n(L^*+\zeta)\right\rceil,\qquad
\theta^L_{at}+\theta^L_c=\left\lfloor nY_T/n_T\right\rceil,\qquad
\theta^L_{at}+\theta^L_d=\left\lfloor nY_U/n_U\right\rceil,
\]
with \(\theta^L_{nt}=n-\theta^L_{at}-\theta^L_c-\theta^L_d\).  Componentwise,
\[
\frac{\theta^L}{n}
=
\left(
L^*+\zeta,\,
1-\frac{Y_T}{n_T}-\frac{Y_U}{n_U}+L^*+\zeta,\,
\frac{Y_U}{n_U}-L^*-\zeta,\,
\frac{Y_T}{n_T}-L^*-\zeta
\right)
+O(n^{-1}).
\]
Since \(L^*+\zeta\in(L^*,U^*)\), the vector on the right converges uniformly
on \(\mathcal A_n\) to \(p'(L^*+\zeta) = \left(
L^*+\zeta,\;
1-q_T-q_U+L^*+\zeta,\;
q_U-L^*-\zeta,\;
q_T-L^*-\zeta
\right)\), all four coordinates of which are
strictly positive.  Hence there is an \(\eta'>0\), independent of \(n\), such
that \(\theta^L\in\Theta_{3,n}(\eta')\) for all sufficiently large \(n\) on
\(\mathcal A_n\).  By construction, 
\[
\|t_{\theta^L}(Y_T,Y_U)\|=O(1)
\]
and, uniformly on \(\mathcal A_n\),
\[
\begin{aligned}
D_{\theta^L}
&=
D\left(\frac{Y_T}{n_T},\frac{Y_U}{n_U},L^*+\zeta\right)+O(n^{-1})=
D(q_T,q_U,L^*+\zeta)+o(1).
\end{aligned}
\]
Therefore \Cref{lem:central-support} applies.  In addition,
\(V_{\theta^L}^{-1}=O(n^{-1})\) uniformly, so the exponential term in
\Cref{lem:ftheta-ratio} is \(1+o(1)\).  It follows that, uniformly on
\(\mathcal A_n\),
\[
f_{\theta^L}(Y_T,Y_U)
=
\frac{1+o(1)}
{2\pi(n_Tn_U/n)\sqrt{D(q_T,q_U,L^*+\zeta)}}. \numberthis
\label{eq:lower-comp-unified}
\]}

On the other hand, by \eqref{eq:bad-set-d-gap} and continuity, after reducing
\(\alpha\) if necessary,\footnote{Let
\(
g_\zeta
=
\inf_{x\in[L^*+2\zeta,U^*]}D(q_T,q_U,x)
-D(q_T,q_U,L^*+\zeta)>0 .
\)
For \(\alpha>0\), set
\(
K_\alpha
=
\{(s_T,s_U,x): |s_T-q_T|+|s_U-q_U|\le\alpha,\
x\in[L^*+2\zeta,U(s_T,s_U)]\}.
\)
Then \(K_\alpha\downarrow K_0=\{(q_T,q_U,x):x\in[L^*+2\zeta,U^*]\}\) as
\(\alpha\downarrow0\).  Hence, by continuity of \(D\),
\(
\inf_{(s_T,s_U,x)\in K_\alpha}D(s_T,s_U,x)
\ge D(q_T,q_U,L^*+\zeta)+g_\zeta/2
\)
for all sufficiently small \(\alpha\).}
\(
\inf_{\theta\in\widetilde\Theta^{\mathrm{bad}}_n(\alpha,\zeta)}D_\theta
>
D(q_T,q_U,L^*+\zeta) + c_\zeta
\)
for some $c_\zeta > 0$ not dependent on $n$.  The global upper bound in
\Cref{lem:ftheta-ratio} therefore
implies
\[
\sup_{\theta\in\widetilde\Theta^{\mathrm{bad}}_n(\alpha,\zeta)}
f_\theta(Y_T,Y_U)
\le
\frac{1+o(1)}
{2\pi(n_Tn_U/n)
\sqrt{\inf_{\theta\in\widetilde\Theta^{\mathrm{bad}}_n(\alpha,\zeta)}D_\theta}}.
\]
Combining this display with \eqref{eq:lower-comp-unified},
\[
\sup_{\theta\in\widetilde\Theta^{\mathrm{bad}}_n(\alpha,\zeta)}
\frac{f_\theta(Y_T,Y_U)}{f_{\theta^L}(Y_T,Y_U)}
\le
(1+o(1))
\left\{
\frac{D(q_T,q_U,L^*+\zeta)}
{\inf_{\theta\in\widetilde\Theta^{\mathrm{bad}}_n(\alpha,\zeta)}D_\theta}
\right\}^{1/2}
<1
\]
{for all sufficiently large \(n\) on \(\mathcal A_n\).  Thus, for
all sufficiently large \(n\),
\[
\{\hat\theta\in\widetilde\Theta^{\mathrm{bad}}_n(\alpha,\zeta)\}
\cap\mathcal A_n
=\varnothing,
\]
because every parameter in $\widetilde\Theta^{\mathrm{bad}}_n(\alpha,\zeta)$ has strictly
smaller likelihood than
the feasible competitor \(\theta^L\).  Consequently,
\[
\begin{aligned}
P\{\hat\theta\in\widetilde\Theta^{\mathrm{bad}}_n(\alpha,\zeta)\}
&\le
P\bigl(
\{\hat\theta\in\widetilde\Theta^{\mathrm{bad}}_n(\alpha,\zeta)\}
\cap\mathcal A_n
\bigr)
+P(\mathcal A_n^c) =o(1).
\end{aligned}
\]

Finally, feasibility ensures
\(x(\hat\theta)\le U(s_T(\hat\theta),s_U(\hat\theta))\).  Hence
\[
\begin{aligned}
P\{x(\hat\theta)\ge L^*+2\zeta\}
&\le
P\{\hat\theta\in\widetilde\Theta^{\mathrm{bad}}_n(\alpha,\zeta)\} + P\bigl\{
|s_T(\hat\theta)-q_T|+|s_U(\hat\theta)-q_U|>\alpha
\bigr\}
\to0
\end{aligned}
\]
by \Cref{lem:mle-margin-consistency}.}
Feasibility gives \(x(\hat\theta)\ge L(s_T(\hat\theta),s_U(\hat\theta))\), and
\Cref{lem:mle-margin-consistency} implies
$L(s_T(\hat\theta),s_U(\hat\theta))\pto L^*.
$ Since $\zeta$ is arbitrary,
 \(x(\hat\theta)\pto L^*\).

Together with \Cref{lem:mle-margin-consistency}, this yields
\[
\tfrac{1}{n}\hat\theta
=
\pr{x(\hat\theta),\; 1 - s_T(\hat\theta) - s_U(\hat\theta) + x(\hat\theta),\; s_U(\hat\theta) - x(\hat\theta),\; s_T(\hat\theta) - x(\hat\theta)}
\pto
p'(L^*),
\]
which depends on \(p\) only through \((q_T,q_U)\).
Finally, under \eqref{ass:mle-A}, exactly one coordinate of the limit is zero: at \(L^*\), either \(p'_{at}(0)=0\) when \(q_T+q_U<1\) or \(p'_{nt}(q_T{+}q_U{-}1)=0\) when \(q_T+q_U>1\), with the remaining three coordinates strictly positive; analogously at \(U^*\), \(p'_d(U^*)=0\) when \(q_U<q_T\) and \(p'_c(U^*)=0\) when \(q_T<q_U\).
\end{proof}

\begin{lem}\label{lem:mle-margin-consistency}
Under \eqref{ass:mle-A},
$
s_T(\hat\theta)\pto q_T$ and $
s_U(\hat\theta)\pto q_U.
$
\end{lem}

\begin{proof}[Proof of \Cref{lem:mle-margin-consistency}]
Fix \(\varepsilon>0\) and choose \(\gamma<\varepsilon/2\).  Let
\(
\mathcal A_{n,\gamma}
=
\left\{
\left|\frac{Y_T}{n_T}-q_T\right|
+
\left|\frac{Y_U}{n_U}-q_U\right|
\le \gamma
\right\}.
\)
Hoeffding's inequality for sampling without replacement gives
\(P(\mathcal A_{n,\gamma})\to1\).

On \(\mathcal A_{n,\gamma}\), set
\[
M_T=\left\lfloor nY_T/n_T\right\rceil,\qquad
M_U=\left\lfloor nY_U/n_U\right\rceil
\]
and define
\[
\tilde\theta_{at}=\left\lfloor M_TM_U/n\right\rceil,\qquad
\tilde\theta_c=M_T-\tilde\theta_{at},\qquad
\tilde\theta_d=M_U-\tilde\theta_{at},
\qquad
\tilde\theta_{nt}=n-M_T-M_U+\tilde\theta_{at}.
\]
For small enough \(\gamma\), \(\tilde\theta\in\Theta_{3,n}(\eta)\) for some
\(\eta>0\), \(\|t_{\tilde\theta}(Y_T,Y_U)\|=O(1)\), and
\(D_{\tilde\theta}\) is bounded away from zero.  Hence
\Cref{lem:central-support} and \Cref{lem:ftheta-ratio} imply that, for some
\(C>0\),
\(
f_{\tilde\theta}(Y_T,Y_U)\ge Cn^{-1}
\)
for all large \(n\) on \(\mathcal A_{n,\gamma}\).  Therefore
\(f_{\hat\theta}(Y_T,Y_U)\ge Cn^{-1}\) on the same event.

If \(|s_T(\theta)-q_T|>\varepsilon\) and \(\mathcal A_{n,\gamma}\) holds, then
\(
\left|Y_T-n_Ts_T(\theta)\right|\ge n_T(\varepsilon-\gamma)\ge n_T\varepsilon/2.
\)
Thus Hoeffding's inequality gives, uniformly over such \(\theta\),
\[
f_\theta(Y_T,Y_U)
\le
P_\theta\left\{|Y_T-n_Ts_T(\theta)|\ge n_T\varepsilon/2\right\}
\le 2e^{-cn}.
\]
For large \(n\), \(2e^{-cn}<Cn^{-1}\), so no such \(\theta\) can maximize the
likelihood on \(\mathcal A_{n,\gamma}\).  The same argument applies to
\(s_U\), proving the claim.
\end{proof}

\end{document}